\DeclareMathOperator{\Id}{Id}
\DeclareMathOperator{\ud}{d}
\DeclareMathOperator{\SO}{SO}
\DeclarePairedDelimiter{\poisson}{\{}{\}}
\newcommand{\Alg}{\mathfrak{A}}
\numberwithin{equation}{section}
\newcommand{\Pj}{\mathbb{CP}}
\newcommand{\Sph}{\mathbb{S}}
\newcommand{\RR}{\mathbb{R}}
\newcommand{\ZZ}{\mathbb{Z}}
\newcommand{\N}{\mathbb{N}}
\newcommand{\Cp}{\mathbb{C}}
\newcommand{\dP}{\mathrm{d}P}
\newcommand{\Sl}{\mathfrak{sl}}
\newcommand{\vx}{\vec{x}}
\renewcommand{\vec}[1]{\mathbf{#1}}
\renewcommand{\epsilon}{\varepsilon}
\renewcommand{\imath}{\mathrm{i}}
\newcommand{\valpha}{\boldsymbol{\alpha}}
\newcommand{\vbeta}{\boldsymbol{\beta}}
\renewcommand{\pdv}[2]{\begingroup 
\@tempswafalse\toks@={}\count@=\z@ 
\@for\next:=#2\do 
{\expandafter\check@var\next\@nil
 \advance\count@\der@exp 
 \if@tempswa 
   \toks@=\expandafter{\the\toks@\,}%
 \else 
   \@tempswatrue 
 \fi 
 \toks@=\expandafter{\the\expandafter\toks@\expandafter\partial\der@var}}%
\frac{\partial\ifnum\count@=\@ne\else^{\number\count@}\fi#1}{\the\toks@}%
\endgroup} 
\def\check@var{\@ifstar{\mult@var}{\one@var}} 
\def\mult@var#1#2\@nil{\def\der@var{#2^{#1}}\def\der@exp{#1}} 
\def\one@var#1\@nil{\def\der@var{#1}\chardef\der@exp\@ne} 
\theoremstyle{plain}
\newtheorem{theorem}{Theorem}[section]
\newtheorem{proposition}[theorem]{Proposition}
\newtheorem{lemma}[theorem]{Lemma}
\theoremstyle{definition}
\newtheorem{definition}{Definition}[section]
\theoremstyle{remark}
\newtheorem{remark}{Remark}[section]
\theoremstyle{remark}
\newtheorem{example}{Example}
\newtheorem*{conjecture*}{Conjecture}
\begin{document}

\title{Coalgebra symmetry for discrete systems}


\author[G. Gubbiotti]{Giorgio Gubbiotti}
\address{Scuola Internazionale Superiore di Studi Avanzati, Via Bonomea 265, 34136 Trieste, Italy}
\curraddr{Dipartimento di Matematica ``Federigo Enriques'', Universit\`a degli Studi di Milano, Via C. Saldini 50, 20133 Milano, Italy}
\email{giorgio.gubbiotti@unimi.it}
\author[D. Latini]{Danilo Latini}
\address{School of Mathematics and Physics, The University of Queensland, Brisbane, QLD 4072, Australia}
\email{d.latini@uq.edu.au}
\author[B. K. Tapley]{Benjamin K. Tapley}
\address{Department of Mathematical Sciences, NTNU, 7491 Trondheim, Norway}
\email{benjamin.tapley@ntnu.no}
\subjclass[2020]{39A36; 16T15; 17B62; 17B63.}

\maketitle

\begin{abstract}
    In this paper we introduce the notion of coalgebra symmetry for
    discrete systems. With this concept we prove that all discrete
    radially symmetric systems in standard form are quasi-integrable
    and that all variational discrete quasi-radially symmetric systems
    in standard form are Poincar\'e--Lyapunov--Nekhoroshev maps of
    order $N-2$, where $N$ are the degrees of freedom of the system.
    We also discuss the integrability properties of several vector
    systems which are generalisations of well-known one degree of freedom
    discrete integrable systems, including two $N$ degrees of freedom
    autonomous discrete Painlev\'e I equations and an $N$ degrees of
    freedom McMillan map.
\end{abstract}

\section{Introduction}

In this paper we show how to use the coalgebra symmetry approach to build 
invariants for discrete systems. In particular, we consider systems of 
second order difference equations:
\begin{equation}
    \vx_{n+1} = \vec{F}\left( \vx_{n},\vx_{n-1}\right),
    \label{eq:map2ndord}
\end{equation}
where the unknown is a sequence of vectors
$\left\{\vec{x}_{n}\right\}_{n\in\ZZ}$ in $\RR^{N}$ with $N\in\N$.
Like in the continuous case the coalgebra symmetry approach allows us to
extend invariants from symplectic systems with one degree of freedom
to symplectic systems with $N$ degrees of freedom, see for instance the
review \cite{Ballesteros_et_al2009}.

Besides the general definition, we present some general results on the
integrability through the coalgebra symmetry approach of some classes of
$N$ degrees of freedom analog of the systems in \emph{standard form}
\cite{Suris1989}:
\begin{equation}
    x_{n+1} + x_{n-1} = F\left( x_{n} \right),
    \label{eq:mst1d}
\end{equation}
where $\left\{ x_{n} \right\}_{n\in\ZZ}$ is the unknown function.
Integrable examples include a generalisation of the celebrated McMillan
equation \cite{McMillan1971}, one of the first integrable discrete
systems ever discovered, and of the autonomous discrete Painleve I
equation. We underline that a two-degrees of freedom generalisation
of the McMillan equation presented already in \cite{McLachlan1993} by
a brute-force computation, and later understood in terms of discrete
Garnier systems \cite{Suris1994Garnier}. On the other hand, up to our
knowledge, the generalisations of the autonomous Painlev\'e I equation
we present are new.


The extension of the coalgebra approach to the discrete setting we present
is a result that paves the way to a systematic study of
the integrability conditions for discrete systems with more than
one degree of freedom.
Indeed, while almost all discrete systems with one degree of freedom
fall into the class of the aforementioned QRT maps,
with some notable exceptions
\cite{VialletRamaniGrammaticos2004,Duistermaat2011book, RJ15,CarsteaTakenawa2012},
there is no general analog description for systems with more degrees
of freedom.
Partial classification of systems in more than one degree of freedom
has been given in 
\cite{GJTV_class,CapelSahadevan2001,Gubbiotti_lagr_add,McLachlan1993},
usually with the additional condition that the system possesses
invariants of a given form, or additional structures such as symplectic
structures.

So, in this paper we present an efficient way to build discrete systems
with $N$ degrees of freedom from those with one degree of freedom,
which is something that is completely missing in the discrete case.
This gives a new evidence on how techniques developed in the continuum
setting can be extended proficiently to the discrete case. We also
show that the access to a very efficient integrability test such as the
algebraic entropy \cite{BellonViallet1999}, makes easier to predict the
behaviour of the obtained $N$-degrees of freedom systems.

The plan of the paper is the following: in Section \ref{sec:preliminaries}
we recall some general result on the integrability of discrete
systems.  In Section \ref{sss:coalgebra} we give the definition
of coalgebra symmetry for discrete Poisson maps. The great part of
the construction follows the ideas used in the continuous setting,
see \cite{Ballesteros_et_al2009}.  However, the definition we state
is slightly different from the one used in the continuous setting.
In Section \ref{sec:general} we present some general results on two
classes of discrete systems, namely the radially symmetric and the
quasi-radially symmetric, show their coalgebraic interpretation, and
present explicit examples of it. In Section \ref{sec:dPI} we discuss
two non-linear examples which generalise a well-known integrable system,
the autonomous discrete Painlev\'e I equation. In Section \ref{sec:dPII}
we discuss an $N$ degrees of freedom generalisation of the McMillan
map. We show that this generalisation is not integrable, but it is
only quasi-integrable, while a particular case is quasi-maximally
superintegrable.  Finally, in Section \ref{sec:concl} we give some
conclusions and an outlook on future researches and developments.

\section{Discrete systems and integrability}
\label{sec:preliminaries}

In this section we give a preliminary introduction on the concepts of
integrability for finite-dimensional discrete systems we are going to
use throughout the paper. Differently from finite-dimensional continuous
integrable systems whose history can be traced back to the times of
Liouville \cite{Liouville1855}, the integrability of their discrete
counterpart is a much more recent subject.  The history of such a subject
can be traced back from the seminal paper of McMillan \cite{McMillan1971},
where the eponymous map was introduced, and had a great propulsion after
the introduction of the celebrated QRT maps \cite{QRT1988,QRT1989}.
In particular, the theory of symplectic integrable maps was developed
in \cite{Veselov1991,Bruschietal1991,Maeda1987}.  For a complete
overview on the subject we refer to the mentioned papers, the review
\cite{HietarintaBook}, the book \cite{HietarintaJoshiNijhoff2016},
and the introductory material in the thesis \cite{TranPhDThesis}.

\subsection{Invariants and integrability}
\label{sss:integrability}

Before considering the vector difference equations in the form \eqref{eq:map2ndord}
we consider the general case:
let $\{\vec{z}_{n}\}\subset\RR^{M}$ be a sequence of vectors, solving the
\emph{first-order} vector difference equation
\begin{equation}
    \vec{z}_{n+1} = \vec{K}\left( \vec{z}_{n} \right),
    \label{eq:zeq}
\end{equation}
where $\vec{K}=\vec{K}(\vec{z})$ is a locally analytic function of
its arguments.
This is what we call an $M$-dimensional difference equation.
An \emph{invariant} for such an equation is a locally analytic function
$I=I(\vec{z})$ such that $I(\vec{z}_{n+1})=I(\vec{z}_{n})$.
In such general case we give the following definition:
\begin{definition}
    The first-order $M$-dimensional difference equation is al\-ge\-bra\-i\-cal\-ly 
    integrable if \emph{it admits $M-1$ functionally independent invariants}.
    \label{def:invint}
\end{definition}

In general it is quite difficult to find invariant for difference
equations. When the equation is rational and invertible with a
rational inverse there is an efficient algorithm we recall in Appendix
\ref{app:findinv}.

Definition \ref{def:invint} is very general, and works for arbitrary maps.
If some additional structure is present, the number of invariants needed
for integrability can be lowered. Before proceeding any further, we note
that any first order difference equation \eqref{eq:zeq} is equivalent
to iterate a map of the following form:
\begin{equation}
    T \colon \vec{z}\in\RR^{M} \mapsto \vec{K}(\vec{z})\in\RR^{M}.
    \label{eq:map}
\end{equation}
We call this form the \emph{map form} of a first-order difference equation.
Since the iteration of the map is equivalent to go from $n$ to $n+1$ we use
the shorthand notation $T\colon n\mapsto n+1$, to denote the map form of
a difference equation.

A special, but relevant case is the one of Poisson maps:
\begin{definition}
    Let us assume we are given an $M$-dimensional difference equation
    \eqref{eq:zeq} and its map form \eqref{eq:map}.
    Then:
    \begin{enumerate}[label=\textbf{\ref{def:poisson}.\arabic*.},leftmargin=*,widest=a]
        \item A \emph{Poisson structure of rank $2r$} is a skew-symmetric matrix 
            $J=J\left( \vec{z}_{n} \right)$ of constant rank $2r$ such that the 
            \emph{Jacobi identity holds}:
            \begin{equation}
                \sum_{l=1}^{n}\left(J_{li}\frac{\partial J_{jk}}{\partial z_{n,l-1}}
                    +J_{lj}\frac{\partial J_{ki}}{\partial z_{n,l-1}}
                +J_{lk}\frac{\partial J_{ij}}{\partial z_{n,l-1}}\right)=0,
                    \quad
                    \forall i,j,k.
                \label{eq:JacoIden}
            \end{equation}
        \item Given a Poisson structure $J=J(\vec{z}_{n})$ we define its
            associated \emph{Poisson bracket} as:
            \begin{equation}
            \label{eq:Poisson_def}
            \{f, g\}=\nabla f   J \left( \vec{z}_{n} \right)  \nabla g^{T},
            \end{equation}
            where $f$, $g$ are locally analytic functions on $\RR^{M}$, and
            $\nabla$ denotes the gradient operator.
        \item Two locally analytic functions $f$ and $g$ are said to be 
            \emph{in involution} with respect to a Poisson structure 
            $J=J\left( \vec{z}_{n} \right)$ if $\left\{ f,g \right\}=0$.
        \item An $M$-dimensional difference equation is a \emph{Poisson map} 
            if it preserves the Poisson structure $J\left( \vec{z}_{n} \right)$, i.e.:
            \begin{equation}
                \label{eq:Poisson1}
                \frac{\partial \vec{z}_{n+1}}{\partial\vec{z}_{n}}  
                J({\vec{z}_{n}})  
                \left(\frac{\partial \vec{z}_{n+1}}{\partial\vec{z}_{n}}\right)^{T}
                =J({\vec{z}_{n+1}}),
            \end{equation}
            where $\partial \vec{z}_{n+1}/\partial\vec{z}_{n}$
            is the  Jacobian matrix of $\vec{z}_{n+1}$. 
    \end{enumerate} 
    \label{def:poisson}
\end{definition}

\begin{remark}
    We can easily see that $\{z_{n,i-1},z_{n,j-1}\}=J_{ij}(\vec{z}_{n})$.
    Since this completely specifies the Poisson structure, it is
    usual to give it in terms of the commutation relations of the
    dependent variables $\vec{z}_{n}$.
    \label{rem:Jij}
\end{remark}

Then we have the following characterisation of integrability for
Poisson maps:
\begin{definition}[\cite{Veselov1991,Bruschietal1991,Maeda1987}]
    An $M$-dimensional Poisson map $T\colon n \mapsto n+1$ with respect
    to a Poisson structure of rank $2r$ is \emph{Liou\-vil\-le--Poisson
    integrable} if it possesses $M-r$ functionally independent invariants
    in involution with respect to the associated Poisson bracket.
    \label{def:lpint}
\end{definition}

\begin{remark}
    The rank of a Poisson structure is such that
    $1\leq r \leq \lfloor M/2\rfloor$.
    So we can distinguish two extremal cases:
    \begin{itemize}
        \item Minimal rank $r=1$: the number of invariants needed for 
            Liouville--Poisson integrability is maximal, that is $M-1$.
            In this case Liouville--Poisson integrability is equivalent 
            to algebraic integrability, see definition \ref{def:invint}.
        \item Maximal rank $r=\lfloor M/2\rfloor$: the number of invariants
            needed for Liouville--Poisson integrability is minimal, that is 
            $M-\lfloor M/2\rfloor$.
    \end{itemize}
    A relevant case is when the rank is maximal and 
    the dimension is even, that is $M=2r$.
    In such case the Poisson structure is invertible; defining
    $\Omega =J^{-1}$ 
    we call such a matrix a \emph{ symplectic structure}, and a
    map preserving it is a \emph{symplectic map}.
    In the full rank case the number of invariants needed is exactly
    half of the dimension of the base space $M/2$.
    In such a case the number $N:=M/2$ is called the number of 
    \emph{degrees of freedom} of the system.
    \label{rem:symplectic}
\end{remark}

The previous remark highlight a very special case of Liouville--Poisson
integrability, which we state as a separate definition:
\begin{definition}[\cite{Veselov1991,Bruschietal1991,Maeda1987}]
    An $2N$-dimensional ($N$ degrees of freedom) symplectic map $T\colon n \mapsto n+1$ is
    \emph{Liou\-vil\-le integrable} if it possesses $N$ functionally
    independent invariants in involution with respect to the associated Poisson
    bracket.
    \label{def:lint}
\end{definition}

Furthermore, we give some additional definition to cover the cases when
there exist more or less invariants than the number given in Definitions
\ref{def:invint} and \ref{def:lpint}.

\begin{definition}
    Suppose we are given a Poisson map $T\colon n\mapsto n+1$ \eqref{eq:map}, 
    with associated Poisson structure of rank $2r$.
    Then:
    \begin{enumerate}[label=\textbf{\ref{def:pintegrability}.\arabic*.},leftmargin=*,widest=a]
        \item If $T\colon n\mapsto n+1$ is Liouville--Poisson integrable,
            and possesses $k$ additional functionally independent invariants,
            with $0< k< r$, 
            then it is said to be \emph{superintegrable}.
            Moreover:
            \begin{enumerate}[label=\textbf{\ref{def:pintegrability}.\arabic{enumi}.\alph*.},leftmargin=*,widest=a]
                \item if $k=1$ the map is said to be \emph{minimally superintegrable},
                \item if $k=r-2$ the map is said to be \emph{quasi-maximally superintegrable},
                \item if $k=r-1$ the map is said to be \emph{maximally superintegrable}.
            \end{enumerate}
        \item If $T\colon n\mapsto n+1$ possesses $\kappa$ commuting 
            functionally independent invariants with $0< \kappa< M-r$, 
            then it is said to be \emph{Poincar\'e--Lyapunov--Nekhoroshev (PLN) map of order $\kappa$}.
            Moreover:
            \begin{enumerate}[label=\textbf{\ref{def:pintegrability}.\arabic{enumi}.\alph*.},leftmargin=*,widest=a]
                \item if $\kappa=1$ the map is said to be \emph{Poincar\'e--Lyapunov (PL) map},
                \item if $\kappa=M-r-1$ the map is said to be \emph{quasi-integrable},
            \end{enumerate}
    \end{enumerate}
    \label{def:pintegrability}
\end{definition}

\begin{remark}
    We make the following observations:
    \begin{itemize}
        \item In the literature on superintegrable systems, see for instance
            \cite{MillerPostWinternitz2013R}, it is often omitted
            in the definition of superingrability the requirement for the system
            to be Liouville--Poisson integrable.
            In fact there are known examples in the literature where it is possible
            to find a huge number of invariants, but not a full set of commuting
            ones, see \cite{BallesterosHerranz2001}.
            So, to avoid the situation where a superintegrable system might not be a
            Liouville--Poisson integrable systems, we prefer to stick to a definition
            where superintegrability requires Liouville--Poisson integrability.
            We will see an example of discrete system with such a property in
            Section \ref{sec:dPII}.
        \item In the symplectic case ($M=2N$) a symplectic map is:
            \begin{itemize}
                \item superintegrable when it is Liouville integrable and 
                    possesses $N<k<2N$ invariants,
                \item quasi-maximally superintegrable when it is superintegrable
                    and possesses $2N-2$ invariants, 
                \item maximally superintegrable when it is superintegrable
                    and possesses $2N-1$ invariants.
                \item quasi-integrable when it possesses $N-1$ commuting invariants.
            \end{itemize}
    \end{itemize}
    \label{rem:minim}
\end{remark}

\subsection{Variational discrete systems and their integrability}

Given a $M$-di\-men\-sion\-al difference equation \eqref{eq:zeq},
it is not easy to decide if there exists a compatible Poisson or
symplectic structure.
Some partial results were given in
\cite{ByrnesHaggarQuispel1999}.  An important particular case is when
the equation is \emph{variational},
that is when the system arises from a discrete Lagrangian,
see \cite{Logan1973,Veselov1991,Bruschietal1991,TranvanderKampQusipel2016,Gubbiotti_dcov}
for further details.
In particular we also mention that the relationship of variational structures
with Liouville integrability was used in \cite{GJTV_class,Gubbiotti_lagr_add}
to prove integrability of some four-dimensional difference equations. 

Let us stick to our case of interest: a system of second-order difference
equations of the form \eqref{eq:map2ndord} is \emph{variational} if there
exist a function
\begin{equation}
    L = L_{n}(\vx_{n+1},\vx_{n}),
    \label{eq:dLagr}
\end{equation}
called the \emph{discrete Lagrangian},
such that the system itself is equivalent to the \emph{Euler-Lagrange} equations:
\begin{equation}
    \nabla_{\vx_{n}}
    \left[ L_{n}(\vx_{n+1},\vx_{n})+ L_{n-1}(\vx_{n},\vx_{n-1})\right]
    =0.
    \label{eq:elgen}
\end{equation}
Note that in general the Euler--Lagrange equations are a multi-valued
function in $\vx_{n+1}$ and $\vx_{n-1}$, see \cite{HietarintaJoshiNijhoff2016}.
Then we have the following result:
\begin{lemma}[\cite{Bruschietal1991,HietarintaJoshiNijhoff2016}]
    If the discrete Lagrangian $L$ does not depend explicitly on $n$,
    then the Euler--Lagrange equations \eqref{eq:elgen} leave invariant the 
    following symplectic structure:
    \begin{equation}
        \Omega(\vx_{n},\vx_{n-1}) =
        \begin{pmatrix}
            \mathbb{O}_{N} & \Lambda_{N} (\vx_{n},\vx_{n-1})
            \\
            -\Lambda_{N}(\vx_{n},\vx_{n-1}) & \mathbb{O}_{N}
        \end{pmatrix}
        \setlength{\delimitershortfall}{0pt}
        \label{eq:OmegaVar}
    \end{equation}
    where $\mathbb{O}_{N}$ is the zero $N \times N$ matrix and:
    \begin{equation}
        \Lambda_{N} (\vx_{n},\vx_{n-1}) =
        \begin{pmatrix}
            \dfrac{\partial^2 L}{\partial x_{n,1}\partial x_{n-1,1}}
            &
            \dots
            &
            \dfrac{\partial^2 L}{\partial x_{n,1}\partial x_{n-1,N}}
            \\[2ex]
            \vdots & &   \vdots
            \\[2ex]
            \dfrac{\partial^2 L}{\partial x_{n,N}\partial x_{n-1,1}}
            &
            \dots
            &
            \dfrac{\partial^2 L}{\partial x_{n,N}\partial x_{n-1,N}}
        \end{pmatrix}.
        \label{eq:Lambdadef}
    \end{equation}
    \label{lem:poisson}
\end{lemma}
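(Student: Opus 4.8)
The plan is to transcribe into the present finite-difference setting the classical discrete-mechanics fact that the Euler--Lagrange flow of an \emph{autonomous} discrete Lagrangian preserves the canonical symplectic form. Since $L$ is autonomous we write $L=L(\vec a,\vec b)$, denote by $D_1L,D_2L$ its gradients in the first and second slot and by $D_1D_2L$ the block of mixed second derivatives, so that $\Lambda_N(\vx_n,\vx_{n-1})=D_1D_2L(\vx_n,\vx_{n-1})$ in \eqref{eq:Lambdadef}, and recall that the pure Hessian blocks $D_1^2L,D_2^2L$ are symmetric. On $\RR^{2N}$, writing $\vec z_n=(\vx_n,\vx_{n-1})$ so that \eqref{eq:map2ndord} becomes a first-order map $\vec z_n\mapsto\vec z_{n+1}$, the system \eqref{eq:elgen} reads $D_2L(\vx_{n+1},\vx_n)+D_1L(\vx_n,\vx_{n-1})=0$; being multivalued in general, it defines --- once a branch is fixed --- the map $\Phi\colon\vec z_n\mapsto\vec z_{n+1}=(\vx_{n+1},\vx_n)$, and what has to be proven is $\Phi^\ast\Omega=\Omega$, equivalently that \eqref{eq:Poisson1} holds for the Poisson structure $J:=\Omega^{-1}$. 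The route I would take is to set the \emph{discrete momentum} $\vec p_n:=D_1L(\vx_n,\vx_{n-1})$, which by the Euler--Lagrange relation also equals $-D_2L(\vx_{n+1},\vx_n)$, form the one-form $\theta:=\vec p_n\cdot\ud\vx_n$ on $\RR^{2N}$, and put $\Omega:=-\ud\theta$.

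First I would check that $\Omega$ is indeed the stated two-form. In $\ud\theta$ the coefficient of $\ud x_{n,j}\wedge\ud x_{n,i}$ is $\partial^2L/\partial x_{n,j}\partial x_{n,i}$, symmetric in $i,j$ and hence annihilated by the antisymmetry of the wedge, so $\Omega=-\ud\theta=\sum_{i,j}(\Lambda_N)_{ij}\,\ud x_{n,i}\wedge\ud x_{n-1,j}$, whose coordinate matrix in the ordered blocks $(\vx_n,\vx_{n-1})$ is $\left(\begin{smallmatrix}\mathbb{O}_N&\Lambda_N\\-\Lambda_N^{\mathrm T}&\mathbb{O}_N\end{smallmatrix}\right)$, i.e.\ \eqref{eq:OmegaVar} (the lower-left block is $-\Lambda_N^{\mathrm T}$; for symmetric $\Lambda_N$ --- the case of every system treated below, where $\Lambda_N$ is a constant multiple of the identity --- this is $-\Lambda_N$). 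Being exact, $\Omega$ is closed, which is precisely the Jacobi identity \eqref{eq:JacoIden} for $J=\Omega^{-1}$; and it is non-degenerate --- hence a genuine symplectic structure in the sense of Remark \ref{rem:symplectic} --- exactly when the regularity hypothesis $\det\Lambda_N\neq0$ holds, which we assume throughout (this is also what makes $\Phi$ a well-defined local map).

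The main step is the invariance $\Phi^\ast\Omega=\Omega$, which I would obtain from a telescoping identity rather than from a direct Jacobian computation. Pulling $\theta$ back by $\Phi$ substitutes $(\vx_n,\vx_{n-1})\mapsto(\vx_{n+1},\vx_n)$, so $\Phi^\ast\theta=D_1L(\vx_{n+1},\vx_n)\cdot\ud\vx_{n+1}$; subtracting $\theta$ and using $D_1L(\vx_n,\vx_{n-1})=-D_2L(\vx_{n+1},\vx_n)$ gives
\[
\Phi^\ast\theta-\theta=D_1L(\vx_{n+1},\vx_n)\cdot\ud\vx_{n+1}+D_2L(\vx_{n+1},\vx_n)\cdot\ud\vx_n=\ud\bigl[L\bigl(\vx_{n+1}(\vx_n,\vx_{n-1}),\vx_n\bigr)\bigr],
\]
an exact one-form. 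Applying $\ud$ kills the right-hand side, so $\Phi^\ast\ud\theta=\ud\theta$, i.e.\ $\Phi^\ast\Omega=\Omega$. Autonomy of $L$ enters here essentially: it is the \emph{same} $L$, hence the same $\theta$, sitting on both sides of the step, so that the boundary terms telescope; for an $n$-dependent $L_n$ one would instead compare a one-form built from $L_{n-1}$ with one built from $L_n$, and both $\Omega$ and the telescoping would acquire an explicit $n$-dependence.

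I do not expect a real obstacle here: the computation is essentially bookkeeping, and the only points that need care are the slot and sign conventions relating the two expressions for $\vec p_n$, the fact that \eqref{eq:elgen} is multivalued so that $\Phi$ --- and therefore the statement --- is local on the chosen branch, and the regularity hypothesis $\det\Lambda_N\neq0$ that entitles us to call $\Omega$ a symplectic structure. A head-on verification of \eqref{eq:Poisson1} is of course also possible --- differentiate the Euler--Lagrange relation with respect to $\vx_n$ and to $\vx_{n-1}$ to express the blocks of $\partial\vec z_{n+1}/\partial\vec z_n$ through the Hessian blocks of $L$, then substitute --- but it is longer and far less transparent than the one-form route above.
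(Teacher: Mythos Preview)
Your argument is correct and is the standard discrete-mechanics proof via the Cartan one-form; there is nothing to compare it to, because the paper does not actually prove this lemma --- it is stated with the citations \cite{Bruschietal1991,HietarintaJoshiNijhoff2016} and immediately followed by ``For generalisations and proof of such a result we refer to \cite{TranPhDThesis}.'' Your observation that the lower-left block should be $-\Lambda_N^{\mathrm T}$ rather than $-\Lambda_N$ is also correct, and your remark that the two coincide in all the examples of the paper (where $\Lambda_N$ is a scalar multiple of the identity) is exactly the right caveat.
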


For generalisations and proof of such a result we refer to
\cite{TranPhDThesis}.

\section{Coalgebra symmetry for discrete systems}
\label{sss:coalgebra}

The coalgebra symmetry approach to (super)integrable systems is an
approach that allows to build a (classical) Liouville integrable system
on the tensor product of $N$ copies of a given Poisson algebra $\Alg$.
The concept of coalgebra originated in Quantum Group theory and it was not
until the papers \cite{Ballesteros_et_al_1996,BallesterosRagnisco1998}
that its application to (classical) Liouville integrable was devised.
We note that besides these seminal applications the coalgebra symmetry
approach has been extended to many other cases, and many integrable
and superintegrable systems have been understood within this algebraic
framework. For a comprehensive review containing several explicit
examples (for deformed coalgebras also), some generalisations of
the method (such as comodule algebras \cite{Ballesteros_et_al2002}
and Loop coproducts \cite{Musso2010loop}) and many references on the
topic we refer the reader to \cite{Ballesteros_et_al2009}. 
Additional examples, where other physical applications can be found 
(also in the quantum mechanical setting), involve superintegrable 
systems defined on non-Euclidean spaces
\cite{BallesterosHerranz2007,Ballesteros_et_al2008PhysD,Ballesteros_et_al2009AnnPhys,
BallesterosHerranz2009, Ballesteros_et_al2011AnnPhys, Riglioni2013,
PostRiglioni2015}, models with spin-orbital interactions
\cite{Riglioni_et_al2014}, discrete quantum mechanical systems
\cite{LatiniRiglioni2016} and superintegrable systems related to the
generalised Racah algebra $R(n)$  \cite{DeBie_et_al2021, Latini2019,
Latini_et_al2020embedding, Latini_et_al2021}.

Roughly speaking, the general idea behind this algebraic approach is to
reinterpret one degrees of freedom dynamical Hamiltonian systems as the
images, under a given symplectic realisation, of some (smooth) functions
of the generators of a given Poisson (co)algebra $(\Alg, \Delta)$. Then,
by using the (not necessarily primitive) coproduct, the method consists in
extending the one degrees of freedom system, originally defined on $\Alg$,
to a higher degrees of freedom one defined on the tensor product of $N$
copies of $\Alg$. The main point resides in the fact that the obtained
higher degrees of freedom system is endowed with constants of motion
arising as the images of the coalgebra Casimir invariants through the
application of the so-called $m$th coproduct maps.  In this section, we
will recall the definition of coalgebra and then adapt it to the study of
(super)integrable discrete systems.

\subsection{Definition of coalgebra and coproduct}
We state the following main definition:

\begin{definition}[\cite{ChariPressley1994Book,Drinfeld1987}]
    A \emph{coalgebra} is a pair of objects $(\Alg,\Delta)$ where 
    $\Alg$ is a unital, associative algebra and
    $\Delta\colon \Alg \rightarrow \Alg\otimes \Alg$
     is a \emph{coassociative} map, that is:
    \begin{equation} 
        (\Delta \otimes \Id) \circ \Delta=(\Id \otimes \Delta) \circ \Delta.
        \label{eq:coass}
    \end{equation}
    meaning that the following diagram is commutative:
    \begin{equation}
        \begin{CD}    
            \Alg @>\Delta>> \Alg\otimes\Alg
            \\
            @V{\Delta}VV @VV{\Delta\otimes\Id}V 
            \\
            \Alg\otimes\Alg @>>{\Id\otimes\Delta}>
            \Alg\otimes\Alg\otimes\Alg
        \end{CD}
        \label{eq:commdiag}
    \end{equation}
    and is an algebra homomorphism from $\mathfrak{A}$ to  
    $\mathfrak{A} \otimes \mathfrak{A}$, \emph{i.e.}
    $\Delta (X \cdot  Y) = \Delta (X)  \cdot \Delta (Y)$,
    for all $X, Y \in \mathfrak{A}$.
    The map $\Delta$ is called the \emph{coproduct map}.
    \label{def:coalgebra}
\end{definition}


As for Definition \ref{def:coalgebra} the base algebra $\Alg$ of a
coalgebra $(\Alg,\Delta)$ can be any unital algebra. However, following
\cite{BallesterosRagnisco1998} we are interested to the case when the
algebra $\Alg$ is a \emph{Poisson algebra}.  We have then the following
definition:

\begin{definition}
    The pair $(\mathfrak{A}, \Delta)$ is a \emph{Poisson coalgebra} if
    $\Alg$ is a Poisson algebra and $\Delta$ is a Poisson homomorphism
    between $\Alg$ and $\Alg \otimes \Alg$, i.e.:
    \begin{equation}
    \Delta\bigl(\{X, Y\}_\mathfrak{A}\bigl)=\{\Delta(X),\Delta(Y)\}_{\mathfrak{A} \otimes\mathfrak{A}} \quad \forall \, X,Y \in \mathfrak{A} \, ,
    \label{eq:hompoi}
    \end{equation}
    with respect to the standard Poisson structure on $\Alg \otimes \Alg$ 
    given by:
    \begin{equation}
        \{X \otimes Y, W \otimes Z\}_{\mathfrak{A} \otimes \mathfrak{A}}:=\{X, W\}_{\mathfrak{A}} \otimes YZ+XW \otimes \{Y,Z\}_{\mathfrak{A}},
        \label{eq:pstrucatensa}
    \end{equation}
    for all $X,Y,Z,W\in\Alg$.
    \label{key}
\end{definition}
So, let $(\Alg,\Delta)$ be a Poisson coalgebra generated by the set 
$\left\{ A_{1},\dots,A_{K} \right\}$, with $K:=\text{dim}(\Alg)$.
In what follows we will denote by $C_{i} = C_{i}\left( A_{1},\dots,A_{K}
\right)$ for $i=1,\dots,r$ the $r$ functionally independent \emph{Casimir functions} of $\Alg$.

\begin{remark}
    We remark that also when $\Alg$ is a Poisson algebra, the definition
    of the coproduct map might be non-trivial.  However, for Lie--Poisson
    algebras with generators $A_i$, $i = 1,\dots,K$, the coproduct is 
    primitive \cite{Tjin1992}, i.e.:
    \begin{equation}
        \Delta(A_i)= A_i\otimes 1+1\otimes A_i \qquad \Delta(1)=1\otimes 1,
        \label{eq:standardcop}
    \end{equation}
    and for general elements it is defined by extension.
    \label{rem:liepoisson}
\end{remark}


By recursion we can define the $m$th coproduct 
$\Delta^{(m)}\colon \Alg \rightarrow \Alg^{\otimes m}$, 
where $\Alg^{\otimes m}=\bigotimes_{i=1}^{m}\Alg$, as:
\begin{equation}
    \Delta^{(m)}:=
    \begin{cases}
        \Delta & \text{if $m=2$},
        \\
        \bigl(\overbrace{\Id\otimes \Id\otimes \dots \otimes \Id}^{m-2}\otimes \Delta^{(2)}\bigr)\circ\Delta^{(m-1)}.
        &
        \text{if $m>2$}.
    \end{cases}
    \label{eq:deltam}
\end{equation}
By induction on $m$, from the fact that $\Delta$ is a Poisson map
on $\Alg\otimes\Alg$ we have that the $m$th coproduct $\Delta^{(m)}$
is a Poisson map on $\Alg^{\otimes m}$ \cite{BallesterosRagnisco1998,
Ballesteros_et_al2009} (see also \cite{Musso2010loop}).

Using the $m$th coproduct it is possible to define elements on
$\Alg^{\otimes m}$ by applying it to the generators $A_{i}$
of the original Poisson algebra $\Alg$.
This can be extended to functions $h\in \mathcal{C}^{\infty}(\Alg)$ 
through the following formula:
\begin{equation} 
    h^{(m)}:=\Delta^{(m)}(h)(A_1,\dots,A_K):=
    h(\Delta^{(m)}(A_1),\dots,\Delta^{(m)}(A_K)).
    \label{eq:htotg}
\end{equation}
Clearly $h^{(m)}\in \mathcal{C}^{\infty}(\Alg^{\otimes m})$.

\subsection{Coalgebra symmetry and Liouville integrability} 
The extension we just proposed can be performed to the Casimir elements
to produce a set of commuting invariants on a given realisation in $N$
degrees of freedom.  This is the first link of the coalgebra structure
with Liouville integrability.

To be more specific, let us take a $N\in\N$ fixed.  Then we can consider
a chain of tensor products $\Alg^{\otimes m}$ for all $m \leq N$ with
embedding:
\begin{equation}
    j_{m} 
    \colon 
    A\in \Alg^{\otimes m} \mapsto A \otimes \overbrace{1\otimes 1\otimes \dots \otimes 1}^{N-m}
    \in \Alg^{\otimes N},
    \label{eq:Algembed}
\end{equation}
so that we can consider all the elements as lying in the final tensor
product $\Alg^{\otimes N}$.  In particular, see \cite{Musso2010gaudin},
we can consider all the Poisson brackets in $\Alg^{\otimes N}$ in the
following way: let $A\in\Alg^{\otimes m}$ and $A'\in\Alg^{\otimes m'}$,
then:
\begin{equation}
    \left\{ A, A' \right\}_{\Alg^{\otimes N}} = 
    \left\{ j_{m}(A),j_{m'}(A') \right\}_{\Alg^{\otimes N}}.
    \label{eq:algNpoisson}
\end{equation}

We state the following:

\begin{lemma}[\cite{BallesterosRagnisco1998}]
    Consider the following $r N$ functions:
    \begin{equation}
        \mathcal{F}_{m,j}:= \Delta^{(m)}(C_j)(A_1,\dots,A_K),
        \quad m=1,\dots,N,j=1,\dots,r.
        \label{eq:Ctotg}
    \end{equation}
    Then the set $\mathcal{L}_{\mathsf{r},N} = \left\{ \mathcal{F}_{m,j}
    \right\}_{m=1,\dots,N,j=1,\dots,r}$ is a set of Poisson-commuting
    functions on $\Alg^{\otimes N}$. Furthermore, they Poisson commute
    with $\Delta^{(N)}(A_i)$, $i=1, \dots, K$.  
    \label{lem:comm}
\end{lemma}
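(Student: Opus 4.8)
The plan is to prove the commutation property in two stages, following the classical coalgebra strategy of \cite{BallesterosRagnisco1998} but carefully tracking the embeddings $j_m$ and the convention \eqref{eq:algNpoisson}. The key structural fact I would exploit is that $\Delta^{(m)}$ is a Poisson map from $\Alg$ to $\Alg^{\otimes m}$ (established in the text by induction from the Poisson-homomorphism property of $\Delta$), together with the elementary observation that the Casimir functions $C_j$ Poisson-commute with \emph{everything} in $\Alg$.

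First I would handle the ``diagonal'' case $m=m'$. Fix $m$ and take two Casimirs $C_j, C_k$. Since $C_j$ is central in $\Alg$, we have $\{C_j, A_i\}_{\Alg}=0$ for all generators $A_i$, hence $\{C_j, h\}_{\Alg}=0$ for every smooth $h$; in particular $\{C_j, C_k\}_\Alg = 0$. Applying the Poisson map $\Delta^{(m)}$ and using \eqref{eq:htotg} gives $\{\Delta^{(m)}(C_j), \Delta^{(m)}(C_k)\}_{\Alg^{\otimes m}} = \Delta^{(m)}(\{C_j,C_k\}_\Alg) = 0$, and likewise $\{\Delta^{(m)}(C_j), \Delta^{(m)}(A_i)\}_{\Alg^{\otimes m}} = \Delta^{(m)}(\{C_j, A_i\}_\Alg) = 0$. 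Pushing forward along $j_m$ and invoking \eqref{eq:algNpoisson}, these vanish in $\Alg^{\otimes N}$ as well. This already yields $\{\mathcal F_{m,j}, \mathcal F_{m,k}\}_{\Alg^{\otimes N}} = 0$ and $\{\mathcal F_{m,j}, \Delta^{(N)}(A_i)\}_{\Alg^{\otimes N}}=0$ for the top level $m=N$.

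Next I would handle the ``off-diagonal'' case $m < m'$. The crucial point is the nested-embedding compatibility of the coproduct: by the coassociativity \eqref{eq:coass} and the recursive definition \eqref{eq:deltam}, one has $\Delta^{(m')} = (\Delta^{(m)} \otimes \Id^{\otimes (m'-m)}) \circ (\text{something})$ — more precisely, $j_m \circ \Delta^{(m)}(C_j)$ can be written as $\Delta^{(N)}$ applied to an element which is itself a Casimir-type object sitting inside the first $m$ tensor factors, and the same for the larger $m'$. The cleanest route is to show that $j_m(\Delta^{(m)}(C_j))$ is a function of the ``partial sums'' $\Delta^{(m)}(A_i)\otimes 1^{\otimes(N-m)}$ only, whereas $j_{m'}(\Delta^{(m')}(C_k))$ is built from the $\Delta^{(m')}(A_i)$; since $C_j$ is central, $\Delta^{(m)}(C_j)$ Poisson-commutes with every $\Delta^{(m)}(A_i)$ inside $\Alg^{\otimes m}$, hence (after embedding, and because $m\le m'$ so the first $m$ tensor slots of $\Delta^{(m')}(A_i)$ are exactly $\Delta^{(m)}(A_i)$) it commutes with every $\Delta^{(m')}(A_i)$ and therefore with any smooth function of them, in particular with $\Delta^{(m')}(C_k)$ and with $\Delta^{(N)}(A_i)$. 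Symmetry in $(m,j)\leftrightarrow(m',k)$ is not needed since one index set suffices.

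The main obstacle I anticipate is the bookkeeping in the off-diagonal step: making rigorous the claim that the first $m$ tensor factors of $\Delta^{(m')}(A_i)$ coincide with $\Delta^{(m)}(A_i)$, which is a consequence of coassociativity but must be stated as a lemma (an induction on $m'-m$ using \eqref{eq:deltam}) and then combined correctly with the Poisson-bracket convention \eqref{eq:pstrucatensa}–\eqref{eq:algNpoisson}. Once that compatibility lemma is in place, the centrality of $C_j$ does all the work and the commutation follows formally; the rest is the routine verification that the Leibniz rule for \eqref{eq:pstrucatensa} propagates a vanishing bracket on generators to a vanishing bracket on smooth functions thereof.
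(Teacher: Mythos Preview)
Your proposal is correct and follows the standard coalgebra argument of \cite{BallesterosRagnisco1998,Ballesteros_et_al2009}: centrality of the $C_j$ combined with the Poisson-homomorphism property of $\Delta^{(m)}$ handles the equal-level brackets, and the nesting of coproducts coming from coassociativity handles the mixed-level brackets. The paper does not give its own proof of this lemma at all, simply referring the reader to those same references, so there is nothing further to compare.
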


\begin{proof}
    We refer the reader to \cite{BallesterosRagnisco1998, Ballesteros_et_al2009}. 
\end{proof}

In the continuum case, one can use formula \eqref{eq:htotg} to
define a $N$-degrees of freedom Hamiltonian starting from a one
degree of freedom Hamiltonian.  That is, taking a function $h\in
\mathcal{C}^{\infty}(\Alg)$, we generate its $N$ degrees of freedom
extension by defining $h^{(N)}\in \mathcal{C}^{\infty}(\Alg^{\otimes
N})$ from equation \eqref{eq:htotg}.  A Hamiltonian constructed in
this way is said to admit the \emph{coalgebra symmetry}, and by taking
into account Lemma \ref{lem:comm}, it is possible to show that Poisson
commutes with the functions $\mathcal{F}_{m,j}$ making them invariants
(first integrals) of the associated dynamical system of Hamilton equations
\cite{BallesterosRagnisco1998, Ballesteros_et_al2009}.

As underlined in the previous section, in the discrete setting there
is no exact discrete analog of the Hamiltonian, so the construction
we just explained does not extend trivially. However, we consider the
evolution of the generator of the Poisson algebra $\Alg$ under the
flow of a discrete dynamical system. In such a case to underline the
discrete nature of the evolution we specify the Poisson algebra as
$\Alg = \langle A_{1}^{(n)},\dots,A_{K}^{(n)}\rangle$, and work with
a given \emph{symplectic realisation} of $\Alg$. We then state the
following definition:

\begin{definition}
    A Poisson map $T\colon n\mapsto n+1$ is said to possess the
    \emph{coalgebra symmetry} with respect to the Poisson coalgebra
    $(\Alg,\Delta)$ if for all $N\in\N$ the evolution of generators in
    a fixed symplectic realisation in $N$ degrees of freedom of the 
    Poisson coalgebra is:
    \begin{enumerate}[label=(\roman*)]
        \item \emph{closed} in the Poisson coalgebra, that is: 
            \begin{equation}
                A_{i}^{(n+1)}  = a_{i}(A_{1}^{(n)},\dots,A_{K}^{(n)}), 
                \quad i=1,\dots,K,
                \label{eq:closurerelation}
            \end{equation} 
            with $a_{i}\in \mathcal{C}^{\infty}(\Alg)$,
        \item it is a Poisson map with respect to the Poisson algebra $\Alg$,
            that is:
            \begin{equation}
                \{ A_{i}^{(n+1)} ,A_{j}^{(n+1)}\} = T\bigl(\{A_{i}^{(n)},A_{j}^{(n)} \}\bigr),
                \quad i,j=1,\dots,K,
                \label{eq:poissonalg}
            \end{equation}
        \item admits the Casimirs $\{ C_{1}^{(n)},\dots,C_{r}^{(n)} \}$
            of the algebra $\Alg$ as invariants.
    \end{enumerate}
    \label{def:dcoalgebrasymmety}
\end{definition}

So, now note immediately that any invariant for the system
\eqref{eq:closurerelation} is an invariant for the original Poisson map.
This trivially follows noticing that the system \eqref{eq:closurerelation}
on $N$ degrees of freedom symplectic realisation of $\Alg$ is a
consequence of the Poisson map itself: using the coordinate realisation
we are able to build the invariants in the original coordinates.
This construction, enables us to summarise our construction of the
discrete coalgebra symmetry in the following theorem:

\begin{theorem}
    Consider a Poisson map $T\colon n\mapsto n+1$ with coalgebra symmetry with 
    respect to the coalgebra $(\Alg,\Delta)$.
    Then $T\colon n\mapsto n+1$ admits a set of $rN$ 
    commuting invariants $\mathcal{L}_{r,N}$ 
    \eqref{eq:Ctotg}.
    \label{thm:coalgebraconstruction}
\end{theorem}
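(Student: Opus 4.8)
The plan is to combine Lemma~\ref{lem:comm} with the observation made immediately before the statement, namely that the closure relation \eqref{eq:closurerelation} is a consequence of the Poisson map $T$ under any fixed symplectic realisation. Concretely, fix $N\in\N$ and a symplectic realisation of $\Alg^{\otimes N}$ in $2N$ variables $(\vx_n,\vx_{n-1})$; the coalgebra-symmetry hypothesis (Definition~\ref{def:dcoalgebrasymmety}) guarantees that under this realisation the generators $\Delta^{(N)}(A_i)$ evolve by \eqref{eq:closurerelation}, that this evolution preserves the Poisson structure of $\Alg^{\otimes N}$ by \eqref{eq:poissonalg}, and that the Casimirs are invariant.

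First I would invoke Lemma~\ref{lem:comm}: the $rN$ functions $\mathcal{F}_{m,j}=\Delta^{(m)}(C_j)(A_1,\dots,A_K)$, pushed into $\Alg^{\otimes N}$ via the embeddings $j_m$ of \eqref{eq:Algembed}, form a Poisson-commuting family on $\Alg^{\otimes N}$. Evaluating them in the chosen symplectic realisation turns them into $rN$ functions of $(\vx_n,\vx_{n-1})$ that Poisson-commute with respect to the associated bracket; this part is purely the content of Lemma~\ref{lem:comm} and needs no new argument. Second, I would show each $\mathcal{F}_{m,j}$ is an invariant of $T$. For $m=N$ this is exactly item (iii) of Definition~\ref{def:dcoalgebrasymmety} (the Casimirs $C_j^{(n)}=\mathcal{F}_{N,j}$ are invariant). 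For $m<N$, the key point is that $\mathcal{F}_{m,j}=\Delta^{(m)}(C_j)$ depends only on the first $m$ tensor factors, i.e.\ on $\Delta^{(m)}(A_i)$, and each of these is itself a realisation of the generators of a coalgebra on $\Alg^{\otimes m}$; by the coalgebra-symmetry hypothesis applied with $m$ in place of $N$ (the hypothesis is stated ``for all $N\in\N$''), the quantities $\Delta^{(m)}(A_i)$ close under the same maps $a_i$ and hence $\mathcal{F}_{m,j}=C_j(\Delta^{(m)}(A_1),\dots,\Delta^{(m)}(A_K))$ is carried to $C_j(a_1(\dots),\dots,a_K(\dots))=C_j(\Delta^{(m)}(A_1),\dots)$, i.e.\ it is preserved.

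The step I expect to be the main obstacle is making precise the compatibility between ``the realisation in $N$ degrees of freedom'' and ``the realisation in $m$ degrees of freedom'' that sits inside it — that is, verifying that the restriction of the $N$-fold closure to the first $m$ tensor slots really is the $m$-fold closure, so that the inductive use of item (i) of Definition~\ref{def:dcoalgebrasymmety} is legitimate. This is where the embeddings $j_m$ and the recursive definition \eqref{eq:deltam} of $\Delta^{(m)}$, together with the coassociativity \eqref{eq:coass}, do the work: coassociativity ensures $\Delta^{(m)}$ is a well-defined sub-block of $\Delta^{(N)}$, and the tensor-product Poisson structure \eqref{eq:pstrucatensa} ensures brackets computed in $\Alg^{\otimes m}$ agree with those computed in $\Alg^{\otimes N}$ via \eqref{eq:algNpoisson}. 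Once this bookkeeping is in place, the conclusion is immediate: $\mathcal{L}_{r,N}=\{\mathcal{F}_{m,j}\}$ is a set of $rN$ invariants of $T$ that pairwise Poisson-commute, which is exactly the assertion of Theorem~\ref{thm:coalgebraconstruction}. I would close by noting that functional independence is not claimed here (the count $rN$ may exceed $2N$ and the $\mathcal{F}_{m,j}$ need not all be independent), so no genericity argument is required.
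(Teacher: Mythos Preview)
Your proposal is correct and follows essentially the same route as the paper: invariance of each $\mathcal{F}_{m,j}$ comes from applying the $m$th coproduct to the closure relation \eqref{eq:closurerelation} (equivalently, from item (iii) of Definition~\ref{def:dcoalgebrasymmety} at level $m$), and commutation comes straight from Lemma~\ref{lem:comm}. The paper's own proof is a two-sentence sketch that treats the bookkeeping you flag (compatibility of the $m$-fold and $N$-fold realisations via $j_m$ and coassociativity) as trivial; your more careful accounting of that point is appropriate and does not change the argument.
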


\begin{proof}
    The proof follows trivially applying the $m$th coproduct to the system 
    \eqref{eq:closurerelation}, and then considering the construction of the functions
    $\mathcal{F}_{j,m}$ \eqref{eq:Ctotg}.
    Commutation follows from Lemma \ref{lem:comm}.
\end{proof}

\subsection{Coalgebra symmetry and superintegrability}

In what we discussed in the previous sections we applied
recursively the map $\Delta^{(2)}$ to define the $m$-th coproduct
maps $\Delta^{(m)}$. However, besides \eqref{eq:deltam} another
recursive relation can be defined for the $m$-th coproduct maps,
i.e. \cite{Ballesteros_et_al2009}:
\begin{equation}
    \Delta^{(m)}_{R}:=
    \begin{cases}
        \Delta & \text{if $m=2$},
        \\
        \bigl(\Delta^{(2)}\otimes\overbrace{\Id\otimes \Id\otimes \dots \otimes \Id}^{m-2}\bigr)\circ\Delta^{(m-1)}.
        &
        \text{if $2<m\leq N$}.
    \end{cases}
    \label{eq:rightCas}
\end{equation}
Since the coproduct is coassociative, we have that for all $N\in\N$ 
fixed:
\begin{equation}
    \Delta^{(N)}(A_i)=\Delta_R^{(N)}(A_i).
    \label{eq:coas}
\end{equation}

However, when lower dimensional coproducts are considered, substantial
differences arise. In particular, lower dimensional left $m$-th coproducts
with $2 < m < N$  will contain objects living on the tensor product
space $1 \otimes 2 \otimes \dots \otimes m$, whereas lower dimensional
right $m$-th coproducts will be defined on the sites $(N  -m+1) \otimes
(N -m+2)\otimes \dots \otimes N$.  This implies that coalgebra symmetry
not only generate a completely integrable Hamiltonian system but, in principle, a superintegrable one. This is because besides the left
Casimirs \eqref{eq:Ctotg} we will also be able to define the set 
$\mathcal{R}_{r,N} = \{\mathcal{G}_{m,j} \}_{m=1,\dots,N,j=1,\dots,r}$
composed by the functions
\begin{equation}
    \mathcal{G}_{m,j}:= \Delta^{(m)}_{R}(C_j)(A_1,\dots,A_K),
    \quad m=1,\dots,N,j=1,\dots,r.
    \label{eq:rfun}
\end{equation}
called the \emph{right Casimirs}.  Analogously than Lemma
\ref{lem:comm} we have that this set is composed by $r N$ functions
in involution.  Notice that because of the coassociativity property
$\mathcal{F}_{N,j}=\mathcal{G}_{N,j}$, $j=1, \dots r$.

Summing up we obtain the following result:

\begin{theorem}
    Consider a Poisson map $T\colon n\mapsto n+1$ with coalgebra symmetry
    with respect to the coalgebra $(\Alg,\Delta)$.  Then $T\colon n\mapsto
    n+1$ admits two set of $rN$ commuting invariants: $\mathcal{L}_{r,N}$
    \eqref{eq:Ctotg} and $\mathcal{R}_{r,N}$ \eqref{eq:rfun}.
    \label{thm:coalgebrasupint}
\end{theorem}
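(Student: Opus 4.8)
The plan is to mirror the proof of Theorem~\ref{thm:coalgebraconstruction}, now invoking the right-coproduct construction in parallel with the left one. First I would recall that Theorem~\ref{thm:coalgebraconstruction} already gives us that $\mathcal{L}_{r,N}$ is a set of $rN$ commuting invariants of $T$: this follows by applying $\Delta^{(m)}$ to the closure relations \eqref{eq:closurerelation}, which is legitimate because $\Delta^{(m)}$ is a Poisson map on $\Alg^{\otimes m}$ and $T$ acts on the generators through the $a_i$, and because the Casimirs $C_j^{(n)}$ are invariants by Definition~\ref{def:dcoalgebrasymmety}(iii). The point to make is that nothing in that argument used the \emph{specific} recursion \eqref{eq:deltam} defining $\Delta^{(m)}$ rather than \eqref{eq:rightCas} defining $\Delta^{(m)}_R$; both are built by iterating the single coproduct $\Delta=\Delta^{(2)}$, only the slot into which the new tensor factor is inserted differs.

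So the key steps are: (1) observe that $\Delta^{(m)}_R$, just like $\Delta^{(m)}$, is a Poisson map $\Alg\to\Alg^{\otimes m}$ — this is the same induction on $m$ from the fact that $\Delta$ is a Poisson map on $\Alg\otimes\Alg$, as cited after \eqref{eq:deltam}; (2) conclude that each $\mathcal{G}_{m,j}=\Delta^{(m)}_R(C_j)$ is an invariant of $T$ by the identical argument that showed $\mathcal{F}_{m,j}$ is an invariant, since applying $\Delta^{(m)}_R$ to \eqref{eq:closurerelation} intertwines the evolution of the $A_i^{(n)}$ with the evolution of the $\mathcal{G}_{m,j}$, and $\Delta^{(m)}_R(C_j)$ evaluated on the symplectic realisation is constant because $C_j$ is a Casimir and the realisation at each site is $T$-covariant; (3) invoke the analogue of Lemma~\ref{lem:comm} for right Casimirs — stated in the text just before this theorem — to get that $\mathcal{R}_{r,N}=\{\mathcal{G}_{m,j}\}$ is itself a set of $rN$ Poisson-commuting functions on $\Alg^{\otimes N}$. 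Combining (2) and (3) gives the claim for $\mathcal{R}_{r,N}$, and the statement for $\mathcal{L}_{r,N}$ is exactly Theorem~\ref{thm:coalgebraconstruction}.

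The only genuinely new ingredient beyond Theorem~\ref{thm:coalgebraconstruction} is the involutivity of the right Casimirs among themselves, i.e.\ the right-handed version of Lemma~\ref{lem:comm}; I would either cite \cite{Ballesteros_et_al2009}, where this is established, or note that it follows from the left version by the obvious ``reversal'' automorphism that reindexes the $N$ tensor sites $k\mapsto N+1-k$, which is a Poisson isomorphism of $\Alg^{\otimes N}$ carrying $\Delta^{(m)}$-images to $\Delta^{(m)}_R$-images. That symmetry argument is clean provided one checks the standard tensor Poisson structure \eqref{eq:pstrucatensa} is invariant under site-reversal, which it is.

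I expect the main obstacle — really the only subtlety — to be bookkeeping with the embeddings $j_m$ of \eqref{eq:Algembed}: the functions $\mathcal{F}_{m,j}$ and $\mathcal{G}_{m,j}$ a priori live on different sub-tensor-products of $\Alg^{\otimes N}$ (sites $1,\dots,m$ versus $N-m+1,\dots,N$), so to even state that all $2rN$ functions are simultaneously defined and that each family commutes one must consistently push everything forward to $\Alg^{\otimes N}$ via \eqref{eq:algNpoisson}. Once that is set up, the coassociativity identity \eqref{eq:coas} guarantees the two families share the top members $\mathcal{F}_{N,j}=\mathcal{G}_{N,j}$, and the proof reduces to the three citations/observations above. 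I would therefore keep the written proof short: reduce to Theorem~\ref{thm:coalgebraconstruction} for $\mathcal{L}_{r,N}$, repeat verbatim the invariance argument with $\Delta^{(m)}_R$ in place of $\Delta^{(m)}$ for $\mathcal{R}_{r,N}$, and cite the right-Casimir involution lemma for the commutation within $\mathcal{R}_{r,N}$.
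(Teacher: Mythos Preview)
Your proposal is correct and follows the same approach as the paper. In fact the paper does not write out a separate proof for this theorem at all: it relies on the preceding text, which introduces $\Delta^{(m)}_R$, states that the set $\mathcal{R}_{r,N}$ is in involution ``analogously'' to Lemma~\ref{lem:comm}, and implicitly leaves the invariance argument to the reader by parallel with the (two-line) proof of Theorem~\ref{thm:coalgebraconstruction}; your plan spells out exactly these steps, and your site-reversal automorphism remark is a pleasant extra that the paper does not include.
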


\begin{remark}
    We remark that the elements of the sets $\mathcal{L}_{r,N}$ and
    $\mathcal{R}_{r,N}$ in general do not commute between themselves,
    see for instance \cite{Latini_et_al2021} for the example of the
    $\Sl_{2}(\RR)$ Lie--Poisson algebra.
    \label{rem:comm}
\end{remark}

\subsection{Final remarks on the coalgebra symmetry}
Before discussing some concrete cases with explicit Lie--Poisson algebras
we give some final remarks on the procedure: 
\begin{enumerate}
    \item Even in the continuum case it is not possible to tell
        \emph{a priori} if a system with coalgebra symmetry is Liouville
        integrable, see \cite{BallesterosBlasco2008}.
        This depends on the explicit symplectic realisation of the coalgebra
        $(\Alg,\Delta)$ chosen and the number of functionally independent 
        invariants that is possible to extract from the set 
        $\mathcal{L}_{r,N}$.
        When we have $N$ degrees of freedom it is enough that the set 
        $\mathcal{L}_{r,N}$ consists of $N-1$ functionally independent 
        invariants: the coalgebraic Hamiltonian $h^{(N)}$ 
        will yield the last one.
    \item In all the examples we will present, even though the 
        set $\mathcal{L}_{r,N}$ consists of $N-1$ functionally independent invariants, it will not be enough to give integrability because of the lack of the Hamiltonian.
        However, in most cases the additional missing invariants
        can be found directly studying the system \eqref{eq:closurerelation}
        with the methods discussed in Appendix \ref{app:findinv}.
\end{enumerate}

\section{General classes of additive differential systems and coalgebra}
\label{sec:general}

Consider the following system of second-order additive difference equations:
\begin{equation}
    \vec{x}_{n+1}+\vec{x}_{n-1} = \vec{F}\left( \vec{x}_{n} \right).
    \label{eq:2ndadd}
\end{equation}
This system preserves the canonical $\RR^{2N}$ measure:
\begin{equation}
    m = \ud \vx_{n}\wedge\ud\vx_{n-1}.
    \label{eq:canonical}
\end{equation}
The system \eqref{eq:2ndadd} is not Lagrangian for all
choices of the vector function $\vec{F}\left( \vx_{n} \right)$.
We note that, if there exists a scalar function $V=V(\vx_{n})$
such that $\grad V(\vx_{n}) = \vec{F}(\vx_{n})$, then the
system \eqref{eq:2ndadd} can be derived by the following Lagrangian:
\begin{equation}
    L= \vec{x}_{n+1}\cdot\vec{x}_{n} - V\left( \vec{x}_{n} \right).
    \label{eq:2naddlagr}
\end{equation}
From Lemma \ref{lem:poisson} we have that to the Lagrangian \eqref{eq:2naddlagr} 
corresponds the canonical (full-rank) Poisson bracket:
\begin{equation}
    \left\{ x_{n,i},x_{n,j} \right\}=\left\{ x_{n-1,i},x_{n-1,j} \right\}=0,
    \quad
    \left\{ x_{n,i},x_{n-1,j} \right\} = \delta_{i,j}
    \label{eq:canonicalpoisson}
\end{equation}
where $\delta_{i,j}$ is Kronecker delta.

We now pass to consider two particular cases of systems in the form
\eqref{eq:2ndadd} and prove that they possess the coalgebra symmetry.

\subsection{Radially symmetric systems and the $\mathfrak{sl}_{2}(\RR)$ algebra}

Consider the following particular case of \eqref{eq:2ndadd}:
\begin{equation}
    \vec{x}_{n+1}+\vec{x}_{n-1} = f\left( \abs{\vec{x}_{n}} \right)\frac{\vec{x}_{n}}{\abs{\vec{x}_{n}}}.
    \label{eq:radsym}
\end{equation}
This system has \emph{radial symmetry}: 
if $\vec{x}_{n}$ is a solution of \eqref{eq:radsym} 
then $\vec{X}_{n}=R\vec{x}_{n}$ with $R\in\SO\left( N \right)$ 
is another solution.
The system \eqref{eq:radsym} is variational with the following
Lagrangian:
\begin{equation}
    L= \vec{x}_{n+1}\cdot\vec{x}_{n} - V\left( \abs{\vec{x}_{n}} \right),
    \quad V\left( r \right) = \int f\left( r \right)\ud r.
    \label{eq:Lrad}
\end{equation}
From the general case we have that equation \eqref{eq:radsym}
preserves the canonical Poisson bracket \eqref{eq:canonicalpoisson}.

The following result follows from a trivial computation:
\begin{lemma}
    A radially symmetric system \eqref{eq:radsym}
    possesses the following $N\left( N-1 \right)/2$ invariants:
    \begin{equation}
        L_{i,j}^{(n)} = x_{n,i}x_{n-1,j}-x_{n-1,i}x_{n,j}.
        \label{eq:Lij}
    \end{equation}
    In total there are $2N-3$ functionally independent functions 
    in formula \eqref{eq:Lij}.
    \label{lem:rad}
\end{lemma}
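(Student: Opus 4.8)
The plan is to handle the two assertions of the lemma separately: that each $L_{i,j}^{(n)}$ is invariant under \eqref{eq:radsym}, and that precisely $2N-3$ of the $N(N-1)/2$ functions in \eqref{eq:Lij} are functionally independent. The invariance is the advertised trivial computation: writing the evolution rule componentwise as $x_{n+1,k}=f(\abs{\vx_n})\,x_{n,k}/\abs{\vx_n}-x_{n-1,k}$ and substituting it into $L_{i,j}^{(n+1)}=x_{n+1,i}x_{n,j}-x_{n,i}x_{n+1,j}$, the two terms proportional to $f(\abs{\vx_n})/\abs{\vx_n}$ both carry the factor $x_{n,i}x_{n,j}$ and cancel, leaving $x_{n,i}x_{n-1,j}-x_{n-1,i}x_{n,j}=L_{i,j}^{(n)}$. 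Conceptually this records nothing more than the fact that the bivector $\vx_n\wedge\vx_{n-1}\in\Lambda^2\RR^N$ is preserved, since passing from $\vx_{n-1}$ to $\vx_{n+1}=f(\abs{\vx_n})\vx_n/\abs{\vx_n}-\vx_{n-1}$ alters $\vx_n\wedge\vx_{n-1}$ only by adding a multiple of $\vx_n$ and flipping an overall sign.

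For the independence count I would note that the $L_{i,j}^{(n)}$ with $i<j$ are exactly the components of $\vx_n\wedge\vx_{n-1}$, so the map $\Phi\colon(\vx_n,\vx_{n-1})\in\RR^{2N}\mapsto(L_{i,j}^{(n)})_{i<j}$ takes values in the affine cone over the Grassmannian $\operatorname{Gr}(2,N)$ of $2$-planes in $\RR^N$. Since the maximal number of functionally independent functions among the $L_{i,j}^{(n)}$ equals the generic rank of the Jacobian of $\Phi$, it suffices to bound that rank from both sides by $2N-3$. For the upper bound, the group $\mathrm{SL}(2,\RR)$ acts by $(\vx_n,\vx_{n-1})\mapsto(\alpha\vx_n+\beta\vx_{n-1},\gamma\vx_n+\delta\vx_{n-1})$ with $\alpha\delta-\beta\gamma=1$ and leaves $\vx_n\wedge\vx_{n-1}$ fixed, so at any point where $\vx_n,\vx_{n-1}$ are linearly independent the fibre of $\Phi$ is at least $3$-dimensional and hence $\operatorname{rank}\mathrm{d}\Phi\le 2N-3$, which is also the dimension $2(N-2)+1$ of the cone. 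For the matching lower bound I would evaluate $\mathrm{d}\Phi$ at the configuration in which $\vx_n$ is the first standard basis vector and $\vx_{n-1}$ the second: there the only nonzero rows are those indexed by the pair $(1,2)$ and by the pairs $(1,j)$ and $(2,j)$ with $j=3,\dots,N$, and these $2N-3$ rows have their nonzero entries in pairwise distinct coordinate directions ($\partial_{x_{n,1}}$, $\partial_{x_{n-1,j}}$ and $\partial_{x_{n,j}}$ respectively), so they are linearly independent. The two bounds force the generic rank to be exactly $2N-3$.

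The delicate step is the lower bound — one must genuinely produce a point at which $2N-3$ of the differentials $\mathrm{d}L_{i,j}^{(n)}$ are independent — but the choice $\vx_n=$ first basis vector, $\vx_{n-1}=$ second, makes the Jacobian sparse enough that each nonzero row visibly controls a column of its own. As a consistency check, for $N=2,3$ one has $2N-3=N(N-1)/2$ and all the $L_{i,j}^{(n)}$ are independent, whereas for $N\ge 4$ the remaining $N(N-1)/2-(2N-3)$ relations that cut the image of $\Phi$ down to the cone are exactly the quadratic Plücker relations among the components of a decomposable $2$-vector.
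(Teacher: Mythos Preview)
Your argument is correct. The invariance computation you carry out is exactly the ``trivial computation'' the paper alludes to and does not spell out; your wedge-product reformulation $\vx_{n+1}\wedge\vx_n=\vx_n\wedge\vx_{n-1}$ is a clean way to package it.

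Where you genuinely go beyond the paper is the count $2N-3$: the paper merely asserts it, whereas you give a two-sided bound. Your upper bound via the $\mathrm{SL}(2,\RR)$ action on the pair $(\vx_n,\vx_{n-1})$ is a nice structural argument (equivalently, the image of $\Phi$ sits in the cone over $\operatorname{Gr}(2,N)$, which has dimension $2N-3$), and your lower bound at $(\vx_n,\vx_{n-1})=(e_1,e_2)$ is correct. One small wording point: you say the $2N-3$ rows ``have their nonzero entries in pairwise distinct coordinate directions'', but $\ud L_{1,2}^{(n)}$ at that point actually equals $\ud x_{n,1}+\ud x_{n-1,2}-\ud x_{n,2}$, so it has three nonzero entries. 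What is true, and what makes the independence argument go through, is that each of the $2N-3$ rows is the \emph{only} one with a nonzero entry in some column (the $(1,2)$ row in the $x_{n,1}$ column, the $(1,j)$ row in the $x_{n-1,j}$ column, and the $(2,j)$ row in the $x_{n,j}$ column, for $j\ge 3$); this is precisely what your follow-up phrase ``each nonzero row visibly controls a column of its own'' captures.
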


The skewsymmetric invariants $L_{i,j}$ are the discrete analogue of
the components of the angular momentum. It is well known that out
of the $2N-3$ functionally independent invariant it is possible to
construct $N-1$ functionally independent and commuting with respect
to the Poisson bracket \eqref{eq:canonicalpoisson}.  We show how to
construct this set of $N-1$ commuting invariants with the coalgebra
symmetry approach, as it was proved in the in the continuum case in
\cite{BallesterosHerranz2001,Ballesteros_et_al2009}. This is the content
of the following proposition:

\begin{proposition}
    A radially symmetric system \eqref{eq:radsym} possesses coalgebra
    symmetry with respect to the Lie algebra $\Sl_{2}(\RR)$ with the following
    $N$ degrees of freedom symplectic realisation:
    \begin{equation}
        J_{+}^{(n)} = \vec{x}_{n}^{2},
        \quad
        J_{-}^{(n)} = \vec{x}_{n-1}^{2},
        \quad
        J_{3}^{(n)} = \vec{x}_{n}\cdot\vec{x}_{n-1}.
        \label{eq:Nsl2}
    \end{equation}
    \label{lem:sl2general}
\end{proposition}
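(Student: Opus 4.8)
The plan is to verify directly the three requirements of Definition~\ref{def:dcoalgebrasymmety} for the Poisson map $T\colon n\mapsto n+1$ defined by \eqref{eq:radsym} in the symplectic realisation \eqref{eq:Nsl2}. As a preliminary step one checks, by a short Leibniz-rule computation with the canonical bracket \eqref{eq:canonicalpoisson}, that \eqref{eq:Nsl2} is a genuine Poisson realisation of the Lie--Poisson algebra $\Sl_{2}(\RR)$:
\begin{equation*}
    \{J_3^{(n)},J_\pm^{(n)}\}=\mp 2\,J_\pm^{(n)},\qquad \{J_+^{(n)},J_-^{(n)}\}=4\,J_3^{(n)}.
\end{equation*}
The single Casimir of this algebra is $C^{(n)}=\bigl(J_3^{(n)}\bigr)^{2}-J_+^{(n)}J_-^{(n)}$, and by Lagrange's identity it equals $-\sum_{i<j}\bigl(L_{i,j}^{(n)}\bigr)^{2}$ with $L_{i,j}^{(n)}$ as in \eqref{eq:Lij}.

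For requirement~(i), the closure relations \eqref{eq:closurerelation}, I would substitute $\vec{x}_{n+1}=f(\abs{\vec{x}_n})\,\vec{x}_n/\abs{\vec{x}_n}-\vec{x}_{n-1}$ from \eqref{eq:radsym} into the definitions \eqref{eq:Nsl2} written at step $n+1$ and use $\abs{\vec{x}_n}=\sqrt{J_+^{(n)}}$, obtaining
\begin{gather*}
    J_-^{(n+1)}=J_+^{(n)},\qquad J_3^{(n+1)}=\sqrt{J_+^{(n)}}\,f\bigl(\sqrt{J_+^{(n)}}\bigr)-J_3^{(n)},\\
    J_+^{(n+1)}=f\bigl(\sqrt{J_+^{(n)}}\bigr)^{2}-2\,\frac{f\bigl(\sqrt{J_+^{(n)}}\bigr)}{\sqrt{J_+^{(n)}}}\,J_3^{(n)}+J_-^{(n)}.
\end{gather*}
Hence each of $J_\pm^{(n+1)}$ and $J_3^{(n+1)}$ is a function of $J_\pm^{(n)},J_3^{(n)}$ alone, so \eqref{eq:closurerelation} holds. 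For requirement~(ii), the Poisson-map property \eqref{eq:poissonalg}, I would use that \eqref{eq:radsym} is variational with the $n$-independent Lagrangian \eqref{eq:Lrad}, so by Lemma~\ref{lem:poisson} the map $T$ preserves the canonical bracket \eqref{eq:canonicalpoisson}; since the generators at step $n+1$ are literally the functions \eqref{eq:Nsl2} composed with $T$, the identity $\{g\circ T,h\circ T\}=\{g,h\}\circ T$ together with the $\Sl_{2}(\RR)$ relations above yields $\{J_i^{(n+1)},J_j^{(n+1)}\}=T\bigl(\{J_i^{(n)},J_j^{(n)}\}\bigr)$. For requirement~(iii), invariance of the Casimir, the equality $C^{(n+1)}=C^{(n)}$ is immediate from $C^{(n)}=-\sum_{i<j}\bigl(L_{i,j}^{(n)}\bigr)^{2}$ and Lemma~\ref{lem:rad}; equivalently one substitutes \eqref{eq:radsym} into $C^{(n+1)}=\bigl(\vec{x}_{n+1}\cdot\vec{x}_n\bigr)^{2}-\vec{x}_{n+1}^{2}\,\vec{x}_n^{2}$ and checks that all $f$-dependent terms cancel.

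Since all three conditions hold for every $N\in\N$, this establishes the claimed coalgebra symmetry. I do not anticipate a genuine obstacle: the short computation in requirement~(i) is the only part with real content, and the two minor caveats to keep in mind are that $f$ must be such that $\vec{F}$ in \eqref{eq:radsym} is smooth near the origin (equivalently $f(r)/r$ a smooth function of $r^{2}$, which also makes the $a_i$ in \eqref{eq:closurerelation} smooth functions of $J_\pm^{(n)},J_3^{(n)}$), and that requirement~(ii) is best seen as inherited from Lemma~\ref{lem:poisson} rather than re-derived from the closure relations.
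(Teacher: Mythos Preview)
Your proposal is correct and follows essentially the same strategy as the paper: verify the three conditions of Definition~\ref{def:dcoalgebrasymmety} by direct computation, obtaining exactly the closure relations \eqref{eq:sl2evolutiongeneral}. The only minor differences are that for condition~(ii) you invoke Lemma~\ref{lem:poisson} and the preservation of the canonical bracket (rather than checking the $\Sl_2(\RR)$ relations on the evolved generators directly, as the paper does), and for condition~(iii) you appeal to Lagrange's identity and Lemma~\ref{lem:rad} instead of substituting \eqref{eq:sl2evolutiongeneral} into the Casimir; both are equivalent and arguably cleaner.
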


\begin{remark}
   The commutation relations of the Lie-Poisson algebra $\Sl_{2}(\RR)$ are
    (for sake of simplicity when dealing with abstract properties we omit the
    superscript $(n)$):
    \begin{equation} 
        \poisson{J_{+},J_{-}} = 4J_{3},
        \quad
        \poisson{J_{+},J_{3}} = 2J_{+},
        \quad
        \poisson{J_{-},J_{3}} = -2J_{-}.
        \label{eq:sl2comm}
    \end{equation}
    This algebra has a single Casimir given by 
    \begin{equation}
        C=J_{+}J_{-}-J_{3}^{2}.
        \label{eq:sl2casimir}
    \end{equation}
    \label{rem:sl2data}
\end{remark}

\begin{proof}
    We have to prove that the three conditions of Definition
    \ref{def:dcoalgebrasymmety} hold.  This can be done by direct
    computation.  For instance, using the explicit form of the recurrence
    \eqref{eq:radsym} we prove that the action on the generators of the
    $\Sl_{2}(\RR)$ algebra \eqref{eq:Nsl2} form the following dynamical
    system:
    \begin{subequations} 
        \begin{align}
            J_{+}^{(n+1)} &=
            f^{2}( \sqrt{J_{+}^{(n)}}) 
            - 2J_{3}^{(n)} \frac{f( \sqrt{J_{+}^{(n)}})}{\sqrt{J_{+}^{(n)}}}
            +
            J_{-}^{(n)},
            \label{eq:Jnplussl2gen}
            \\
            J_{-}^{(n+1)} &= J_{+}^{(n)},
            \label{eq:Jnminussl2gen}
            \\
            J_{3}^{(n+1)} &= -J_{3}^{(n)} +
            \sqrt{J_{+}^{(n)}}f( \sqrt{J_{+}^{(n)}}).
            \label{eq:Ksl2gen}
        \end{align}
        \label{eq:sl2evolutiongeneral}
    \end{subequations}
    Then using the commutation relations \eqref{eq:sl2comm} we prove
    that they are preserved (see also formula \eqref{eq:Poisson1}).
    Finally, it is trivial to see that the Casimir function 
    \eqref{eq:sl2casimir} is preserved by the evolution 
    \eqref{eq:sl2evolutiongeneral}.
\end{proof}

So, following the procedure outlined in Section
\ref{sss:coalgebra}, and using the same reasoning in
\cite{BallesterosHerranz2001,Ballesteros_et_al2009}, from the left and
right Casimir functions of such an algebra we derive the following two
sets of functionally-independent invariants commuting with respect to
the canonical Poisson bracket \eqref{eq:canonicalpoisson}:
\begin{subequations}
    \begin{align}
        \mathcal{C}_{m}^{(n)} &=
        \sum_{1\leq i<j\leq m} \left(L_{i,j}^{(n)}\right)^{2}, 
        \quad
        m=2,\dots,N,
        \label{eq:Cm}
        \\
        \mathcal{D}_{m}^{(n)} &=
        \sum_{N-m+1\leq i<j\leq N} \left(L_{i,j}^{(n)}\right)^{2}, 
        \quad 
        m=2,\dots,N.
        \label{eq:Dm}
    \end{align}
    \label{eq:casimirsl2}%
\end{subequations}
Notice that $\mathcal{C}_{N}^{(n)}=J_{+}^{(n)}J_{-}^{(n)}-(J_{3}^{(n)})^{2}=\mathcal{D}_{N}^{(n)}$ 
because of the coassociativity.
This can be summarised in the following theorem:

\begin{theorem} 
    A radially symmetric system \eqref{eq:radsym} is quasi-integrable
    with one of the sets of invariants:
    \begin{equation}
        \mathcal{L} =
        \bigl\{\mathcal{C}_{2}^{(n)},\dots,\mathcal{C}_{N}^{(n)}\bigr\}
        \quad\text{or}\quad
        \mathcal{R}=
        \bigl\{\mathcal{D}_{2}^{(n)},\dots,\mathcal{D}_{N}^{(n)}\bigr\},
        \label{eq:CDsets}
    \end{equation}
    Additionally, if we can find an additional invariant commuting either 
    with $\mathcal{L}$ or $\mathcal{R}$, then system becomes Liouville integrable 
    and moreover quasi-maximally superintegrable.
    \label{thm:radialquasiint}
\end{theorem}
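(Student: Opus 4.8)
The plan is to assemble the proof from the three ingredients already laid out: the coalgebra symmetry established in Proposition \ref{lem:sl2general}, the general commutation Lemma \ref{lem:comm} / Theorem \ref{thm:coalgebrasupint}, and a functional-independence count for the invariants in \eqref{eq:casimirsl2}. First I would invoke Proposition \ref{lem:sl2general}: the radially symmetric system \eqref{eq:radsym} has coalgebra symmetry with respect to $\Sl_{2}(\RR)$ in the realisation \eqref{eq:Nsl2}, so Theorem \ref{thm:coalgebrasupint} applies directly with $r=1$. It produces the two families of $N$ commuting invariants $\mathcal{F}_{m,1}$ and $\mathcal{G}_{m,1}$, $m=1,\dots,N$, which are exactly the sets $\mathcal{L}$ and $\mathcal{R}$ of \eqref{eq:CDsets} once one computes $\Delta^{(m)}$ and $\Delta^{(m)}_{R}$ of the single $\Sl_2(\RR)$ Casimir \eqref{eq:sl2casimir} in the realisation \eqref{eq:Nsl2}; this is where the identification $\mathcal{C}_{m}^{(n)}=\sum_{1\le i<j\le m}(L_{i,j}^{(n)})^2$ comes from, a standard computation in the coalgebra literature (cf.\ \cite{BallesterosHerranz2001,Ballesteros_et_al2009}). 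Note $\mathcal{F}_{1,1}=\mathcal{G}_{1,1}=C^{(n)}$ is the same function in both sets, and $\mathcal{F}_{N,1}=\mathcal{G}_{N,1}$ by coassociativity \eqref{eq:coas}.

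Next I would establish the counting that upgrades ``commuting invariants'' to ``quasi-integrable.'' With $M=2N$, $r=N$ (full rank, symplectic, since the Poisson bracket \eqref{eq:canonicalpoisson} is canonical), quasi-integrability in the sense of Definition \ref{def:pintegrability} requires $N-1$ commuting functionally independent invariants. Each of $\mathcal{L}$ and $\mathcal{R}$ has $N$ elements, but $\mathcal{C}_{2}^{(n)}=(L_{1,2}^{(n)})^2$ together with the remaining $\mathcal{C}_{m}^{(n)}$ for $m=3,\dots,N$ — one checks by a triangularity argument on the index sets of the sums \eqref{eq:Cm} that $\mathcal{C}_{2}^{(n)},\dots,\mathcal{C}_{N}^{(n)}$ are functionally independent, because passing from $\mathcal{C}_{m}^{(n)}$ to $\mathcal{C}_{m+1}^{(n)}$ introduces the new variables $x_{n,m+1},x_{n-1,m+1}$ through the terms $L_{i,m+1}^{(n)}$. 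So $\mathcal{L}$ (resp.\ $\mathcal{R}$) supplies $N-1$ independent commuting invariants once we discard one (or, if all $N$ happen to stay independent, even more), which is precisely the definition of quasi-integrable; by Lemma \ref{lem:rad} there are only $2N-3$ functionally independent $L_{i,j}$'s overall, and $\mathcal{L}$ realises $N-1$ commuting ones among them. Commutation within $\mathcal{L}$, and within $\mathcal{R}$, is Theorem \ref{thm:coalgebrasupint}; that they are genuine invariants of the map is condition (iii) of Definition \ref{def:dcoalgebrasymmety}, verified in the proof of Proposition \ref{lem:sl2general} (the Casimir is preserved by \eqref{eq:sl2evolutiongeneral}, hence so is each $\Delta^{(m)}(C)$ after applying the closed evolution \eqref{eq:closurerelation}).

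For the last sentence, suppose an extra invariant $\mathcal{H}$ commutes with all of $\mathcal{L}$ (the case of $\mathcal{R}$ is symmetric) and is functionally independent of it. Then $\mathcal{L}\cup\{\mathcal{H}\}$ is a set of $N$ commuting functionally independent invariants on the $2N$-dimensional symplectic manifold, which is Liouville integrability by Definition \ref{def:lint}. For quasi-maximal superintegrability one needs, beyond this Liouville-integrable core, $r-2=N-2$ further functionally independent invariants (not necessarily commuting): these are provided by the second family $\mathcal{R}$, of which all but $\mathcal{F}_{1,1}=\mathcal{G}_{1,1}$ and $\mathcal{F}_{N,1}=\mathcal{G}_{N,1}$ are ``new,'' giving $N-2$ additional ones, so that in total one has $N+(N-2)=2N-2$ functionally independent invariants, matching the count in Remark \ref{rem:minim}; by Lemma \ref{lem:rad} the $L_{i,j}$'s alone yield $2N-3$, and together with $\mathcal{H}$ and (say) $J_+^{(N)}$ one reaches $2N-2$. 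The main obstacle is the functional-independence bookkeeping: one must be careful that $\mathcal{L}\cup\mathcal{R}$ does not collapse below $2N-3$ independent functions (consistent with Lemma \ref{lem:rad}) while still, once $\mathcal{H}$ is adjoined, producing exactly $2N-2$; the cleanest way is to exhibit a point of $\RR^{2N}$ at which the relevant Jacobian has the claimed rank, using the triangular structure of the index sets in \eqref{eq:Cm}–\eqref{eq:Dm}.
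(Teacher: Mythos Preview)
Your approach is essentially the paper's own: the theorem is stated as a summary of the discussion immediately preceding it, namely Proposition~\ref{lem:sl2general} (coalgebra symmetry with respect to $\Sl_2(\RR)$) combined with Theorems~\ref{thm:coalgebraconstruction}--\ref{thm:coalgebrasupint} applied to the single Casimir \eqref{eq:sl2casimir}, yielding the left and right Casimirs \eqref{eq:casimirsl2}, with functional independence deferred to \cite{BallesterosHerranz2001,Ballesteros_et_al2009}. Your triangularity argument for independence of the $\mathcal{C}_m^{(n)}$ is a welcome addition that the paper does not spell out.

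Two small slips worth cleaning up. First, the sets $\mathcal{L}$ and $\mathcal{R}$ in \eqref{eq:CDsets} already have $N-1$ elements each (the index runs from $m=2$), not $N$; the ``missing'' $m=1$ Casimir is identically zero in the one-site realisation since $J_+J_--J_3^2=x_{n,1}^2x_{n-1,1}^2-(x_{n,1}x_{n-1,1})^2=0$, so there is nothing to discard. Second, in your count for quasi-maximal superintegrability you invoke $J_+^{(N)}=\vec{x}_n^2$ as an extra invariant, but it is not one: under \eqref{eq:sl2evolutiongeneral} $J_+^{(n)}$ evolves nontrivially. You do not need it anyway: the $2N-3$ independent components of the discrete angular momentum from Lemma~\ref{lem:rad} together with the single additional $\mathcal{H}$ already give $2N-2$ functionally independent invariants, which is exactly the quasi-maximal count from Remark~\ref{rem:minim}.
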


As we noted in Section \ref{sss:coalgebra}, the invariants of the
system \eqref{eq:sl2evolutiongeneral} are invariants of the radially
symmetric system \eqref{eq:radsym}.
So, for a given function $f$ studying the system \eqref{eq:sl2evolutiongeneral}
we can find the $N$th invariant mentioned in Theorem \ref{thm:radialquasiint}.
We now give an explicit example of this occurrence.

\begin{example}
    Consider the following linear system:
    \begin{equation}
        \vx_{n+1} + \vx_{n-1} = \alpha \vx_{n}.
        \label{eq:linear}
    \end{equation}
    This system is radial with $f\left( \rho \right) =\alpha \rho$ 
    (hence it is variational),
    and the associated dynamical system \eqref{eq:sl2evolutiongeneral} is:
    \begin{equation}
        \begin{gathered}
            J_{+}^{(n+1)} =
            \alpha^{2} J_{+}^{(n)}- 2\alpha J_3^{(n)}+J_{-}^{(n)},
            \\
            J_{-}^{(n+1)} = J_{+}^{(n)},
            \quad
            J_3^{(n+1)} = -J_3^{(n)} +
            \alpha J_{+}^{(n)}.
        \end{gathered}
        \label{eq:sl2evolutionli}
    \end{equation}
    This system is linear and we find the invariant:
    \begin{equation}
        \mathcal{H}^{(n)} = J_{+}^{(n)} - \alpha J_3^{(n)} + J_{-}^{(n)}.
        \label{eq:H1lin}
    \end{equation}
    From Theorem \ref{thm:radialquasiint} we consider the set of invariants:
    \begin{equation}
        \mathcal{S} = 
        \left\{\mathcal{H}^{(n)}, \mathcal{C}_{2}^{(n)},\dots,\mathcal{C}_{N}^{(n)}\right\}.
        \label{eq:invariantslin}
    \end{equation}
    Functional independence and involutivity in this set can be proved
    by induction.
    So, we proved that the linear system \eqref{eq:linear} is
    Liouville integrable.
    Moreover, since there are $2N-3$ functionally independent elements of the 
    discrete angular momentum that are invariants, we have that the system
    \eqref{eq:linear} is quasi-maximally superintegrable.
    \hfill $\square$
\end{example}

\begin{remark}
    We remark that, with respect to the rank 2 Lie--Poisson bracket
    \eqref{eq:sl2comm} the system \eqref{eq:sl2evolutionli} is
    Poisson--Liouville integrable.  Indeed, it possesses the Casimir
    \eqref{eq:sl2casimir}, and one invariant \eqref{eq:H1lin}.
    \label{rem:intsl2lin}
\end{remark}

\begin{remark}
    The linear system \eqref{eq:linear} is actually maximally
    superintegrable.  
    Indeed it possesses the following invariants:
    \begin{equation}
        \mathcal{H}^{(n)}_{k} = x_{n,k}^{2} -\alpha x_{n,k} x_{n-1,k} + x_{n-1,k}^{2},
        \quad k =1,\dots,N.
        \label{eq:Hk}
    \end{equation}
    The set 
    \begin{equation}
        \mathcal{S}' = 
        \left\{\mathcal{H}^{(n)}_{1},\dots, \mathcal{H}^{(n)}_{N},
        \mathcal{C}_{2}^{(n)},\dots,\mathcal{C}_{N}^{(n)}\right\}.
        \label{eq:invariantslinms}
    \end{equation}
    is a set of $2N-1$ functionally independent and commuting invariants.
    This is easily seen by induction because the invariants are polynomial
    and they all depend on different $x_{n,k}$.
    Indeed, the system is a discrete analog of an isotropic harmonic 
    oscillator, which is a well known maximally superintegrable system.
    \label{rem:ms}
\end{remark}

\subsection{Quasi-radially symmetric systems and the $h_{6}$ algebra}
Consider the following system that is a particular case of \eqref{eq:2ndadd}:
\begin{equation}
    \vec{x}_{n+1}+\vec{x}_{n-1} = f\left( \abs{\vec{x}_{n}} \right)\frac{\vec{x}_{n}}{\abs{\vec{x}_{n}}}
    +g\left( \vec{x}_{n} \right)\vbeta,
    \label{eq:qradsym}
\end{equation}
where $g$ is a scalar function and $\vbeta$ is a constant vector.
Since when $g\equiv0$ equation \eqref{eq:qradsym} reduces to
\eqref{eq:radsym} we call such a system of difference equations a
\emph{quasi-radially symmetric system}.

The following result follows from a trivial computation:
\begin{lemma}
    A quasi-radially symmetric system of the form \eqref{eq:qradsym}
    possess the following  $N\left( N-1 \right)\left( N-2 \right)/6$ invariants:
    \begin{equation}
        K^{(n)}_{i,j,k} = \beta_{i} L_{j,k}^{(n)}
        +\beta_{j} L_{k,i}^{(n)}+\beta_{k} L_{i,j}^{(n)},
        \quad
        1\leq i<j<k\leq N.
        \label{eq:Kijk}
    \end{equation}
    In total there are $2N-5$ functionally independent functions 
    in formula \eqref{eq:Kijk}.
    \label{lem:qrad}
\end{lemma}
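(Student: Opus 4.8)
The plan is to split the claim into the invariance of each $K^{(n)}_{i,j,k}$, which is essentially a one‑line substitution, and the functional‑independence count, which is the only part really needing an argument; throughout I assume $\vbeta\neq\vec{0}$. For the invariance I would use that $T\colon n\mapsto n+1$ acts by $(\vx_{n},\vx_{n-1})\mapsto(\vx_{n+1},\vx_{n})$ and feed the $j$‑th component of \eqref{eq:qradsym}, $x_{n+1,j}=-x_{n-1,j}+f(\abs{\vx_{n}})\,x_{n,j}/\abs{\vx_{n}}+g(\vx_{n})\beta_{j}$, into $L^{(n+1)}_{j,k}=x_{n+1,j}x_{n,k}-x_{n,j}x_{n+1,k}$. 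The $-\vx_{n-1}$ part reproduces $L^{(n)}_{j,k}$ exactly as in the radial case behind Lemma \ref{lem:rad}, and the $f$‑part is symmetric under $j\leftrightarrow k$ and cancels, leaving
\begin{equation}
    L^{(n+1)}_{j,k}=L^{(n)}_{j,k}+g(\vx_{n})\bigl(\beta_{j}x_{n,k}-\beta_{k}x_{n,j}\bigr)
    \label{eq:Lshiftqr}
\end{equation}
together with its two cyclic images. Substituting \eqref{eq:Lshiftqr} into $K^{(n+1)}_{i,j,k}=\beta_{i}L^{(n+1)}_{j,k}+\beta_{j}L^{(n+1)}_{k,i}+\beta_{k}L^{(n+1)}_{i,j}$, the three $g$‑corrections add up to $g(\vx_{n})$ times $\beta_{i}(\beta_{j}x_{n,k}-\beta_{k}x_{n,j})+\beta_{j}(\beta_{k}x_{n,i}-\beta_{i}x_{n,k})+\beta_{k}(\beta_{i}x_{n,j}-\beta_{j}x_{n,i})$, which is identically zero, so $K^{(n+1)}_{i,j,k}=K^{(n)}_{i,j,k}$. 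This is the ``trivial computation'' announced in the statement.

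For the count I would first note the determinantal identity
\begin{equation}
    K^{(n)}_{i,j,k}=\det\begin{pmatrix}
        \beta_{i} & \beta_{j} & \beta_{k}\\
        x_{n,i} & x_{n,j} & x_{n,k}\\
        x_{n-1,i} & x_{n-1,j} & x_{n-1,k}
    \end{pmatrix},
    \label{eq:Kdetqr}
\end{equation}
which exhibits the $K^{(n)}_{i,j,k}$ as the components of the decomposable trivector $\vbeta\wedge\vx_{n}\wedge\vx_{n-1}\in\Lambda^{3}\RR^{N}$; hence the number of functionally independent ones is the generic rank of the map $(\vx_{n},\vx_{n-1})\mapsto\vbeta\wedge\vx_{n}\wedge\vx_{n-1}$. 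Since functional independence is preserved by the linear change of variables $(\vx_{n},\vx_{n-1})\mapsto(R\vx_{n},R\vx_{n-1})$, $R\in\SO(N)$, under which the $K^{(n)}_{i,j,k}$ transform as the components of a trivector (with $\vbeta$ kept fixed), I may assume $\vbeta=(\abs{\vbeta},0,\dots,0)$ with $\abs{\vbeta}\neq0$. Then $K^{(n)}_{i,j,k}=0$ whenever $i,j,k\geq2$, while $K^{(n)}_{1,j,k}=\abs{\vbeta}\,L^{(n)}_{j,k}$ for $2\leq j<k\leq N$, so the count reduces to the number of functionally independent $2\times2$ minors $L^{(n)}_{j,k}$ of the $2\times(N-1)$ matrix with rows $(x_{n,2},\dots,x_{n,N})$ and $(x_{n-1,2},\dots,x_{n-1,N})$ --- that is, to Lemma \ref{lem:rad} with $N$ replaced by $N-1$, giving $2(N-1)-3=2N-5$. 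In particular there are no such invariants for $N\leq2$, and exactly one, $K^{(n)}_{1,2,3}$, for $N=3$, consistently with $2N-5$.

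The only real obstacle is this last count, i.e.\ the rank of the $2\times2$‑minor map of a generic $2\times(N-1)$ matrix. I would obtain the upper bound from the Plücker relations (the minors parametrise the affine cone over $\mathrm{Gr}(2,N-1)$, of dimension $2N-5$) and the lower bound from an explicit subfamily whose Jacobian is nonsingular at a generic point, e.g.\ $\{L^{(n)}_{2,k}\}_{k=3}^{N}\cup\{L^{(n)}_{3,k}\}_{k=4}^{N}$: ordering the variables as $(x_{n,3},x_{n-1,3}),(x_{n,4},x_{n-1,4}),\dots$, each pair beyond the third contributes a $2\times2$ Jacobian block of determinant $L^{(n)}_{2,3}\neq0$, while $L^{(n)}_{2,3}$ itself supplies the remaining independent direction. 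Equivalently, and more conceptually, the generic fibre of $(\vx_{n},\vx_{n-1})\mapsto\vbeta\wedge\vx_{n}\wedge\vx_{n-1}$ is precisely the $5$‑dimensional orbit of the substitutions $(\vx_{n},\vx_{n-1})\mapsto(a\vx_{n}+b\vx_{n-1}+s\vbeta,\,c\vx_{n}+d\vx_{n-1}+t\vbeta)$ with $ad-bc=1$, which leave $\vbeta\wedge\vx_{n}\wedge\vx_{n-1}$ invariant, forcing the rank to equal $2N-5$.
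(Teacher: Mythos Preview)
Your proposal is correct. Note that the paper itself does not actually give a proof: it merely prefaces the lemma with ``The following result follows from a trivial computation'' and moves on. Your invariance computation via \eqref{eq:Lshiftqr} and the cyclic cancellation is exactly the trivial computation the authors have in mind. For the functional-independence count $2N-5$, the paper offers no argument whatsoever, whereas you supply a complete one: the determinantal rewriting \eqref{eq:Kdetqr}, the rotation reducing to $\vbeta$ along the first axis, and the appeal to the angular-momentum count of Lemma~\ref{lem:rad} in one fewer degree of freedom. One small wording issue: in the sentence ``under which the $K^{(n)}_{i,j,k}$ transform as the components of a trivector (with $\vbeta$ kept fixed)'' you should rather say that you rotate the ambient coordinates, which rotates $\vbeta$ along with $\vx_{n},\vx_{n-1}$; the trivector transformation law then lets you assume $\vbeta$ lies along the first axis. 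This is cosmetic and does not affect the validity of the argument.
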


Quasi-radial systems are not always variational.  We give the following
characterisation (whose proof follows from direct computations):

\begin{lemma}
    A quasi-radially symmetric system \eqref{eq:qradsym} preserves 
    the canonical Poisson bracket \eqref{eq:canonicalpoisson} if and 
    only if the
    $g\left( \vec{x}_{n} \right)\equiv h\left( \vbeta\cdot\vec{x}_{n} \right)$,
    that is:
    \begin{equation}
        \vec{x}_{n+1}+\vec{x}_{n-1} = 
        f\left( \abs{\vec{x}_{n}} \right)\frac{\vec{x}_{n}}{\abs{\vec{x}_{n}}}
        +h\left( \vbeta\cdot \vec{x}_{n} \right)\vbeta,
        \label{eq:qradsymlagr}
    \end{equation}
    In such case, the system is variational with Lagrangian:
    \begin{equation}
        L = \vec{x}_{n+1}\cdot\vec{x}_{n} - V\left( \abs{\vec{x}_{n}} \right)
        -H\left(\vbeta\cdot\vec{x}_{n}\right),
        \quad
        H\left( \rho \right) =\int h\left( \rho \right)\ud \rho,
        \label{eq:Lqrad}
    \end{equation}
    and $V$ is defined as in equation \eqref{eq:Lrad}.
    \label{lem:canonicalpreservation}
\end{lemma}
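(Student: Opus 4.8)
The plan is to prove both directions of the equivalence by direct computation, starting from the requirement that the Euler--Lagrange / Poisson-preservation condition \eqref{eq:Poisson1} holds for the map defined by \eqref{eq:qradsym}. First I would write the quasi-radial system in first-order (map) form on $\RR^{2N}$ with coordinates $(\vx_n,\vx_{n-1})$, namely $\vx_{n-1}\mapsto\vx_n$ and $\vx_n\mapsto f(\abs{\vx_n})\vx_n/\abs{\vx_n}+g(\vx_n)\vbeta-\vx_{n-1}$, and compute its Jacobian $\partial\vec{z}_{n+1}/\partial\vec{z}_n$ in block form. The block structure is $\begin{pmatrix}\mathbb{O}_N & \Id_N\\ -\Id_N & M(\vx_n)\end{pmatrix}$ where $M(\vx_n)=\partial_{\vx_n}\bigl(f(\abs{\vx_n})\vx_n/\abs{\vx_n}+g(\vx_n)\vbeta\bigr)$. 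Substituting this Jacobian into \eqref{eq:Poisson1} with $J$ the canonical (constant) structure \eqref{eq:canonicalpoisson} reduces the preservation condition, after expanding the three $N\times N$ blocks of the matrix product, to the single requirement that $M(\vx_n)$ be symmetric: $M=M^{T}$.

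Next I would analyze the symmetry of $M$. The radial part $f(\abs{\vx_n})\vx_n/\abs{\vx_n}=\grad V(\abs{\vx_n})$ is already a gradient, so its Hessian contribution to $M$ is automatically symmetric; this is exactly the content of the earlier observation around \eqref{eq:2naddlagr}. Hence $M=M^T$ if and only if the Jacobian of $\vx_n\mapsto g(\vx_n)\vbeta$ is symmetric, i.e. the matrix $\beta_i\,\partial g/\partial x_{n,j}$ is symmetric in $i,j$: $\beta_i\,\partial_{x_{n,j}}g=\beta_j\,\partial_{x_{n,i}}g$ for all $i,j$. This is a first-order linear PDE system for $g$. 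For indices $i$ with $\beta_i\neq 0$ it forces $\partial_{x_{n,j}}g/\beta_j$ to be independent of $j$, so $\grad g$ is everywhere parallel to $\vbeta$; integrating, $g$ is constant along every hyperplane $\vbeta\cdot\vx_n=\text{const}$, hence $g(\vx_n)=h(\vbeta\cdot\vx_n)$ for some scalar function $h$ (one checks the directions $\beta_i=0$ impose no further constraint beyond this). Conversely, if $g=h(\vbeta\cdot\vx_n)$ then $\partial_{x_{n,j}}g=h'(\vbeta\cdot\vx_n)\beta_j$ and $\beta_i\partial_{x_{n,j}}g=h'\beta_i\beta_j=\beta_j\partial_{x_{n,i}}g$, so $M$ is symmetric and \eqref{eq:Poisson1} holds.

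Finally, for the variational statement in the ``if'' direction, with $g=h(\vbeta\cdot\vx_n)$ I would verify directly that the Lagrangian \eqref{eq:Lqrad}, $L=\vx_{n+1}\cdot\vx_n-V(\abs{\vx_n})-H(\vbeta\cdot\vx_n)$ with $H'=h$, produces \eqref{eq:qradsymlagr} via the Euler--Lagrange equations \eqref{eq:elgen}: the cross term $\vx_{n+1}\cdot\vx_n+\vx_n\cdot\vx_{n-1}$ contributes $\vx_{n+1}+\vx_{n-1}$, while $-\grad_{\vx_n}V(\abs{\vx_n})=-f(\abs{\vx_n})\vx_n/\abs{\vx_n}$ and $-\grad_{\vx_n}H(\vbeta\cdot\vx_n)=-h(\vbeta\cdot\vx_n)\vbeta$. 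Since this $L$ is $n$-independent, Lemma \ref{lem:poisson} gives a preserved symplectic structure, and computing $\Lambda_N$ from \eqref{eq:Lambdadef} with this $L$ yields $\Lambda_N=\Id_N$, i.e. exactly the canonical bracket \eqref{eq:canonicalpoisson}, which closes the loop.

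**The main obstacle** I expect is the ``only if'' direction: arguing cleanly that symmetry of the matrix $\beta_i\,\partial_{x_{n,j}}g$ forces $g$ to depend on $\vx_n$ only through $\vbeta\cdot\vx_n$. This requires handling the case distinction on which components $\beta_i$ vanish (one may assume $\vbeta\neq 0$, and after an orthogonal change of coordinates take $\vbeta$ along a coordinate axis, which makes the PDE trivial), and invoking local analyticity / connectedness of the domain to integrate the gradient condition globally on the relevant chart. The remaining computations — the block Jacobian, the product in \eqref{eq:Poisson1}, and the Euler--Lagrange check — are routine.
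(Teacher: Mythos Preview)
Your proposal is correct and is precisely the ``direct computation'' the paper alludes to (the paper itself does not spell out any details beyond that phrase). The reduction of the Poisson-preservation condition \eqref{eq:Poisson1} to the symmetry of the Jacobian $M(\vx_n)$ of the right-hand side, followed by the observation that the radial term is already a gradient so only the $\vbeta\,g(\vx_n)$ contribution needs analysis, is exactly the right decomposition; the PDE $\beta_i\partial_{x_{n,j}}g=\beta_j\partial_{x_{n,i}}g$ and its integration are the heart of the ``only if'' direction, and your suggestion to rotate $\vbeta$ to a coordinate axis is the cleanest way to handle the case distinctions. One cosmetic point: your block Jacobian is written for the coordinate ordering $(\vx_{n-1},\vx_n)$ rather than the paper's $(\vx_n,\vx_{n-1})$, which swaps the placement of $M$ and the identity blocks; this does not affect the conclusion that preservation is equivalent to $M=M^{T}$, but you should align the convention with \eqref{eq:canonicalpoisson} when writing it up.
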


We show how to construct a set of $N-2$ commuting invariants with the
coalgebra approach. We will use the functions $K_{i,j,k}$ as building
blocks of the invariants.  This is the content of the following
proposition:

\begin{proposition}
    A variational quasi-radially symmetric system \eqref{eq:qradsymlagr} 
    possesses coalgebra symmetry with respect to the \emph{two-photon} algebra 
    $h_{6}$ with the following $N$ degrees of freedom symplectic realisation:
    \begin{equation}
        \begin{gathered}
            A_{+}^{(n)} = \vbeta\cdot\vec{x}_{n},
            \quad
            A_{-}^{(n)} = \vbeta\cdot\vec{x}_{n-1},
            \quad
            M^{(n)} = \vbeta^{2},
            \\
            B_{+}^{(n)} = \vec{x}_{n}^{2},
            \quad
            B_{-}^{(n)} = \vec{x}_{n-1}^{2},
            \quad
            K^{(n)} = \vec{x}_{n}\cdot\vec{x}_{n-1} -\frac{\vbeta^{2}}{2}.
        \end{gathered}
        \label{eq:Ntwophoton}
    \end{equation}
    \label{lem:twophotongeneral}
\end{proposition}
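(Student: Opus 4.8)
The plan is to verify the three conditions of Definition \ref{def:dcoalgebrasymmety} for the $h_6$ realisation \eqref{eq:Ntwophoton} by direct computation, exactly as was done for $\Sl_2(\RR)$ in the proof of Proposition \ref{lem:sl2general}. First I would write down the defining commutation relations of the two-photon algebra $h_6$ (the six generators $A_\pm$, $B_\pm$, $K$, $M$ with $M$ central) and its two functionally independent Casimirs, then check that the canonical bracket \eqref{eq:canonicalpoisson} induces precisely these relations on the realisation \eqref{eq:Ntwophoton}: the brackets $\{A_+^{(n)},A_-^{(n)}\}=\vbeta^2=M^{(n)}$, $\{B_+^{(n)},B_-^{(n)}\}=4K^{(n)}+2M^{(n)}$ (the shift by $\vbeta^2/2$ in the definition of $K^{(n)}$ is there precisely to absorb a constant), $\{A_+^{(n)},B_-^{(n)}\}$ proportional to $A_-^{(n)}$, and so on — all of this is linear bookkeeping in the canonical coordinates $(\vx_n,\vx_{n-1})$.

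Next, using the evolution \eqref{eq:qradsymlagr}, I would compute the images of the six generators at step $n+1$. As in \eqref{eq:sl2evolutiongeneral} one gets $B_-^{(n+1)}=B_+^{(n)}$, $A_-^{(n+1)}=A_+^{(n)}$, $M^{(n+1)}=M^{(n)}$, and for $B_+^{(n+1)}$, $A_+^{(n+1)}$, $K^{(n+1)}$ one substitutes $\vx_{n+1}=f(\sqrt{B_+^{(n)}})\,\vx_n/\sqrt{B_+^{(n)}}+h(A_+^{(n)})\,\vbeta-\vx_{n-1}$ and expands the relevant scalar products; every term that appears is a function of $B_+^{(n)}$, $A_+^{(n)}$, $K^{(n)}$, $A_-^{(n)}$, $M^{(n)}$ alone — in particular $\vbeta\cdot\vx_n=A_+^{(n)}$, $\vx_n\cdot\vx_{n-1}=K^{(n)}+M^{(n)}/2$, $\abs{\vx_n}=\sqrt{B_+^{(n)}}$ — so closure (condition (i)) holds. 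This establishes the closure relations \eqref{eq:closurerelation} in the form of an explicit six-dimensional dynamical system, which is the concrete output one wants to record.

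For condition (ii) I would observe that the map \eqref{eq:qradsymlagr}, being variational with Lagrangian \eqref{eq:Lqrad} and $n$-independent, preserves the canonical symplectic structure \eqref{eq:canonicalpoisson} by Lemma \ref{lem:poisson}; since the generators \eqref{eq:Ntwophoton} are polynomial in the canonical coordinates and the induced brackets are the fixed $h_6$ structure constants, preservation of the bracket on the $h_6$ generators follows, i.e.\ \eqref{eq:poissonalg} holds. Alternatively one checks \eqref{eq:poissonalg} directly on the closure relations just obtained. Finally, for condition (iii) one verifies that the two Casimirs of $h_6$, evaluated on the realisation, are invariant under the evolution — one of them is essentially $\vbeta^2=M^{(n)}$, which is manifestly constant, and the other reduces on this realisation to a combination of the $K^{(n)}_{i,j,k}$, whose invariance was already established in Lemma \ref{lem:qrad}; a short computation confirms the Casimir is fixed by the six-dimensional system.

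The main obstacle is purely computational rather than conceptual: getting the $h_6$ structure constants and the two Casimirs normalised consistently with the shifted generator $K^{(n)}=\vx_n\cdot\vx_{n-1}-\vbeta^2/2$, so that the brackets close on the nose and the Casimir identities hold exactly. The only genuinely nontrivial structural point — already handled by Lemma \ref{lem:canonicalpreservation} — is that the Poisson structure is preserved \emph{only} when $g(\vx_n)=h(\vbeta\cdot\vx_n)$; for a general $g$ the quantity $\vbeta\cdot\vx_{n+1}$ would not be expressible through the six generators, and condition (i) itself would fail. This is exactly why the hypothesis of the proposition is the variational system \eqref{eq:qradsymlagr} rather than the general \eqref{eq:qradsym}.
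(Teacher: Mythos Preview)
Your proposal is correct and follows essentially the same route as the paper: verify the three conditions of Definition \ref{def:dcoalgebrasymmety} by direct computation, writing out the closed six-dimensional system on the $h_6$ generators (the paper records it explicitly as \eqref{eq:h6evolutiongeneral}), then checking preservation of the $h_6$ brackets and of the Casimirs. Your shortcut for condition (ii) via Lemma \ref{lem:poisson} and the canonical structure is a mild streamlining of the paper's direct verification, and your remark on why the hypothesis $g(\vx_n)=h(\vbeta\cdot\vx_n)$ is needed for closure is a useful addition not spelled out in the paper's proof.
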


\begin{remark}
    In the $h_{6}$ Lie-Poisson algebra the element $M$ is central, while
    the other have the following commutation table \cite{BallesterosHerranz2001,Ballesteros_et_al2009,Zhang_et_al1990}:
    \begin{equation}
        \begin{array}{ccccccc}
            \{\phantom{f}, \phantom{g} \} & A_{+} & A_{-} & B_{+} & B_{-} & K 
            \\
            A_{+} & 0 & -M & 0 & -2A_{-} & -A_{+}
            \\
            A_{-} & M & 0 & 2 A_{+} & 0 &  A_{-}
            \\
            B_{+} & 0 & -2 A_{+} & 0 & -4K-2 M & -2B_{+}
            \\
            B_{-} & 2 A_{-} & 2 A_{-} & 4K+2M & 0 & 2 B_{-}   
            \\
            K & A_{+} & -A_{-} & 2 B_{+} & -2B_{-} & 0 
        \end{array}
        \label{eq:h6comm}
    \end{equation}
    This Lie-Poisson algebra has two Casimirs, one is the central element
    $M$, while the other is the quartic function:
    \begin{equation}
        C_{0} =
        \bigl[M B_{+} - A_{+}^{2}\bigr]
        \bigl[M B_{-} - A_{-}^{2}\bigr]-
        \bigl[ M K - A_{+}A_{-} + M^2/2 \bigr]^2.
        \label{eq:h6casimir}
    \end{equation}
    However, since in the expression \eqref{eq:h6casimir} $M$ can be
    factorised, we can lower the degree of the Casimir by one, and
    consider the cubic function $C=C_{0}/M$, i.e.:
    \begin{equation}
        C =
         MB_+B_- - B_+A^2_- - B_-A^2_+ - M(K + M/2)^2 + 2A_-A_+(K + M/2).
        \label{eq:h6casimirthird}
    \end{equation}
    \label{lem:twophotondata}
\end{remark}
\begin{proof}
    We have to prove that the three conditions of Definition
    \ref{def:dcoalgebrasymmety} hold.  This can be done by direct
    computation.  For instance, using the explicit form of the recurrence
    \eqref{eq:qradsymlagr} we prove that the action on the generators of the
    $h_{6}$ algebra \eqref{eq:Nsl2} form the following dynamical
    system:
    \begin{subequations} 
        \begin{align}
            A_{+}^{(n+1)} &=
            A_{+}^{(n)}\frac{f( \sqrt{B_{+}^{(n)}} )}{\sqrt{B_{+}^{(n)}}}
            + M^{(n)} h( A_{+}^{(n)} ) - A_{-}^{(n)},
            \label{eq:Anplush6gen}
            \\
            A_{-}^{(n+1)} &= A_{+}^{(n)},
            \label{eq:Anminush6gen}
            \\
            B_{+}^{(n+1)} &
            \begin{aligned}[t]
            &=
            f^{2}( \sqrt{B_{+}^{(n)}}) 
            +2\left[A_{+}^{(n)} h( A_{+}^{(n)} ) -K^{(n)}-\frac{M^{(n)}}{2}\right]
            \frac{f( \sqrt{B_{+}^{(n)}})}{\sqrt{B_{+}^{(n)}}}
            \\
            &+\left[M^{(n)}h( A_{+}^{(n)} )-2 A_{-}^{(n)}\right]h ( A_{+}^{(n)} )
            +
            B_{-}^{(n)},
            \end{aligned}
            \label{eq:Bnplush6gen}
            \\
            B_{-}^{(n+1)} &= B_{+}^{(n)},
            \label{eq:Bnminush6gen}
            \\
            K^{(n+1)} &= -K^{(n)}-M^{(n)} + A_{+}^{(n)} h( A_{+}^{(n)} )
            +\sqrt{B_{+}^{(n)}}f( \sqrt{B_{+}^{(n)}} ).
            \label{eq:Ksh6gen}
            \\
            M^{(n+1)} &=
            M^{(n)},
            \label{eq:Mh6gen}
        \end{align}
        \label{eq:h6evolutiongeneral}%
    \end{subequations}
    Then, using the commutation relations \eqref{eq:h6comm}, we prove
    that they are preserved (see also formula \eqref{eq:Poisson1}).
    Finally, it is trivial to see that the central element $M^{(n)}$ and
    the Casimir function \eqref{eq:h6casimir} is preserved by the evolution 
    \eqref{eq:h6evolutiongeneral}.
\end{proof}

So, following the procedure of Section \ref{sss:coalgebra}, and using
the same procedure as in \cite{BallesterosHerranz2001,Ballesteros_et_al2009}
we derive the following two sets of functionally independent invariants
commuting with respect to the canonical Poisson bracket \eqref{eq:canonicalpoisson}:
\begin{subequations}
    \begin{align}
        \mathcal{I}_{m}^{(n)} &=
        \sum_{1\leq i<j<k\leq m} \left(K_{i,j,k}^{(n)}\right)^{2}, 
        \quad
        m=3,\dots,N,
        \label{eq:Im}
        \\
        \mathcal{J}_{m}^{(n)} &=
        \sum_{N-m+1\leq i<j<k\leq N} \left(K_{i,j,k}^{(n)}\right)^{2}, 
        \quad 
        m=3,\dots,N.
        \label{eq:Jm}
    \end{align}
    \label{eq:casimir}%
\end{subequations}
Notice that $\mathcal{I}_{N}^{(n)}=\mathcal{J}_{N}^{(n)}$ because of
the coassociativity. This can be summarised in the following theorem:

\begin{theorem} 
    A variational quasi-radially symmetric system \eqref{eq:qradsymlagr} 
    is a PLN map of order $N-2$ with one of the sets of invariants:
    \begin{equation}
        \mathcal{L} =
        \bigl\{\mathcal{I}_{3}^{(n)},\dots,\mathcal{I}_{N}^{(n)}\bigr\}
        \quad\text{or}\quad
        \mathcal{R}=
        \bigl\{\mathcal{J}_{3}^{(n)},\dots,\mathcal{J}_{N}^{(n)}\bigr\},
        \label{eq:IJsets}
    \end{equation}
    Additionally:
    \begin{itemize}
        \item if we can find \emph{one} additional invariant commuting either 
            with $\mathcal{L}$ or $\mathcal{R}$, then the system becomes 
            quasi-integrable;
        \item if we can find \emph{two} additional invariant commuting either 
            with $\mathcal{L}$ or $\mathcal{R}$, then system becomes 
            Liouville integrable and moreover superintegrable with
            $2N-3$ invariants.
    \end{itemize}
    \label{thm:qradialint}
\end{theorem}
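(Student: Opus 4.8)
The plan is to invoke the general coalgebra machinery from Section \ref{sss:coalgebra} together with the explicit $h_6$ realisation of Proposition \ref{lem:twophotongeneral}. First I would note that Proposition \ref{lem:twophotongeneral} has established that a variational quasi-radially symmetric system \eqref{eq:qradsymlagr} possesses coalgebra symmetry with respect to $(\Alg,\Delta) = (h_6,\Delta)$, so Theorem \ref{thm:coalgebrasupint} applies directly: the map admits two sets $\mathcal{L}_{r,N}$ and $\mathcal{R}_{r,N}$ of $rN$ commuting invariants, built from the left and right coproducts of the Casimirs. Since $h_6$ has the central Casimir $M$ and the cubic Casimir $C$ of \eqref{eq:h6casimirthird}, the relevant non-trivial invariants come from applying $\Delta^{(m)}$ and $\Delta^{(m)}_R$ to $C$; in the symplectic realisation \eqref{eq:Ntwophoton} the central element $M^{(m)}=\vbeta^2$ for all $m$, so it contributes nothing functionally independent. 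The next step is to compute $\Delta^{(m)}(C)$ in the realisation and identify it with $\mathcal{I}_m^{(n)}$ in \eqref{eq:Im}: substituting the primitive coproducts $\Delta^{(m)}(A_\pm) = \sum_{i\le m}\beta_i x_{n\text{ or }n-1,i}$, $\Delta^{(m)}(B_\pm)=\sum_{i\le m} x^2$, etc., into \eqref{eq:h6casimirthird} and simplifying, one recognises the Lagrange-type identity that collapses the cubic expression into $\sum_{1\le i<j<k\le m}(K^{(n)}_{i,j,k})^2$, exactly as in the continuum computation of \cite{BallesterosHerranz2001,Ballesteros_et_al2009}. The analogous computation with $\Delta^{(m)}_R$ gives $\mathcal{J}_m^{(n)}$.

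Having identified the invariants, I would then count functional independence. The sets $\mathcal{L}$ and $\mathcal{R}$ each contain the $N-2$ functions $\mathcal{I}_3^{(n)},\dots,\mathcal{I}_N^{(n)}$ (respectively $\mathcal{J}_3^{(n)},\dots,\mathcal{J}_N^{(n)}$); by Lemma \ref{lem:qrad} the $K^{(n)}_{i,j,k}$ span a $(2N-5)$-dimensional space of functionally independent invariants, and a standard induction on $m$ — adding one new variable pair $(x_{n,m},x_{n-1,m})$ at each step — shows that $\mathcal{I}_m^{(n)}$ is functionally independent of $\mathcal{I}_3^{(n)},\dots,\mathcal{I}_{m-1}^{(n)}$, because it is the only one of the family that involves $x_{n,m}$ and $x_{n-1,m}$. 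Combined with the Poisson-commutativity from Theorem \ref{thm:coalgebrasupint} (equivalently Lemma \ref{lem:comm}), this yields $N-2$ functionally independent commuting invariants in involution with respect to the canonical bracket \eqref{eq:canonicalpoisson}, which is a symplectic structure of rank $2r$ with $r=N$ and $M=2N$. By Definition \ref{def:pintegrability}.2, possessing $\kappa = N-2$ commuting functionally independent invariants with $0 < \kappa < M-r = N$ means precisely that the map is a PLN map of order $N-2$.

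For the two additional clauses: if one finds a further invariant commuting with $\mathcal{L}$ (or $\mathcal{R}$), then $\kappa$ increases to $N-1 = M-r-1$, which by Definition \ref{def:pintegrability}.2.b is exactly the condition for quasi-integrability. If one finds two more such invariants, then $\kappa = N = M - r$ commuting functionally independent invariants is the Liouville--Poisson integrability threshold from Definition \ref{def:lpint} (here, Liouville integrability in the symplectic case, Definition \ref{def:lint}); moreover the full pool of angular-momentum-type invariants $K^{(n)}_{i,j,k}$ supplies $2N-5$ functionally independent invariants in total, so combining the $N$ commuting ones with these extra ones gives $2N-3$ functionally independent invariants, which by the symplectic-case dictionary in Remark \ref{rem:minim} is superintegrability with $2N-3$ invariants. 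The main obstacle I anticipate is not any of the counting arguments, which are routine inductions, but the algebraic verification that $\Delta^{(m)}(C)$ in the realisation \eqref{eq:Ntwophoton} really equals $\sum(K^{(n)}_{i,j,k})^2$ — one must carefully track the shift $K = \vec{x}_n\cdot\vec{x}_{n-1} - \vbeta^2/2$ and the $M/2$ corrections in \eqref{eq:h6casimirthird} so that the quadratic and linear-in-$M$ terms reorganise into the perfect square; this is the same identity that underlies the continuum result and can be cited from \cite{BallesterosHerranz2001,Ballesteros_et_al2009}, but it should be checked to hold verbatim in the discrete realisation.
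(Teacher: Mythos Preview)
Your proposal is correct and follows essentially the same route as the paper: the paper presents Theorem~\ref{thm:qradialint} as a summary of the preceding discussion, invoking Proposition~\ref{lem:twophotongeneral} for the $h_6$ coalgebra symmetry, then ``following the procedure of Section~\ref{sss:coalgebra}'' and citing \cite{BallesterosHerranz2001,Ballesteros_et_al2009} for the identification of the coproduct Casimirs with the sums $\mathcal{I}_m^{(n)}$, $\mathcal{J}_m^{(n)}$. You have simply made explicit the steps (the Lagrange-type identity, the induction for functional independence, and the matching against Definitions~\ref{def:lpint} and~\ref{def:pintegrability}) that the paper leaves implicit or delegates to the references.
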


As we noted in Section \ref{sss:coalgebra}, the invariants of the
system \eqref{eq:sl2evolutiongeneral} are invariants of the variational 
quasi-radially symmetric system \eqref{eq:qradsymlagr}.
So, for given functions $f$ and $h$, studying the system \eqref{eq:sl2evolutiongeneral}
we can search for the $(N-1)$th and the $N$th invariant mentioned in 
Theorem \eqref{thm:qradialint}.
We now give an explicit example of both cases.

\begin{example}
    Consider the following linear system:
    \begin{equation}
        \vx_{n+1} +\alpha_{0} \vx_{n} + \vx_{n-1} = 
        \left(1+ \alpha_{1}\vbeta\cdot\vx_{n}\right)\vbeta.
        \label{eq:linearqr}
    \end{equation}
    This system is clearly variational quasi-radial with 
    $f\left( \rho \right) = -\alpha_{0} \rho$,
    and $h\left( \sigma \right) = 1+ \alpha_{1}\sigma$.

    The associated dynamical system is:
    \begin{subequations}
        \begin{align}
            A_{+}^{(n+1)} &= (\alpha_1 M^{(n)} -\alpha_0) A_{+}^{(n)}+M^{(n)}-A_{-}^{(n)}, 
            \\
            A_{-}^{(n+1)} &= A_{+}^{(n)}, 
            \\
            B_{+}^{(n+1)} &
            \begin{aligned}[t]
                &=
                \alpha_0^2 B_{+}^{(n)}
                +\alpha_0\left[2 K^{(n)}+M^{(n)}-2 \left(A_{+}^{(n)}\right)^2 \alpha_1-2 A_{+}^{(n)}\right]
                \\
                &+\alpha_1^2\left(A_{+}^{(n)}\right)^2 M^{(n)} 
                +2 \alpha_1 \left(M^{(n)}- A_{-}^{(n)}\right) A_{+}^{(n+1)}
                \\
                &+M^{(n)}-2 A_{-}^{(n)}+B_{+}^{(n)},
            \end{aligned}
            \\
            B_{-}^{(n+1)} &= B_{+}^{(n)},
            \\
            K^{(n+1)} &= \alpha_1 ( A_{+}^{(n)} )^{2}
            -\alpha_0 B_{+}^{(n)} +A_{+}^{(n)}- K^{(n)}- M^{(n)},
            \\
            M^{(n+1)} &= M^{(n)}.
        \end{align}
        \label{eq:qrevolutionli}%
    \end{subequations}
    Besides the central element $M^{(n)}$ and the Casimir $C^{(n)}$,
    arising from \eqref{eq:h6casimirthird} through the realisation
    \eqref{eq:Ntwophoton}, the system \eqref{eq:qrevolutionli} has two
    additional functionally independent invariants:
    \begin{subequations}
        \begin{align}
            \mathcal{H}_{1}^{(n)} &=
            \alpha_1 A_{-}^{(n)} A_{+}^{(n)} 
            -\alpha_0 K^{(n)} 
            +A_{-}^{(n)}+A_{+}^{(n)}-B_{-}^{(n)}-B_{+}^{(n)}, 
            \label{eq:H1linqr}
            \\
            \mathcal{H}_{2}^{(n)} &
            \begin{aligned}[t]
                &= 
                \alpha_0\left( A_{-}^{(n)} A_{+}^{(n)}- K^{(n)} M^{(n)}\right)
                +\left(A_{-}^{(n)}\right)^2+\left(A_{+}^{(n)}\right)^2
                \\
                &-\left(B_{-}^{(n)} +B_{+}^{(n)}\right) M^{(n)}, 
            \end{aligned}
            \label{eq:H2linqr}
        \end{align}
        \label{eq:Hlinqr}
    \end{subequations}
    So, following Theorem \ref{thm:qradialint}, we consider the set of 
    invariants
    \begin{equation}
        \mathcal{S} = 
        \left\{\mathcal{H}_{1}^{(n)},\mathcal{H}_{2}^{(n)}, 
        \mathcal{I}_{3}^{(n)},\dots,\mathcal{I}_{N}^{(n)}\right\}.
        \label{eq:invariantslinqr}
    \end{equation}
    Functional independence and involutivity in this set can be proved
    by induction.
    So, we proved that the linear system \eqref{eq:linearqr} is 
    Liouville integrable, and moreover superintegrable with the
    $2N-3$ functionally independent invariants, considering the
    $K_{i,j,k}$ from Lemma \ref{lem:qrad}.
    \hfill $\square$
\end{example}

\begin{remark}
    We remark that, with respect to the rank 4 Lie--Poisson
    bracket \eqref{eq:h6comm} the system \eqref{eq:qrevolutionli}
    is Poisson--Liouville integrable.  Indeed, it possesses a central
    element, one Casimir \eqref{eq:h6casimirthird}, and two commuting
    invariants \eqref{eq:Hlinqr}.
    \label{rem:inth6lin}
\end{remark}

\begin{example}
    Consider the following nonlinear system:
    \begin{equation}
        \vx_{n+1} +\alpha_{0} \vx_{n} + \vx_{n-1} = 
        \left[1+ \alpha_{1}\vbeta\cdot\vx_{n} 
        +\varepsilon\left( \vbeta\cdot\vx_{n} \right)^{2} \right]\vbeta.
        \label{eq:nlinearqr}
    \end{equation}
    This system is variational quasi-radial with $f\left( \rho \right) = -\alpha_{0} \rho$ and $h\left( \sigma \right) = 1+
    \alpha_{1}\sigma+\varepsilon \sigma^{2}$.  This system is a non-integrable
    deformation of the linear system \eqref{eq:linearqr}.  We can claim
    that the system is non-integrable because it is a polynomial system
    of degree higher than one, so the sequence of degrees is either
    linear or exponential \cite{Veselov1992,BellonViallet1999}.  By a
    quick computation it is easy to see that the degrees of iterates
    of \eqref{eq:nlinearqr} is $d_{n}=2^{n}$, so that the associated
    algebraic entropy is positive.  However, we will prove using the
    coalgebra symmetry that it is quasi-integrable.

    The associated dynamical system is:
    \begin{subequations}
        \begin{align}
            A_{+}^{(n+1)} &= [\alpha_1 (M^{(n)}+\varepsilon A_{+}^{(n)}) -\alpha_0] A_{+}^{(n)}
            +M^{(n)}-A_{-}^{(n)}, 
            \\
            A_{-}^{(n+1)} &= A_{+}^{(n+1)}, 
            \\
            B_{+}^{(n+1)} &
            \begin{aligned}[t]
                &=
                \alpha_0^2 B_{+}^{(n)}
                +\alpha_0\left[2 K^{(n)}+M^{(n)}-2 \left(A_{+}^{(n)}\right)^2 \alpha_1-2 A_{+}^{(n)}\right]
                \\
                &+\alpha_1^2\left(A_{+}^{(n)}\right)^2 M^{(n)} 
                +2 \alpha_1 \left(M^{(n)}- A_{-}^{(n)}\right) A_{+}^{(n+1)}
                \\
                &+M^{(n)}-2 A_{-}^{(n)}+B_{+}^{(n)}
                +2\varepsilon(A_{+}^{(n)})^2\times
                \\
                &\phantom{(A_{+}^{(n)})^2}
                \Bigl\{(\alpha_1 M^{(n)}-\alpha_0) A_{+}^{(n)}
                    -A_{-}^{(n)}
                    +M^{(n)}\bigl[1+\varepsilon (A_{+}^{(n)})^2\bigr]
            \Bigr\}
                ,
            \end{aligned}
            \\
            B_{-}^{(n+1)} &= B_{+}^{(n)},
            \\
            K^{(n+1)} &= \alpha_1 \left( A_{+}^{(n)} \right)^{2}
            -\alpha_0 B_{+}^{(n)} +A_{+}^{(n)}- K^{(n)}- M^{(n)}+\varepsilon\left(A_{+}^{(n)}\right)^{3}.
            \\
            M^{(n+1)} &= M^{(n)},
        \end{align}
        \label{eq:qrevolutionnli}%
    \end{subequations}
    Besides the central element $M^{(n)}$ and the Casimir \eqref{eq:h6casimir} realised through \eqref{eq:Ntwophoton}, the system \eqref{eq:qrevolutionli} has one additional
    invariants, given by formula \eqref{eq:H2linqr}.
    So, following Theorem \ref{thm:qradialint}, 
    we consider the set of invariants 
    \begin{equation}
        \mathcal{S}' = 
        \left\{\mathcal{H}_{2}^{(n)},
        \mathcal{I}_{3}^{(n)},\dots,\mathcal{I}_{N}^{(n)}\right\}.
        \label{eq:invariantsnlinqr}
    \end{equation}
    Functional independence and involutivity in this set can be proved
    by induction.
    So, we proved that the linear system \eqref{eq:linearqr} is 
    quasi-integrable.
    Notice, however that despite being non-integrable the system
    possess a galore of invariants: $2N-4$ total invariants, considering
    the $K_{i,j,k}$ functions from Lemma \ref{lem:qrad}.

    We notice that from a computational point of view the (real) orbits
    of equation \eqref{eq:nlinearqr}, for $\varepsilon=0$ and $\varepsilon\neq0$,
    are very similar even for $O(1)$ values of $\varepsilon$.
    To this end, see Figure \ref{fig:ex3orbits} where we compare two cases
    with $\varepsilon=0$ and $\varepsilon=10$, but same initial conditions 
    $(\vx_{0},\vx_{-1}) = (0.1,0.1,0.1,0.1,0.1,0,0,0.1)$, and parameters 
    $\alpha_{0}=0.2$, $\alpha_{1}=0.5$, $\vbeta= (0.05,0.2,0.1,0.25)$.
    \hfill $\square$
\end{example}

\begin{figure}[ht]
    \centering
    \subfloat{\includegraphics[width=0.45\linewidth]{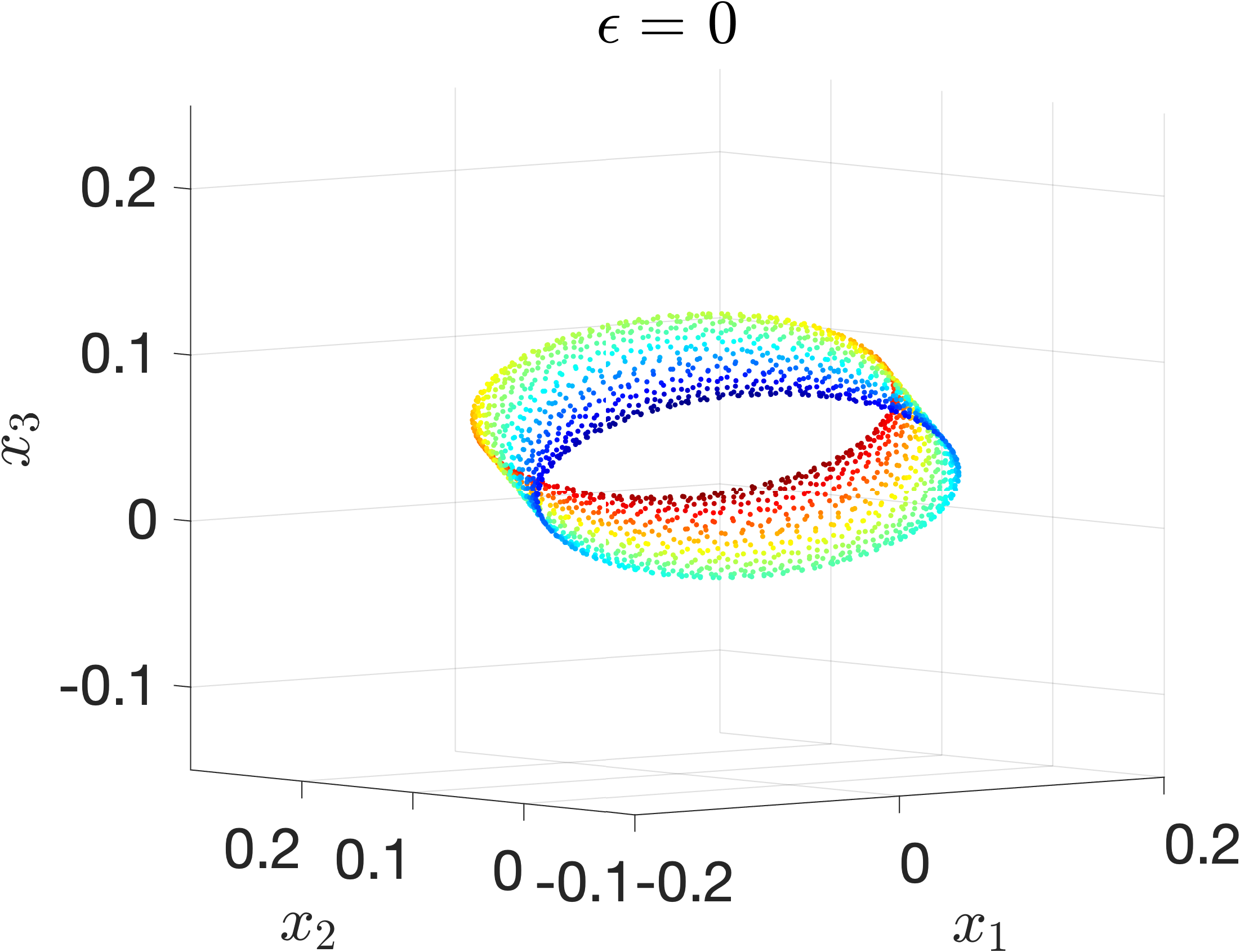}}
    \hspace{.5cm}
    \subfloat{\includegraphics[width=0.45\linewidth]{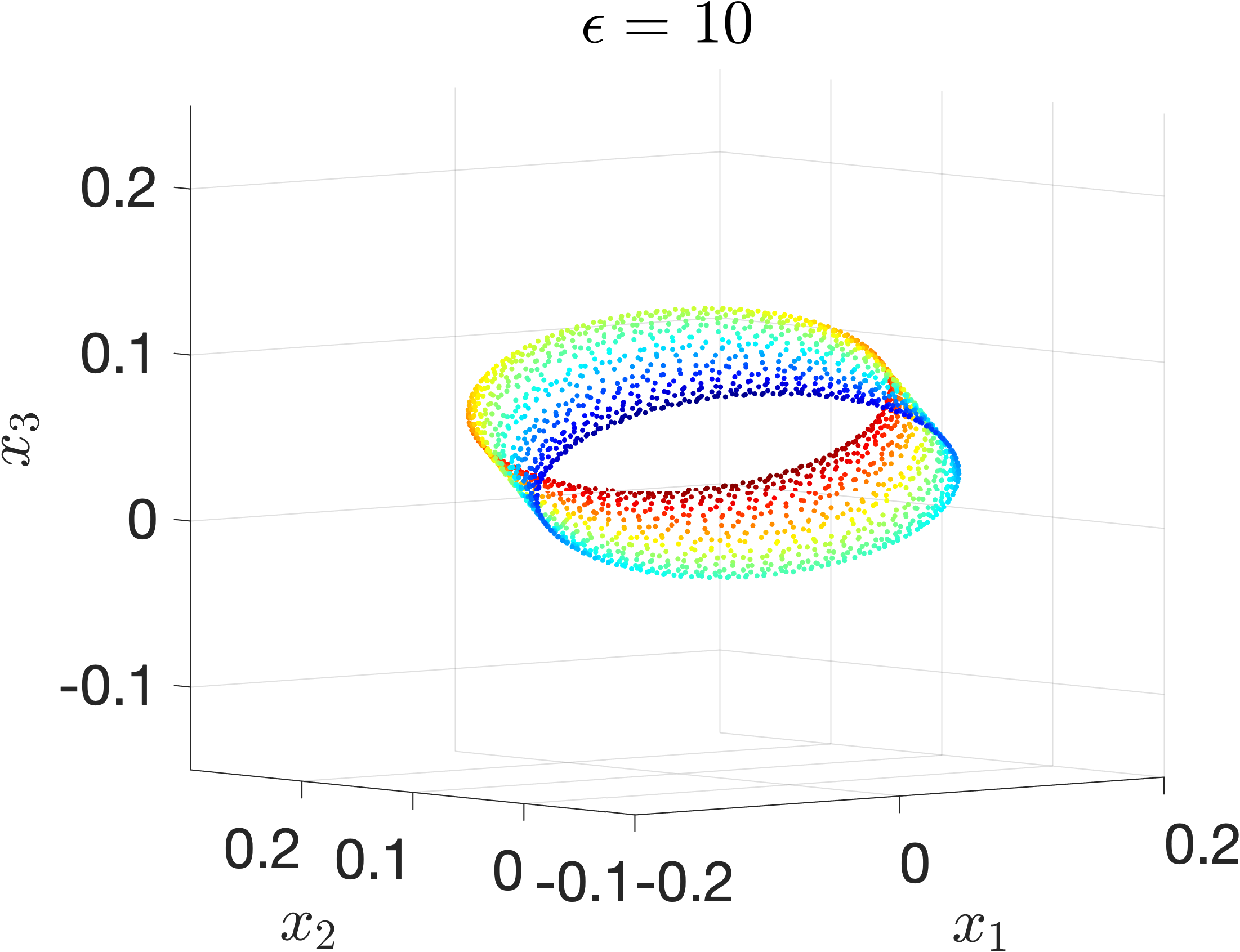}}
    \caption{Orbits of equation \eqref{eq:nlinearqr} for $\varepsilon=0$ 
    (left, equivalent to \eqref{eq:linearqr}) and equation \eqref{eq:nlinearqr} 
    for $\varepsilon=10$ (right).}
    \label{fig:ex3orbits}
\end{figure}

\begin{remark}
    We might wonder if we can hope for the existence of more 
    functionally independent invariants for the system 
    \eqref{eq:qrevolutionnli}.
    In this case algebraic entropy comes to our aid: computing the
    degrees of the iterates of the system \eqref{eq:qrevolutionnli}
    we get the following sequence of degrees:
    \begin{equation}
        1, 5, 13, 29, 61, 125, 253, 509, 1021, 2045, 4093, 8189, 16381\dots.
        \label{eq:qrevolutionnliseq}
    \end{equation}
    The growth of this sequence is clearly exponential, as it is readily
    shown by the generating function:
    \begin{equation}
        g(z) = \frac{2 z + 1}{(z - 1) (2 z - 1)},
        \label{eq:gfqrevolutionnlin}
    \end{equation}
    which implies $S= \log 2$.
    Since the algebraic entropy is less than the maximal value $\log 5$ 
    we have some form of regularity, expected by the preservation of
    the rank 4 Lie--Poisson bracket, and the existence of one central element, 
    one Casimir \eqref{eq:h6casimirthird}, and two commuting invariants \eqref{eq:Hlinqr}.
    In short, we have that the system \eqref{eq:qrevolutionnli} cannot be 
    Poisson--Liouville integrable, but it is enough regular to provide us
    one additional invariant.
    \label{rem:inth6nlin}
\end{remark}

\section{Vector extensions of the autonomous discrete Painlev\'e I}
\label{sec:dPI}

Consider the autonomous version of the discrete Painlev\'e equation
(shortly aut-$\dP_\text{I}$)
\cite{HietarintaBook,Grammaticosetal1991}:
\begin{equation}
    x_{n+1}+x_{n}+x_{n-1}=\frac{\alpha}{x_{n}}+\beta,
    \label{eq:dPI}
\end{equation}
where $\alpha,\beta\in\RR$ are arbitrary constants.
This system admits the following discrete Lagrangian \cite{Logan1973}:
\begin{equation}
    L_\text{I}^{(1)} =
    x_{n+1}x_{n} +\frac{x_{n}^{2}}{2}-\alpha \log x_{n} - \beta x_{n},
    \label{eq:LI1}
\end{equation}
and leaves invariant the following QRT biquadratic:
\begin{equation}
    B^{(n)} = x_{n}^{2}x_{n-1} + x_{n}x_{n-1}^{2} -\alpha \left( x_{n}+x_{n-1} \right)
    -\beta x_{n}x_{n-1}.
    \label{eq:QRTdPI}
\end{equation}

We wish to generalise this equation to $N$ degrees of freedom.
A simple observation can be made to notice that we can do this easily
by considering the Lagrangian, rather the equation itself.
Indeed, we can introduce the following two Lagrangians:
\begin{subequations}
    \begin{align}
        L_{\text{I},a}^{(N)} &= 
        \vec{x}_{n+1}\cdot\vec{x}_{n} +\frac{\vec{x}_{n}^{2}}{2}
        -\alpha \log \abs{\vec{x}_{n}} -\vbeta\cdot\vec{x}_{n}.
        \label{eq:LI}
        \\
        L_{\text{I},b}^{(N)} &= 
        \vec{x}_{n+1}\cdot\vec{x}_{n} +\frac{\vec{x}_{n}^{2}}{2}
        -\kappa \log \valpha\cdot\vx_{n} -\vbeta\cdot \vx_{n},
    \end{align}
    \label{eq:dLIs}
\end{subequations}
where $\alpha,\kappa\in\RR$ and $\valpha,\vbeta\in\RR^{N}$.

The associated Euler--Lagrange equations are the following:
\begin{subequations}
    \begin{align}
        \vec{x}_{n+1}+\vec{x}_{n}+\vec{x}_{n-1} &=
        \frac{\alpha}{\abs{\vec{x}_{n}}^{2}}\vec{x}_{n}
        +\boldsymbol{\beta},
        \label{eq:vecdPIa}
        \\
        \vec{x}_{n+1}+\vec{x}_{n}+\vec{x}_{n-1} &=
        \kappa\frac{\valpha}{\valpha\cdot\vx_{n}}
        +\boldsymbol{\beta}.
        \label{eq:vecdPIb}
    \end{align}
    \label{eq:vecdPI}
\end{subequations}

For $2\leq N\leq4$ the heuristic calculation of the algebraic entropy for
the system \eqref{eq:vecdPIa} gives the following degree of growth:
\begin{equation}
    1, 3, 7, 15, 25, 39, 55, 75, 97, 123, 151, 183, 217, 255\dots.
    \label{eq:growthvecdPIa}
\end{equation}
The associated generating function is given by:
\begin{equation}
    g_{a}\left( z \right) =
    \frac{1+ z +  z^2 +3z^3}{(1+z)(1-z)^3},
    \label{eq:genfuncvecdPIa}
\end{equation}
which readily implies the growth \eqref{eq:growthvecdPIa} is quadratic.

In the same way, for $1\leq N\leq4$ the heuristic calculation of the 
algebraic entropy for the system \eqref{eq:vecdPIb} gives the following 
degree of growth:
\begin{equation}
    1, 2, 4, 8, 13, 20, 28, 38, 49, 62, 76, 92, 109\dots. 
    \label{eq:growthvecdPIb}
\end{equation}
The associated generating function is given by:
\begin{equation}
    g_{b}\left( z \right) =
    \frac{ 1+2z^3 }{(1+z)(1-z)^3},
    \label{eq:genfuncvecdPIb}
\end{equation}
which readily implies the growth \eqref{eq:growthvecdPIb} is quadratic.

These two results gave us the indication that two systems \eqref{eq:vecdPI}
are integrable.
In what follows, we will prove that these two systems are actually
superintegrable, by building their invariants.

\subsection{Superintegrability of the system \eqref{eq:vecdPIa}}
In this section, we will prove that the system \eqref{eq:vecdPIa}
is Liouville integrable for all $N\in\N$.
We start noticing that the system \eqref{eq:vecdPIa} is quasi-radially
symmetric and variational. 
So, from Theorem \ref{thm:qradialint} we have that to prove Liouville
integrability and superintegrability we need to find two additional
invariants.

Now, to build the additional invariants, we use the same
strategy we employed in the examples in Section \ref{sec:general}.
From the proof of Proposition \ref{lem:twophotongeneral} 
the associated dynamical system on the generators of the $h_{6}$ algebra
is:
\begin{subequations}
    \begin{align}
        A_{+}^{(n+1)} &=
        \alpha\frac{A_{+}^{(n+1)}}{B_{+}^{(n+1)}} + M^{(n)} -A_{-}^{(n+1)} -A_{+}^{(n+1)},
        \label{eq:ApdsdPIa}
        \\
        A_{-}^{(n+1)} &=
        A_{+}^{(n)},
        \label{eq:AmdsdPIa}
        \\
        B_{+}^{(n+1)} &
        \begin{aligned}[t]
        &=
        \frac{\left(M^{(n)}- \alpha\right) \left(2 B_{+}^{(n)}-\alpha\right) 
            +2 \alpha\left(A_{+}^{(n)}- K^{(n)}\right)}{B_{+}^{(n)}}
        \\
        &+B_{+}^{(n)}+ B_{-}^{(n)}+2 \left(K^{(n)}- A_{+}^{(n)} - A_{-}^{(n)}\right),
        \end{aligned}
        \label{eq:BpdsdPIa}
        \\
        B_{-}^{(n+1)} &=
        B_{+}^{(n)},
        \label{eq:BmdsdPIa}
        \\
        K^{(n+1)} &=
        \alpha + A_{+}^{(n)}-B_{+}^{(n)}-K^{(n)}-M^{(n)},
        \label{eq:KdsdPIa}
        \\
        M^{(n+1)} &= M^{(n)}.
        \label{eq:MdsdPIa}
    \end{align}
    \label{eq:dynsysdPIa}
\end{subequations}

Heuristically, the computation of the algebraic entropy of the dynamical 
system \eqref{eq:dynsysdPIa} gives the degree sequence \eqref{eq:growthvecdPIb}.
So, we expect to find two more commuting invariants besides the trivial central
element $M^{(n)}$ and the Casimir \eqref{eq:h6casimirthird}.
Using the method of Appendix \ref{app:findinv} we find the two commuting
invariants:
\begin{subequations}
    \begin{align}
        \mathcal{H}_{1}^{(n)} &
        \begin{aligned}[t]
            &=
            \left(\alpha+B_{+}^{(n)}-2 K^{(n)}-M^{(n)}\right) A_{-}^{(n)}
            \\
            &+\left(\alpha+B_{-}^{(n)}-2 K^{(n)}-M^{(n)}\right) A_{+}^{(n)}
            \\
            &+ K^{(n)}\left( 2 K^{(n)}+3 M^{(n)}\right)-2 B_{+}^{(n)} B_{-}^{(n)}
        \end{aligned}
        \label{eq:H1dynsysdPI}
        \\
        \mathcal{H}_{2}^{(n)} &
        \begin{aligned}[t]
            &=
            \left(K^{(n)}-A_{-}^{(n)}-A_{+}^{(n)}+\frac{B_{-}^{(n)}+B_{+}^{(n)}}{2}\right) 
            \left(\alpha^2+B_{-}^{(n)} B_{+}^{(n)}\right)
            \\
            &-2 \alpha \left( \frac{K^{(n)}}{2}+\frac{M^{(n)}}{4}\right) 
                \left[B_{-}^{(n)}+ B_{+}^{(n)} - 2\left(A_{-}^{(n)}+A_{+}^{(n)}  \right)\right]
            \\
            &-2 \alpha\left[K^{(n)}+\frac{3M^{(n)}}{2}\right]K^{(n)}
            +M^{(n)} B_{-}^{(n)} B_{+}^{(n)}.
        \end{aligned}
        \label{eq:H3dynsysdPI}
    \end{align}
    \label{eq:HdynsysdPI}%
\end{subequations}
So, following Theorem \ref{thm:qradialint}, we consider the following
set of invariants:
\begin{equation}
    \mathcal{S}_{\text{I},a} =
    \left\{ \mathcal{H}_{1}^{(n)},\mathcal{H}_{2}^{(n)},
    \mathcal{I}_{3}^{(n)},\dots,\mathcal{I}_{N}^{(n)} \right\}.
    \label{eq:Psetgen}
\end{equation}
Functional independence and involutivity in this set can be proved
by induction.
So, we proved that the system \eqref{eq:vecdPIa} is Liouville integrable
and moreover superintegrable with the $2N-3$ functionally independent 
invariants, considering the $K_{i,j,k}$ from Lemma \ref{lem:qrad}.

\begin{remark}
    We note that the same construction holds also for $N=1$,
    even though the coalgebra structure in the invariant \eqref{eq:QRTdPI}
    is not immediately clear.
    Indeed, for $N=1$ we have:
    \begin{equation}
        \mathcal{H}_{1}^{(n)} = \beta_{1}\left( B^{(n)} + M^{(n)}\right),
        \label{eq:HdynsysN1}
    \end{equation}
    proving that \eqref{eq:vecdPIa} is a \emph{bona fide} coalgebric
    extension of the autonomous discrete Painlev\'e I equation 
    \eqref{eq:dPI}.
    \label{rem:1dcasered}
\end{remark}

Now, we consider the case $\vbeta=\vec{0}$:
\begin{equation}
    \vec{x}_{n+1}+\vec{x}_{n}+\vec{x}_{n-1}=\frac{\alpha}{\abs{\vec{x}_{n}}^{2}}\vec{x}_{n}
    \label{eq:vecdPIrad}
\end{equation}
which is special because the system becomes radially symmetric.
The system is clearly integrable, but we will prove it using its
coalgebra symmetry with respect to the
$\Sl_{2}(\RR)$ algebra: from Theorem \ref{thm:radialquasiint}
if we can find an additional invariant the system becomes
quasi-maximally superintegrable.

From the proof of Proposition \ref{lem:sl2general} the associated 
dynamical system on the generators of the $\Sl_{2}(\RR)$ algebra is:
\begin{equation}
    \begin{gathered}
        J_{+}^{(n+1)} = \frac{\alpha}{J_{+}^{(n)}}\left(\alpha -2 J_{3}^{(n)} \right)
            +2 J_{3}^{(n)}-2 \alpha+J_{-}^{(n)}+J_{+}^{(n)},
        \\
        J_{-}^{(n+1)} = J_{+}^{(n)},
        \quad
        J_{3}^{(n+1)} = \alpha-\left(J_{3}^{(n)} + J_{+}^{(n)}\right).
    \end{gathered}
    \label{eq:sl2vecdPIrad}
\end{equation}

Using the method of Appendix \ref{app:findinv} we find the Casimir
\eqref{eq:sl2casimir}, and one additional invariant:
\begin{equation}
    \mathcal{H}_{1}^{(n)}\left( \vbeta=\vec{0} \right) 
    \begin{aligned}[t]
        &=
        \left(J_{+}^{(n)}+J_{-}^{(n)}+2 J_{3}^{(n)}\right) 
        \left(J_{+}^{(n)}J_{-}^{(n)}+\alpha^2\right)
        \\
        &-2 \alpha \left[2 J_{+}^{(n)}J_{-}^{(n)}+\left(J_{+}^{(n)}+J_{-}^{(n)}\right) J_{3}^{(n)}\right].
    \end{aligned}
    \label{eq:vecdPIradH1}
\end{equation}

So, following Theorem \ref{thm:radialquasiint}, we consider the following
set of invariants:
\begin{equation}
    \mathcal{S}_{\text{I},a}\left( \vbeta=\vec{0} \right) =
    \left\{ \mathcal{H}_{1}^{(n)}\left( \vbeta=\vec{0} \right),
    \mathcal{C}_{2}^{(n)},\dots,\mathcal{C}_{N}^{(n)} \right\}.
    \label{eq:Psetgenbeta0}
\end{equation}
Functional independence and involutivity in this set can be proved
by induction.
So, we proved that the system \eqref{eq:vecdPIrad} is Liouville integrable
and moreover quasi-maximally superintegrable with the $2N-2$ functionally 
independent invariants, considering the independent components of the discrete
angular momentum $L_{i,j}$ from Lemma \ref{lem:rad}.

\subsection{Superintegrability of the system \eqref{eq:vecdPIb}}

Consider the matrix $R\in \mathrm{SO}\left( N \right)$ such
that $R^{T} \boldsymbol{\hat{\alpha}} = \left( 1,0,\dots,0 \right)^{T}$,
where $\boldsymbol{\hat{\alpha}}=\valpha/\abs{\valpha}$.
So, the linear transformation $\vx_{n}= R \vec{y}_{n}$
maps the system \eqref{eq:vecdPIb} into:
\begin{subequations}
    \begin{align}
        y_{1,n+1}+y_{1,n}+y_{1,n-1} &= \frac{\kappa}{y_{1,n}}+ \beta_{1}', 
        \\
        \vec{Y}_{n+1}+\vec{Y}_{n}+\vec{Y}_{n-1} &= \vec{B},
    \end{align}
    \label{eq:vecdPIbt}
\end{subequations}
where $\vec{Y}_{n} = \left( y_{2,n},\dots,y_{N,n} \right)^{T}$,
$\boldsymbol{\beta'}=R \vbeta$, 
and $\vec{B}=\left( \beta_{2}',\dots,\beta_{N}' \right)$.
Clearly, this preserves the variational structure with the following
Lagrangian:
\begin{equation}
    \ell_{\text{I},b}^{(N)} = 
    \vec{y}_{n+1}\cdot\vec{y}_{n}+\frac{\vec{y}_{n}^{2}}{2}
    -\kappa \log y_{1,n} -\boldsymbol{\beta'} \cdot \vec{y}_{n}.
\end{equation}

The system is clearly a superposition of an autonomous discrete
Painlev\'e I equation \eqref{eq:dPI} in the variable $y_{1,n}$
and a linear quasi-radial system of the form \eqref{eq:linearqr}
with $\alpha_{0}=1$ and $\alpha_{1}=0$.
From Theorem \ref{thm:qradialint}, we derive the following set
of invariants:
\begin{equation}
    \mathcal{S}_{\text{I},b}
    =
    \left\{ B^{(n)}\left( y_{1,n},y_{1,n-1} \right), 
        \mathcal{E}_{1}^{(n)},\mathcal{E}_{2}^{(n)},
        \mathcal{K}^{(n)}_{2},\dots,
        \mathcal{K}^{(n)}_{N-1}
    \right\},
    \label{eq:setPLIbproof1}
\end{equation}
where:
\begin{equation}
    \mathcal{E}_{i}^{(n)}
    =
    \mathcal{H}_{i}^{(n)} \left( \vec{Y}_{n},\vec{Y}_{n-1},\vec{B},\alpha_{0}=1,\alpha_{1}=0 \right),
    \quad
    i=1,2,
    \label{eq:EdefdPIb}
\end{equation}
with $\mathcal{H}_{i}^{(n)}$ given in equation \eqref{eq:Hlinqr},
and
\begin{equation}
    \mathcal{K}_{i}^{(n)}
    =
    \mathcal{I}_{i}^{(n)} \left( \vec{Y}_{n},\vec{Y}_{n-1},\vec{B},\alpha_{0}=1,\alpha_{1}=0 \right),
    \quad
    i=2,\dots,N-1.
    \label{eq:KdefdPIb}
\end{equation}
with $\mathcal{I}_{i}^{(n)}$ given in equation \eqref{eq:Im}.
The functional independence and the commutation of the invariants in the set
$\mathcal{S}_{\text{I},b}$ can be proven by induction.
Moreover, from Lemma \eqref{lem:qrad} for $N-1$ we obtain $2N-5$ functionally
independent invariants.
So, in total we have (by induction) a set of $2N-4$ functionally independent
invariants, which makes the system \eqref{eq:vecdPIbt} superintegrable.

So, applying the inverse transformation $\vec{y}_{n} = R^{-1} \vx_{n}$ we
obtain that the original system \eqref{eq:vecdPIb} is superintegrable with
$2N-4$ functionally independent invariants.

\begin{remark}
    We remark that there is another candidate coalgebra symmetry for the system 
    \eqref{eq:vecdPIb}.
    That is, consider the Poisson--Lie algebra $\mathcal{A}_{10}$ generated by
    \begin{equation}
        \begin{gathered}
            A_{+}^{(n)} = \valpha\cdot\vec{x}_{n},
            \quad
            A_{-}^{(n)} = \valpha\cdot\vec{x}_{n-1},
            \quad
            B_{+}^{(n)} = \vbeta\cdot\vec{x}_{n},
            \quad
            B_{-}^{(n)} = \vbeta\cdot\vec{x}_{n-1},
            \\
            C_{+}^{(n)} = \vec{x}_{n}^{2},
            \quad
            C_{-}^{(n)} = \vec{x}_{n-1}^{2},
            \quad
            K^{(n)} = \vec{x}_{n}\cdot\vec{x}_{n-1},
            \\
            M^{(n)}_{\alpha} = \valpha^{2},
            \quad
            M^{(n)}_{\beta} = \vbeta^{2},
            \quad
            M^{(n)}_{\alpha\beta} = \valpha\cdot\vbeta,
        \end{gathered}
        \label{eq:NA}
    \end{equation}
    defined with  respect to the canonical Poisson bracket 
    \eqref{eq:canonicalpoisson}.
    Here the elements $\{M_{\alpha}^{(n)},M_{\beta}^{(n)},M^{(n)}_{\alpha\beta} \}$
    are central, while the other elements realise the following commutation table:
    \begin{equation}
        \begin {array}{cccccccc} 
            \poisson{\quad,\quad} & A_{+}^{(n)} & A_{-}^{(n)} & B_{+}^{(n)} & B_{-}^{(n)} &
            C_{+}^{(n)} & C_{-}^{(n)} & K^{(n)}
            \\
            A_{+}^{(n)} &0& M_{\alpha}^{(n)} & 0 & M_{\alpha\beta}^{(n)} & 
            0 & 2 A_{-}^{(n)} & A_{+}^{(n)}
            \\ 
            A_{-}^{(n)} & - M_{\alpha}^{(n)} & 0 & - M_{\alpha\beta}^{(n)} & 
            0 & -2 A_{+}^{(n)} &0&- A_{-}^{(n)}
            \\ 
            B_{+}^{(n)} & 0& M_{\alpha\beta}^{(n)} & 0 & M_{\beta}^{(n)} & 0 &
            2 B_{-}^{(n)}  & B_{+}^{(n)}
            \\ 
            B_{-}^{(n)} & - M_{\alpha\beta}^{(n)} & 0 & -M_{\beta}^{(n)} & 0 & 
            -2 B_{+}^{(n)} & 0 &- B_{-}^{(n)}
            \\ 
            C_{+}^{(n)} & 0 & 2 A_{+}^{(n)} & 0 & 2 B_{+}^{(n)} & 0 & 
            4 K^{(n)} & 2 C_{+}^{(n)}
            \\ 
            C_{-}^{(n)} & -2 A_{-}^{(n)} &0&-2 B_{-}^{(n)} & 0 & 
            -4 K_{n} & 0 &-2 C_{-}^{(n)}
            \\ 
            K^{(n)} & - A_{+}^{(n)} & A_{-}^{(n)} & - B_{+}^{(n)} & 
            B_{-}^{(n)} & - 2 C_{+}^{(n)} &2 C_{-}^{(n)} & 0
        \end {array}
        \label{eq:A10comm}
    \end{equation}
    So, $\mathcal{A}_{10}$ represents the Poisson analogue of a non-semisimple Lie algebra whose Levi decomposition is 
    given by:
    \begin{equation}
        \mathcal{A}_{10}
        =
        \langle A_{+}^{(n)},A_{-}^{(n)},B_{+}^{(n)},B_{-}^{(n)},
            M_{\alpha}^{(n)},M_{\beta}^{(n)},M_{\alpha\beta}^{(n)} \rangle 
        \oplus_{S}
        \langle C_{+}^{(n)}, C_{-}^{(n)},K^{(n)} \rangle
        \label{eq:levidec}
    \end{equation}
    Note that 
    $\langle C_{+}^{(n)}, C_{-}^{(n)},K^{(n)} \rangle\simeq \mathfrak{sl}_{2}(\mathbb{R})$.

    We were able to prove that the dynamical system associated with
    the generators \eqref{eq:NA} and the evolution \eqref{eq:vecdPIb}
    is closed.  Furthermore, we can prove that this associated system
    is integrable both according to the algebraic entropy criterion and
    the Liouville--Poisson definition.  However, the search for Casimir
    invariants of this Lie-Poisson algebra is still an open problem.

    Moreover, we note that the algebra can be extended to the algebra
    $\mathcal{A}_{2(2k+1)}$ for every $k\in\N$ in the following way:
    \begin{equation}
        \begin{gathered}
            A_{i,+}^{(n)} = \valpha_{i}\cdot\vec{x}_{n},
            \quad
            A_{i,-}^{(n)} = \valpha_{i}\cdot\vec{x}_{n-1},
            \quad
            K^{(n)} = \vec{x}_{n}\cdot\vec{x}_{n-1},
            \\
            Q_{+}^{(n)} = \vec{x}_{n}^{2},
            \quad
            Q_{-}^{(n)} = \vec{x}_{n-1}^{2},
            \quad
            M^{(n)}_{\alpha_{i}\alpha_{j}} = \valpha_{i}\cdot\valpha_{j},
        \end{gathered}
        \label{eq:NAk}
    \end{equation}
    where $i=1,\dots,k$ and $j=i,\dots,k$.
    A Levi decomposition similar to \eqref{eq:levidec} holds.
    The algebra $\mathcal{A}_{2(2k+1)}$, its Casimir(s), associated discrete dynamical
    systems and construction of the invariants will be subject of
    future works.
    \label{rem:addcoalg}
\end{remark}

\subsection{Continuum limits}
The autonomous discrete Painlev\'e equation \eqref{eq:dPI} has this 
name because under the following coordinate scaling:
\begin{equation}
    x_{n} = 1+\frac{ A}{3}h^2 X(t),
    \quad
    \alpha = -3 -\frac{AB}{3} h^4, 
    \quad 
    \beta = 6,
    \quad
    t = nh,
    \label{eq:clim}
\end{equation}
in the limit $h\to0$ reduces to:
\begin{equation}
    \ddot{X}= -A X^{2} - B,
    \label{eq:weierstrass}
\end{equation}
whose deautonomisation is the Painlev\'e I equation
\cite{InceBook}.
Notice that, since we are in the autonomous case,
this differential equation is related to the differential equation 
solved by the Weiestra\ss\ $\wp$-function \cite{WhittakerWatson1927}:
\begin{equation}
    \left[ \wp'\left( z \right) \right]^{2}
    =
    4\wp^{3}(z)-g_{2}\wp(z)-g_{3},
    \quad
    z,g_{2},g_{3}\in\Cp.
    \label{eq:weierstrassoriginal}
\end{equation}

Therefore, we can suppose that a similar scaling holds in the
$N$ degrees of freedom case.
However, it is possible to prove by direct computation that
no scaling of the form
\begin{equation}
    \vec{x}_{n} = \vec{x}_{0}+\frac{ A}{3}h^\gamma \vec{X}(t),
    \quad
    \alpha = \alpha(h), 
    \quad 
    \vbeta = \vbeta(h),
    \quad
    \vec{x}_{0}\in\Sph^{N},
    \quad \gamma\in\N,
    \label{eq:climN}
\end{equation}
where $\alpha(h)$ and $\vbeta(h)$ are analytic functions of their
argument, balances the terms in the systems \eqref{eq:vecdPI}.

So, at the moment, the nontrivial problem of the continuum limit of both systems \eqref{eq:vecdPI} is still open and we cannot relate those systems to vector extensions of the Weierstra\ss\ differential
equations \eqref{eq:weierstrassoriginal}.

\section{Vector extension of the McMillan map}
\label{sec:dPII}

Consider the so-called McMillan map \cite{McMillan1971}:
\begin{equation}
    x_{n+1}+x_{n-1}=\frac{\alpha x_{n}+\beta}{1-x_{n}^{2}}.
    \label{eq:mcm}
\end{equation}
This a famous discrete integrable system, admitting the following
biquadratic invariant:
\begin{equation}
    B^{(n)} =
        x_{n}^{2} x_{n-1}^{2}
        - x_{n}^{2}- x_{n-1}^{2}
        +\alpha x_{n} x_{n-1}
        +\beta \left( x_{n}+ x_{n-1} \right).
    \label{eq:mcmbiquadr}
\end{equation}
Historically, this was one of the first discrete systems ever 
introduced, and it is an example of a QRT map.
This system has the following discrete Lagrangian:
\begin{equation}
    L_\text{II} = x_{n+1}x_{n} 
    -\left[
        \frac{\alpha+\beta}{2} \log(x_{n}-1)+\frac{\alpha-\beta}{2}\log(x_{n}+1)
    \right].
    \label{eq:LII}
\end{equation}

\begin{remark}
    The McMillan equation \eqref{eq:mcm} is also known to be
    the autonomous limit of the discrete Painlev\'e II equation,
    as it was found in \cite{Grammaticosetal1991}, through the so-called
    singularity confinement method.
    \label{rem:singconf}
\end{remark}

An obvious $N$ degrees of freedom generalisation of the McMillan map
is given by the following system of equations:
\begin{equation}
    \vec{x}_{n+1}+\vec{x}_{n-1}=\frac{\alpha\vec{x}_{n}+\vbeta}{1-\vec{x}_{n}^{2}}.
    \label{eq:vecdPIIfull}
\end{equation}
Note that the discrete Lagrangian \eqref{eq:LII} does not generalise
to a Lagrangian for the system \eqref{eq:vecdPIIfull}.
Unfortunately, due to the lack of general theorems for discrete
Lagrangians of system of second order equations we cannot claim that
this system does not possess a Lagrangian at all.

Moreover, the algebraic entropy test applied to this system for 
$2\leq N \leq 4$ gives us the following degree sequence:
\begin{equation}
    1, 3, 9, 21, 45, 93, 189, 381, 765, 1533, 3069, 6141, 12285, 24573\dots,
    \label{eq:seqmcmNcompl}
\end{equation}
which has the following generating function:
\begin{equation}
    g\left( z \right) =  \frac{(z+2)(z+1)}{(z-1)(z-2)}. 
    \label{eq:gfmcmNcompl}
\end{equation}
This suggests that the algebraic entropy of the system \eqref{eq:vecdPIIfull} 
is $S=\log 2>0$.
That is, the algebraic entropy test suggests that the system
\eqref{eq:vecdPIIfull} is not integrable.
Notice that the algebraic entropy is not maximal (because the map associated
to \eqref{eq:vecdPIIfull} has degree 3).
This immediately suggests the existence of some commuting invariants.

On the other hand, consider the case $\vbeta=\vec{0}$:
\begin{equation}
    \vec{x}_{n+1}+\vec{x}_{n-1}=\frac{\alpha\vec{x}_{n}}{1-\vec{x}_{n}^{2}}.
    \label{eq:vecdPII}
\end{equation}
We now have for $2\leq N \leq 4$ the following growth of degrees:
\begin{equation}
    1, 3, 9, 19, 33, 51, 73, 99, 129, 163, 201, 243, 289, 339, 393, 451, 513\dots.
    \label{eq:seqmcmNbeta0}
\end{equation}
This growth is clearly sub-exponential. Indeed, computing the generating
function we obtain:
\begin{equation}
    g\left( z;\vbeta=\vec{0} \right) =
    \frac{3z^2 + 1}{(1-z)^3},
    \label{eq:gfmcmNbeta0}
\end{equation}
which implies that the growth is quadratic.
For all $N\in\N$ the system \eqref{eq:vecdPII} is radial and possesses
the following Lagrangian:
\begin{equation}
    L_\text{II}^{(N)} = \vec{x}_{n+1}\cdot\vec{x}_{n} 
    +\frac{\alpha}{2} \log\left(1- \vec{x}_{n}^{2}\right).
    \label{eq:LIIN}
\end{equation}

\begin{remark}
    We remark that the system \eqref{eq:vecdPII} was introduced for $N=2$ in 
    \cite{McLachlan1993} searching for explicit invariants for a general
    map in standard from \eqref{eq:2ndadd} with a given continuum limit.
    The $N$ degrees of freedom generalisation was suggested from radial symmetry,
    and proved to be a reduction of a discrete analogue of the Garnier
    system in \cite{Suris1994Garnier}. These systems were further 
    generalised in \cite{Suris1994Symmetric}, by considering symplectic 
    maps related to systems defined on symmetric spaces. For a general 
    review on these topics we refer to \cite[Chap. 25]{Suris2003book}.
    \label{rem:mclaclansuris}
\end{remark}

\subsection{Invariants of the system \eqref{eq:vecdPIIfull}}
In this section we will prove that the system \eqref{eq:vecdPIIfull}
is not Liouville integrable for all $N\in\N$,
but it possesses many invariants.

We start noticing that the system \eqref{eq:vecdPIIfull} is quasi-radially
symmetric, but not variational.
However, from Lemma \ref{lem:sl2general}, we have that the system
\eqref{eq:vecdPIIfull} possesses the $N\left( N-1 \right)\left( N-2 \right)/6$
invariants $K_{i,j,k}$ \eqref{eq:Kijk} and their Poisson-commuting
combinations \eqref{eq:casimir} 
(these can be defined even without a Poisson structure).

Moreover, we have that the system \eqref{eq:vecdPIIfull} admits the
coalgebra symmetry with respect to the $h_{6}$ algebra.
Indeed, we have the following associated dynamical system:
\begin{subequations}
    \begin{align}
        A_{+}^{(n+1)} &= 
        \frac{\alpha A_{+}^{(n)}+M^{(n)}}{1-B_{+}^{(n)}}-A_{-}^{(n)}, 
        \label{eq:vectdPIIfullh6evolAp}
        \\
        A_{-}^{(n+1)} &= 
        A_{+}^{(n)}, 
        \label{eq:vectdPIIfullh6evolAm}
        \\
        B_{+}^{(n+1)} &
        \begin{aligned}[t]
            &=B_{-}^{(n)}
        -\frac{\alpha^2+2 \alpha K^{(n)}+\alpha M^{(n)}+2 A_{-}^{(n)}}{1-B_{+}^{(n)}}
        \\
        &+\frac{\alpha^2+2 \alpha A_{+}^{(n)}+M^{(n)}}{\left(1-B_{+}^{(n)}\right)^2},
        \end{aligned}
        \label{eq:vectdPIIfullh6evolBp}
        \\
        B_{-}^{(n+1)} &= 
        B_{+}^{(n)},
        \label{eq:vectdPIIfullh6evolBm}
        \\
        K^{(n+1)} &= 
        \frac{\alpha B_{+}^{(n)}+A_{+}^{(n)}}{1-B_{+}^{(n)}} -K^{(n)}-M^{(n)}, 
        \label{eq:vectdPIIfullh6evolK}
        \\
        M^{(n+1)} &= 
        M^{(n)}.
        \label{eq:vectdPIIfullh6evolM}
    \end{align}
    \label{eq:vectdPIIfullh6evol}
\end{subequations}
The system is clearly closed in $h_{6}$.
By direct computation it is possible to show that the Lie--Poisson
bracket of $h_{6}$ is preserved, and finally that the central
element $M^{(n)}$ and the Casimir \eqref{eq:h6casimirthird} are preserved
by \eqref{eq:vectdPIIfullh6evol}.

Heuristically, the computation of the algebraic entropy of the
dynamical system \eqref{eq:vectdPIIfullh6evol} gives the degree
sequence \eqref{eq:seqmcmNcompl}.  So, we don't expect the system to be
integrable, but besides the central element $M^{(n)}$ and the Casimir
\eqref{eq:h6casimirthird}, we obtain the following invariant:
\begin{equation}
    \mathcal{B}^{(n)} =
    B_{+}^{(n)}B_{-}^{(n)}-B_{+}^{(n)}-B_{-}^{(n)} + \alpha \left( K^{(n)}+\frac{M^{(n)}}{2} \right)
    +A_{+}^{(n)}+A_{-}^{(n)}.
    \label{eq:vectdPIIfullB1}
\end{equation}
Note that the invariant $\mathcal{B}$ \eqref{eq:vectdPIIfullB1} is a direct
coalgebraic generalisation of the biquadratic \eqref{eq:mcmbiquadr}.

Thus, summing up these results, we get the following set of $N-1$ invariants:
\begin{equation}
    \mathcal{S}_{\text{II}} = 
    \left\{ \mathcal{B}^{(n)}, \mathcal{I}_{3}^{(n)},\dots,\mathcal{I}_{N}^{(n)} \right\}.
    \label{eq:PIIset}
\end{equation}
It is possible to prove by induction on the degrees of freedom that these
invariants are functionally independent.

The set \eqref{eq:PIIset} does not contain enough invariant to prove
integrability, especially because the system is lacking a Poisson
structure.  Considering the $2N-5$ invariants coming from Lemma
\ref{lem:qrad} we have a total of $2N-4$ independent invariants (this
can be proven again by induction).  However, we will see later that the
behaviour of the system is very regular.

\subsection{Quasi-maximal superintegrability of the $\vbeta=\vec{0}$ case \eqref{eq:vecdPII}}
The Liouville integrability of the $\vbeta=\vec{0}$ case follows
from the general case $\vbeta\neq\vec{0}$, noticing that all the invariants
are analytic in the neighbourhood of $\vbeta=\vec{0}$ and the system becomes radially symmetric.

However, here we give a short proof using the coalgebra symmetry $\Sl_{2}(\RR)$
and Theorem \ref{thm:radialquasiint}: if we find an additional commuting invariant
then the system becomes quasi-maximal superintegrable.
The associated dynamical system \eqref{eq:sl2evolutiongeneral} is:
\begin{equation}
    \begin{gathered}
        J_{+}^{(n+1)} =
        J_{-}^{(n)}
        -\frac{2\alpha J_{3}^{(n)}}{1-J_{+}^{(n)}}
        +\frac{\alpha^2 J_{+}^{(n)}}{\left(1-J_{+}^{(n)}\right)^2},
        \\
        J_{-}^{(n+1)} = J_{+}^{(n)},
        \quad
        J_{3}^{(n+1)} = -J_{3}^{(n)}
        +\frac{\alpha J_{+}^{(n)}}{1-J_{+}^{(n)}} .
    \end{gathered}
    \label{eq:sl2evolutionvecdPII}
\end{equation}

Heuristically, the computation of the algebraic entropy of the dynamical
system \eqref{eq:sl2evolutionvecdPII} gives the degree sequence
\eqref{eq:seqmcmNbeta0}, implying integrability.  If we search for
invariants of this system, besides the Casimir \eqref{eq:sl2casimir}, 
we obtain the following invariant:
\begin{equation}
    \mathcal{M}^{(n)} =
    J_{+}^{(n)} + J_{-}^{(n)}- J_{3}^{(n)} \left(\alpha+J_{3}^{(n)}\right).
    \label{eq:MdPII}
\end{equation}
Note that the invariant $\mathcal{M}^{(n)}$ \eqref{eq:MdPII} is a direct
coalgebraic generalisation of the biquadratic 
$\tilde{B}^{(n)}=B^{(n)}\left( x_{n},x_{n-1};\beta=0 \right)$,
with $B$ given by \eqref{eq:mcmbiquadr}.

So, from Theorem \ref{thm:radialquasiint} we build the set of invariants:
\begin{equation}
    \mathcal{P}_\text{II}\left( \vbeta=\vec{0} \right) = 
    \left\{\mathcal{M}^{(n)}, \mathcal{C}_{2}^{(n)},\dots,\mathcal{C}_{N}^{(n)}\right\}.
    \label{eq:invariantsdPII}
\end{equation}
Functional independence and involutivity in this set can be proved
by induction.
So, we proved that the system \eqref{eq:vecdPII} is
Liouville integrable, and then it is quasi-maximally superintegrable.

\subsection{Continuum limits}
It is known that the McMillan map \eqref{eq:mcm} has a cubic oscillator as
continuum limit, see for instance \cite{CresswellJoshi1999} for
the associated non-au\-ton\-o\-mous case.
For its $N$ degrees of freedom generalisation \eqref{eq:vecdPIIfull}
we have an analogous property.
The case $\vbeta=\vec{0}$ was considered in \cite{McLachlan1993,Suris1994Garnier}.

To be more precise, consider the following scaling:
\begin{equation}
    \vec{x}_{n} = h \vec{X}\left( t \right),
    \quad
    \alpha = 2 + h^{2} A,
    \quad
    \vbeta = h^{3} \vec{B},
    \quad 
    t = n h,
    \quad
    h \to 0.
    \label{eq:dPIIscaling}
\end{equation}
Substituting into the left hand side of equation \eqref{eq:vecdPIIfull} 
we get:
\begin{equation}
    \vec{x}_{n+1} + \vec{x}_{n-1}
    =
    h \left[ 2\vec{X}\left( t \right) + h^{2} \vec{\ddot{X}}\left( t \right) \right]
    + O\left( h^{4} \right).
    \label{eq:dPIIclimlhs}
\end{equation}
For the right hand side we have:
\begin{equation}
    \frac{\alpha\vec{x}_{n}+\vbeta}{1-\vec{x}_{n}^{2}}
    =
    2 h\vec{X}\left( t \right) + h^{3}
    \left[ \left( A +\vec{X}^{2} \right) \vec{X} + 2\vec{B}  \right]
    + O\left( h^{4} \right).
    \label{eq:dPIIclimrhs}
\end{equation}
So, comparing the two sides and balancing the terms in $h$ we obtain
the system:
\begin{equation}
    \vec{\ddot{X}}
    = \left( A + \vec{X}^{2} \right) \vec{X} + 2\vec{B}.
    \label{eq:NDHirota}
\end{equation}
This vector system is a vector non-linear cubic oscillator, 
a natural generalisation of the one degree of freedom cubic oscillator.
Such a system possesses the following Lagrangian:
\begin{equation}
    \mathcal{L}_\text{II}=
    \frac{1}{2}\vec{\dot{X}}^{2}
    -\frac{1}{2}\left(A+\vec{X}^{2}  \right)\vec{X}^{2} -2\vec{B}\cdot\vec{X} .
    \label{eq:LIIcont}
\end{equation}
A trivial invariant is the energy (which coincides up to Legendre
transformation with the Hamiltonian):
\begin{equation}
    E = 
    \frac{1}{2}\vec{\dot{X}}^{2}
    +\frac{1}{2}\left(A+\vec{X}^{2}  \right)\vec{X}^{2} 
    +2\vec{B}\cdot\vec{X}.
    \label{eq:energydPII}
\end{equation}
In the integrable case \eqref{eq:vecdPII} we can also consider the
continuum limit in the Lagrangian \eqref{eq:LII}:
\begin{equation}
L_\text{II} = h^{4} \mathcal{L}_\text{II}\left( \vec{B}=\vec{0} \right) + O\left(h^{5}\right).
    \label{eq:LIIclim}
\end{equation}
Therefore, we can see clearly a drastic difference between the continuum
and the discrete case: the continuum case is always variational and
the variational structure itself yields ``for free'' the invariant
\eqref{eq:energydPII}.

Let us now discuss the invariants.  Under the scaling
\eqref{eq:dPIIscaling} the elements of the discrete angular momentum
become:
\begin{equation}
    L_{i,j}^{(n)} = h^{3} \ell_{i,j}+ O\left(h^{4}\right), 
    \quad
    \ell_{i,j} = -X_{i}\dot{X}_{j}+\dot{X}_{i}X_{j},
    \label{eq:lijcont}
\end{equation}
while the invariants $K_{i,j,k}$ become:
\begin{equation}
    K_{i,j,k}^{(n)} = h^{6} \kappa_{i,j,k}+ O\left(h^{7}\right), 
    \quad 
    \kappa_{i,j,k} = B_{i}\ell_{j,k}+B_{j}\ell_{k,i}+ B_{k}\ell_{i,j}.
    \label{eq:kijkcont}
\end{equation}
Following for instance \cite{Ballesteros_et_al2009}, we have that these
can be used to build the commuting set of invariants:
\begin{subequations}
    \begin{align}
        I_{m} &=
        \sum_{1\leq i<j<k\leq m} \kappa_{i,j,k}^{2}, 
        \quad
        m=3,\dots,N,
        \label{eq:Irmm}
        \\
        J_{m} &=
        \sum_{N-m+1\leq i<j<k\leq N} \kappa_{i,j,k}^{2}, 
        \quad 
        m=3,\dots,N.
        \label{eq:Jrmm}
    \end{align}
    \label{eq:kijlbuild}%
\end{subequations}
In the same way the invariant \eqref{eq:vectdPIIfullB1} is related
to the energy integral \eqref{eq:energydPII}:
\begin{equation}
    \mathcal{B}^{(n)} = h^{4} E+ O\left(h^{5}\right).
    \label{eq:Bnlimit}
\end{equation}
The set of invariants
$\mathcal{Q} = \left\{ E, I_{3},\dots,I_{N} \right\}$,
makes the system quasi-integrable.

In the quasi-maximally superintegrable case $\vbeta=\vec{0}$, we have that 
this property is preserved by the continuum limit, which clearly
correspond to put $\vec{B}=\vec{0}$.
To see this, just notice that $E\left( \vec{B}=\vec{0} \right)$ stays
an invariant, and that we can construct the left and right Casimirs
of $\Sl_{2}(\RR)$ from formula \eqref{eq:casimirsl2} as:
\begin{subequations}
    \begin{align}
        C_{m} &=
        \sum_{1\leq i<j\leq m} \ell_{i,j}^{2}, 
        \quad
        m=2,\dots,N,
        \label{eq:Crmm}
        \\
        D_{m} &=
        \sum_{N-m+1\leq i<j\leq N} \ell_{i,j}^{2}, 
        \quad 
        m=2,\dots,N.
        \label{eq:Drmm}
    \end{align}
    \label{eq:lijlbuild}%
\end{subequations}
So, the set of invariants
$\mathcal{Q}\left( \vec{B}=\vec{0} \right) = 
\left\{ E\left( \vec{B}=\vec{0} \right), C_{2},\dots,C_{N} \right\}$,
make the system Liouville integrable, and the existence
of the additional non-commuting invariants \eqref{eq:lijcont} 
make it quasi-maximally superintegrable.

To conclude this section we show some pictures highlighting the numerical
validation of the continuum limit we just presented.  In particular we
compare the orbits of the continuous case, ODE \eqref{eq:NDHirota}, with
its discrete counterpart, equation \eqref{eq:vecdPIIfull} after applying
the scaling \eqref{eq:dPIIscaling}. The two equations are solved in
four degrees of freedom with initial values $(\vec{x}_{0},\vec{x}_{-1})=
(0.1, 0.1, 0.1, 0.1 ,0.1,0,0,0.1)$, $\alpha = -2$, $h=0.5$ and $\vbeta=
(0.05,0.2,0.1,0.25)$.  In figure \ref{fig:orbits} the numerical
orbits are plotted.  In particular we note that both orbits look
very close to an integrable one.  Incidentally, this result exhibits
another limit of the ``orbit method'' to identify integrability, see
for instance \cite{Viallet2008IJGMP,GJTV_sanya}.  Finally, figures
\ref{fig:xdotbeta} and \ref{fig:norm} show the values of $B_+^{(n)} =
\boldsymbol{\beta}\cdot\mathbf{x}_n$ and $C_+^{(n)} =\mathbf{x}_n^2$  for
the continuous case and the discrete case.  These two figures highlight
a simple oscillatory behaviour, which in both cases mimic the behaviour
of the system with one degree of freedom.

\begin{figure}[ht]
    \centering
    \subfloat{\label{Fig_phaseplot_ode}\includegraphics[width=0.45\linewidth]{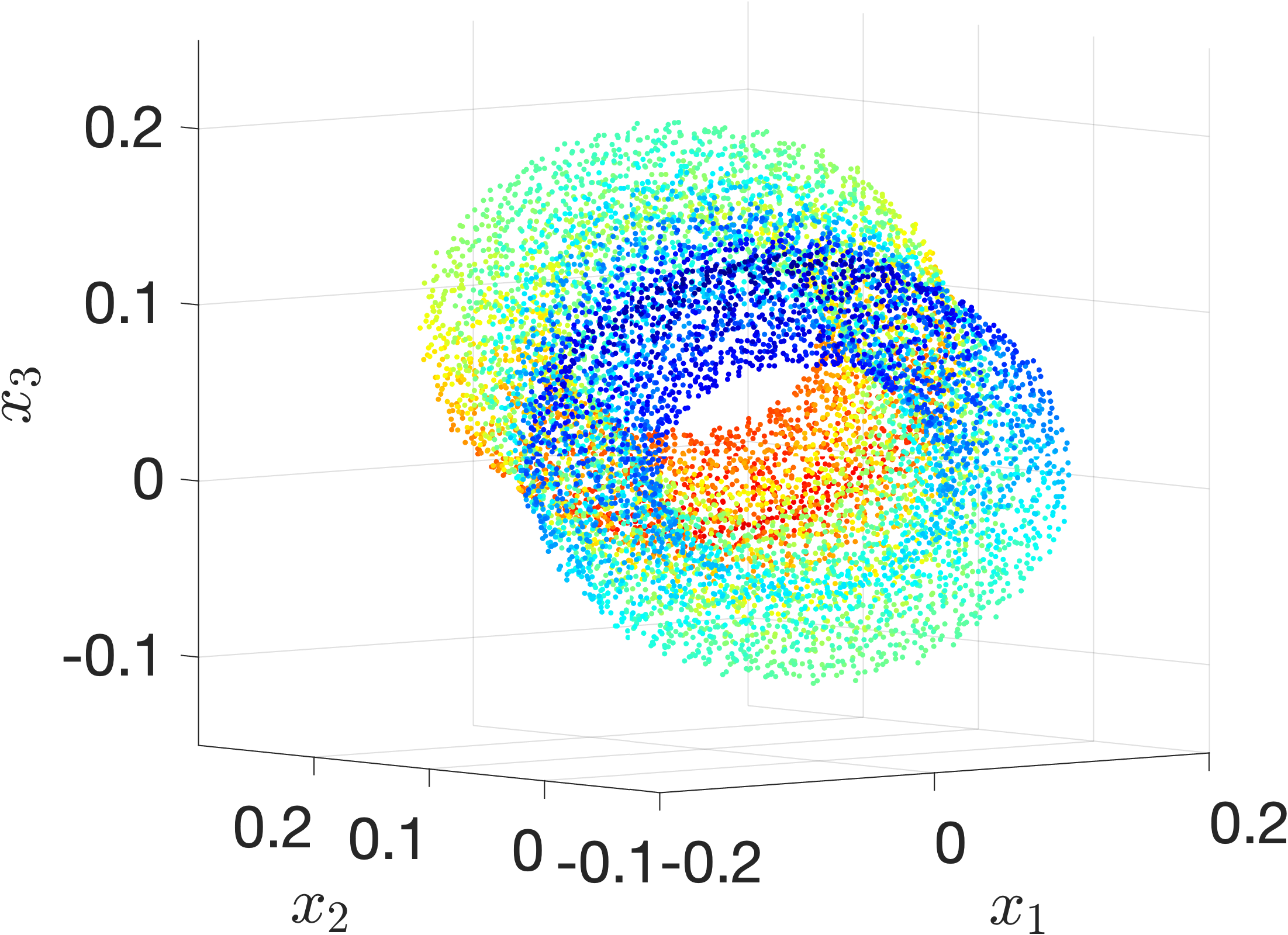}}
    \hspace{.5cm}
    \subfloat{\label{Fig_phaseplot_map}\includegraphics[width=0.45\linewidth]{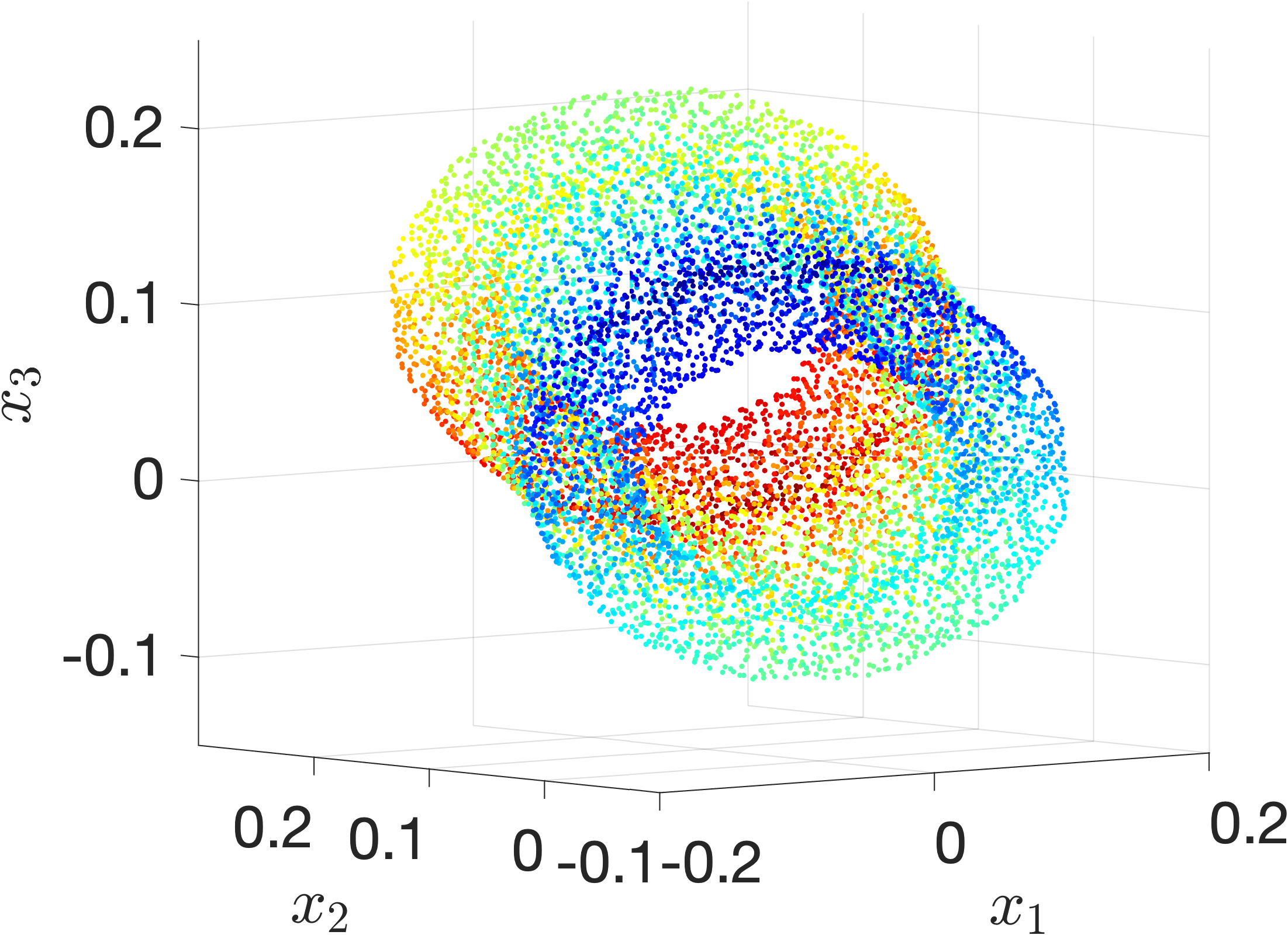}}
    \caption{Orbits of equation \eqref{eq:NDHirota} (left) 
    and \eqref{eq:vecdPIIfull} after applying the scaling 
    \eqref{eq:dPIIscaling} (right)}
    \label{fig:orbits}
\end{figure}

\begin{figure}[ht]
    \centering
    \subfloat{\label{Fig_xdotbeta_ode}\includegraphics[width=0.45\linewidth]{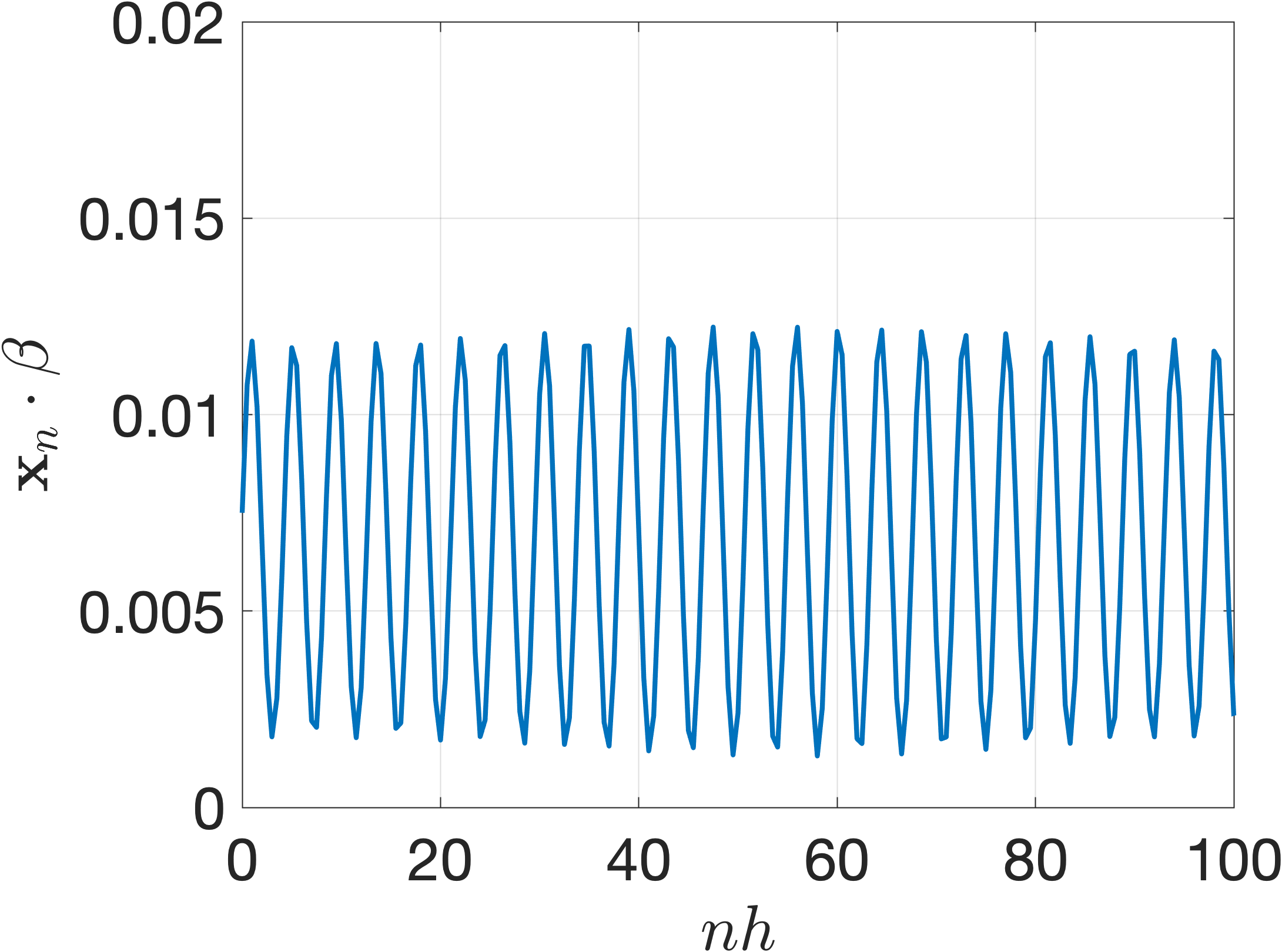}}
    \hspace{.5cm}
    \subfloat{\label{Fig_xdotbeta_map}\includegraphics[width=0.45\linewidth]{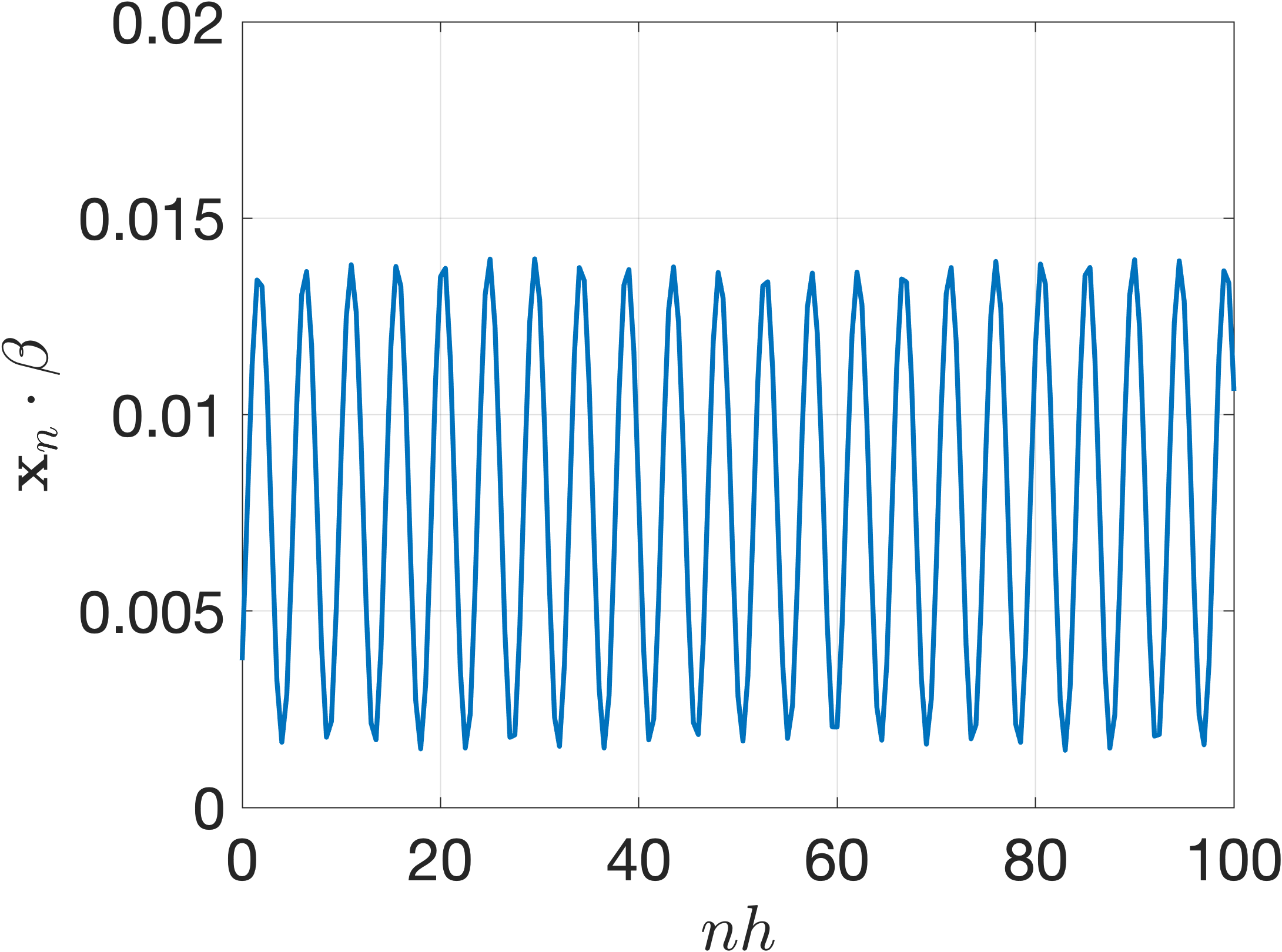}}
    \caption{$B_+^{(n)} = \boldsymbol{\beta}\cdot\mathbf{x}_n$ for the continuous case (left) and the discrete case (right).}
    \label{fig:xdotbeta}
\end{figure}

\begin{figure}[ht]
    \centering
    \subfloat{\label{Fig_norm_ode}\includegraphics[width=0.45\linewidth]{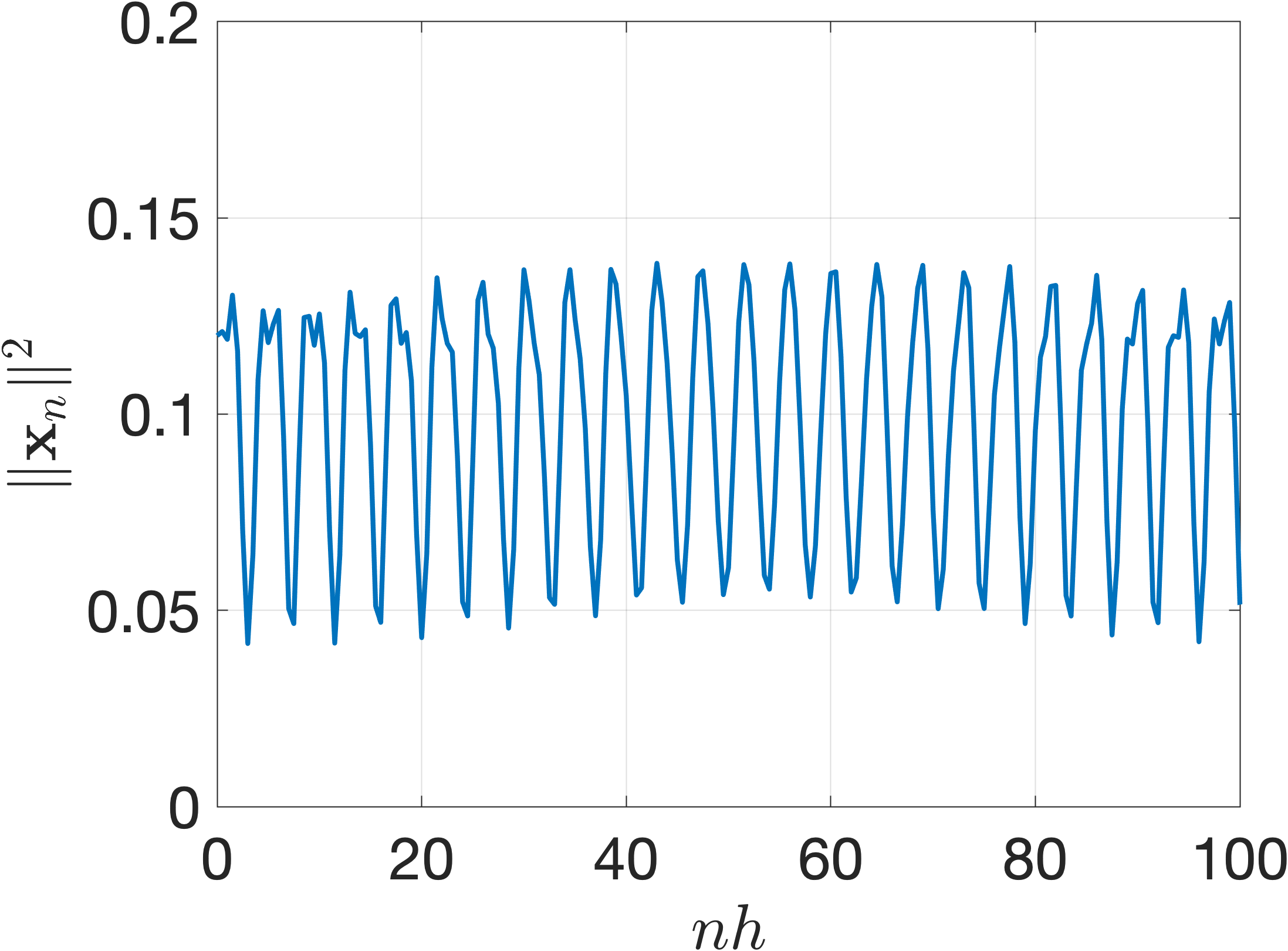}}
    \hspace{.5cm}
    \subfloat{\label{Fig_norm_map}\includegraphics[width=0.45\linewidth]{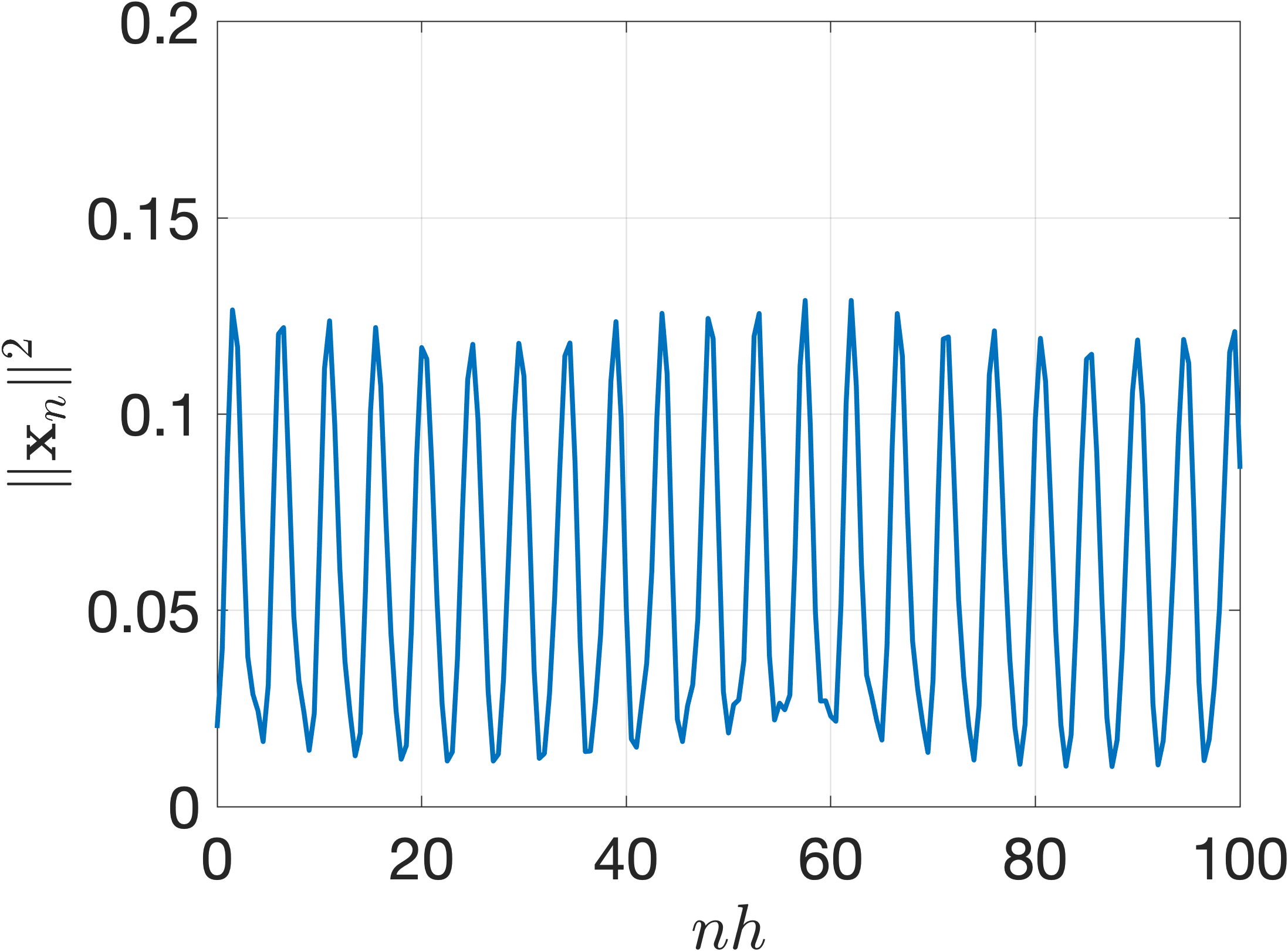}}
    \caption{$C_+^{(n)} = \mathbf{x}_n^2$ for the continuous case (left) and the discrete case (right).}
    \label{fig:norm}
\end{figure}

\section{Conclusions}
\label{sec:concl}

In this paper we presented a discrete analogue of the coalgebra
approach to generate systems in $N$ degrees of freedom as extensions
of integrable systems in one degree of freedom.  We gave two general
results on the integrability properties of two classes of systems of
second order difference equation in standard form.  Namely, we proved
in Theorem \ref{thm:radialquasiint} that all radially-symmetric
systems \eqref{eq:radsym} are quasi-integrable, and in Theorem
\ref{thm:qradialint} that all quasi-radially symmetric systems are PLN
maps of rank $N-2$. Up to our knowledge, this is the first time that
quasi-integrable discrete systems are produced. Then we considered the
following explicit examples of systems in $N$ degrees of freedom:
\begin{enumerate}
    \item A maximally superintegrable radially-symmetric linear 
        equation \eqref{eq:linear} in $N$ degrees of freedom.
    \item A superintegrable quasi-radially-symmetric linear equation 
        \eqref{eq:linearqr} in $N$ degrees of freedom.
    \item A quasi-integrable deformation of the previous system 
        \eqref{eq:nlinearqr}.
    \item Two different $N$ degrees of freedom generalisations
        of the autonomous discrete Painlev\'e I equation \eqref{eq:vecdPI}.
    \item A non-integrable $N$ degrees of freedom generalisation
        of the McMillan map \eqref{eq:vecdPIIfull} with many invariants.
    \item A quasi-maximally superintegrable radially-symmetric $N$ 
        degrees of freedom generalisation of a special case of the 
        McMillan map \eqref{eq:vecdPII}.
\end{enumerate}
In particular, the system \eqref{eq:nlinearqr} remarkably shows
the entropy gap: as soon as $\varepsilon\neq 0$ the $N$th invariant is
missing and the algebraic entropy becomes positive.
Despite this the (real) orbits of the system are very regular.
Furthermore, we remark that, while some of the systems we considered
were known in the literature, up to our knowledge the vector
generalisation of the autonomous discrete Painlev\'e I equation
\eqref{eq:vecdPIa} is new.

As usual in discrete systems theory, the discrete case is more complicated
than its continuum counterpart: while in the continuum case an $N$
degrees of freedom system is built on top of the $N$th coproduct of a
function, the Hamiltonian, in the discrete setting we define a Poisson
map to admit a coalgebra symmetry when the evolution of the generators
of the algebra is closed in the algebra and preserves the Casimir of the
algebra itself.  The last requirement is fundamental since it allows us
to prove the existence of the invariants \eqref{eq:Ctotg}, and then use
the construction to generate systems with $N$ degrees of freedom with
a given number of invariants.

The three conditions of Definition \ref{def:dcoalgebrasymmety}
can be used to build systems admitting the coalgebra symmetry out
of general ones.
In particular, we highlight with an example that the Casimir condition
can greatly help us.
For sake of simplicity consider a variational difference equation
of standard form \eqref{eq:2ndadd} for $N=2$ with $\vx_{n} = (x_{n},y_{n})^{T}$
and the algebra $\Sl_{2}(\RR)$.
Computing the evolution of the generators of $\Sl_{2}(\RR)$ we obtain:
\begin{subequations}
    \begin{align}
        J_{+}^{(n+1)} &= 
        \left[ x_{n-1}-\pdv{V}{x_{n}}(x_{n},y_{n}) \right]^{2}+
        \left[ y_{n-1}-\pdv{V}{y_{n}}(x_{n},y_{n}) \right]^{2},
        \\
        J_{-}^{(n+1)} &= J_{+}^{(n)},
        \\
        J_{3}^{(n+1)} &= 
        x_{n}\left[x_{n-1}-\pdv{V}{x_{n}}(x_{n},y_{n}) \right]^{2}+
        y_{n}\left[y_{n-1}-\pdv{V}{y_{n}}(x_{n},y_{n}) \right]^{2}.
    \end{align}
    \label{eq:algsl2clos}
\end{subequations}
We have that the commutation relations of the $\Sl_{2}(\RR)$ are preserved
for all $V$, while it is not trivial to understand if the right hand side
of the expressions in \eqref{eq:algsl2clos} are $\Sl_{2}(\RR)$.
However, it is pretty simple to check if the Casimir \eqref{eq:sl2casimir}
is preserved.
Computing the difference between $C^{(n+1)}$ and $C^{(n)}$ we obtain the
simple expression:
\begin{equation}
    \left(x_{n} \pdv{V}{y_{n}} -y_{n}\pdv{V}{x_{n}}  \right) 
    \left[x_{n} y_{n-1} -x_{n-1} y_{n} 
    -\frac{1}{2}\left(x_{n} \pdv{V}{y_{n}} -y_{n}\pdv{V}{x_{n}}  \right) \right]
    = 0.
    \label{eq:Veqsl2N2}
\end{equation}
Since the second factor cannot be zero because $V=V(x_{n},y_{n})$, the first
factor gives us a linear PDE for $V$. Solving it we obtain
$V=V\bigl(\sqrt{x_{n}^{2}+y_{n}^{2}}\bigr)$.
With such value of $V$ we have that the right hand side of the system 
\eqref{eq:algsl2clos} lies in $\Sl_{2}(\RR)$.
This reasoning can be extended to $N>2$, and we obtain that the only
variational difference equation of standard form \eqref{eq:2ndadd} admitting
the $\Sl_{2}(\RR)$ coalgebra are exactly the radial difference equations
\eqref{eq:radsym}.
Note that in this reasoning the variational structure is not restrictive
because we are interested in studying integrability.

We remark that in all the examples we presented the associated dynamical
system on the generators of the algebra \eqref{eq:closurerelation} is a
fundamental tool in studying the integrability of the original difference
equation.  In this sense the system \eqref{eq:closurerelation} plays a
role even more fundamental than the $N$ degrees of freedom Hamiltonian
which gives only one additional invariant.  This is particularly evident
in the case of equation \eqref{eq:vecdPIa} where, due to the presence
of the two-photon $h_{6}$ coalgebra, one should have constructed the
second invariant with other methods, see \cite{BallesterosHerranz2001}.
In particular, our examples seem to suggest that the integrability
properties of an underlying Poisson map are completely governed by those
of the associated dynamical system on the generators of the algebra
\eqref{eq:closurerelation}.  To be more precise, we make this statement
rigorous in the following conjecture:
\begin{conjecture*}
    A Poisson map $T\colon n \mapsto n+1$
    admitting a coalgebra symmetry $(\Alg,\Delta)$,
    is Liouville integrable \emph{if and only if} the evolution of
    the generators \eqref{eq:closurerelation} is Poisson--Liouville integrable.
\end{conjecture*}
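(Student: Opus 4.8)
The plan is to read the conjecture through the Poisson submersion induced by the $N$ degrees of freedom realisation. Fix $N$ large enough that the $K$ functions $\Delta^{(N)}(A_1),\dots,\Delta^{(N)}(A_K)$ are functionally independent on $\RR^{2N}$ (this holds as soon as $2N\ge K$ for the realisations of interest) and set $\Phi=(\Delta^{(N)}(A_1),\dots,\Delta^{(N)}(A_K))\colon\RR^{2N}\to\RR^{K}\simeq\Alg$. Since $\Delta^{(N)}$ is a Poisson map and the $N$ degrees of freedom realisation is symplectic, $\Phi$ is a Poisson submersion, and the closure relations \eqref{eq:closurerelation} say exactly that $\Phi\circ T=\tau\circ\Phi$, where $\tau\colon\RR^{K}\to\RR^{K}$ has components $a_1,\dots,a_K$. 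One thus obtains a dictionary: a function $g$ on $\RR^{K}$ is a $\tau$-invariant if and only if $g\circ\Phi$ is a \emph{coalgebraic} $T$-invariant, i.e. an invariant of $T$ depending on the generators $\Delta^{(N)}(A_i)$ alone, and $\{g_1\circ\Phi,g_2\circ\Phi\}=\{g_1,g_2\}_{\Alg}\circ\Phi$. Writing $2\rho=K-r$ for the rank of the Poisson structure of $\Alg$, Poisson--Liouville integrability of $\tau$ means it possesses the $r$ Casimirs $C_1,\dots,C_r$ together with $\rho$ further functions $h_1,\dots,h_\rho$, all in involution.

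For the implication ``$\tau$ integrable $\Rightarrow$ $T$ integrable'' I would mimic the continuum coalgebra construction. Push the Casimirs through all lower coproducts to get the tower $\mathcal F_{m,j}=\Delta^{(m)}(C_j)$, which are $T$-invariants by Theorem \ref{thm:coalgebraconstruction} and which pairwise commute and commute with every $\Delta^{(N)}(A_i)$ by Lemma \ref{lem:comm}; push the extra invariants through the top coproduct to get $H_\ell=\Delta^{(N)}(h_\ell)$, which are $T$-invariants because \eqref{eq:closurerelation} transports $h_\ell(A^{(n+1)})=h_\ell(A^{(n)})$ to the realisation. Involutivity of the full family follows at once: $\{H_k,H_\ell\}=\{h_k,h_\ell\}_{\Alg}\circ\Phi=0$; $\{\mathcal F_{m,j},\mathcal F_{m',j'}\}=0$ by Lemma \ref{lem:comm}; and $\{\mathcal F_{m,j},H_\ell\}=0$ because $H_\ell$ is a function of the $\Delta^{(N)}(A_i)$, with each of which $\mathcal F_{m,j}$ commutes, so the Leibniz rule closes it. What remains is a functional--independence count: the tower $\{\mathcal F_{m,j}\}$ contributes exactly $N-\rho$ independent invariants for a generic realisation (this is the discrete transcription of the standard coalgebra count; it returns $N-1$ for $\Sl_{2}(\RR)$ and $N-2$ for $h_6$, consistently with Lemmas \ref{lem:rad} and \ref{lem:qrad}), and since the pullbacks of the Casimirs already sit inside this tower, adjoining $H_1,\dots,H_\rho$ — which are independent precisely because $C_1,\dots,C_r,h_1,\dots,h_\rho$ are independent on $\RR^{K}$ by hypothesis — brings the total to $N$, so $T$ is Liouville integrable (and superintegrable exactly when $\tau$ carries additional commuting invariants). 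The only non-routine point here is the count $N-\rho$; I would establish it by induction on $N$, using that each step adds one new independent invariant because of the coassociativity identity $\mathcal F_{N,j}=\mathcal G_{N,j}$ and the fact that the left and right towers separate points of the generic symplectic leaf.

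The hard direction is ``$T$ integrable $\Rightarrow$ $\tau$ integrable''. The difficulty is that a commuting set realising Liouville integrability of $T$ need not be coalgebraic: the invariants $\mathcal F_{m,j}$ with $m<N$ are manifestly not functions of $\Delta^{(N)}(A_i)$ alone and do not descend through $\Phi$. By the dictionary, $\tau$ is Poisson--Liouville integrable iff $T$ admits $r+\rho$ coalgebraic commuting invariants; the $r$ Casimir invariants of $\tau$ are always present, so what must be shown is that Liouville integrability of $T$ forces $\rho$ further invariants of $\tau$ in involution with them. The approach I would try exploits that the conjecture quantifies over all $N$: for $N$ large the already available commuting tower $\{\mathcal F_{m,j}\}$ saturates the ``non-coalgebraic'' directions — it has $N-\rho$ independent members and, away from the Casimirs, is transverse to the $\Phi$-basic functions — so that \emph{any} completion of it to a maximal commuting set of $N$ invariants of $T$ is forced to introduce $\rho$ new directions that must be $\Phi$-basic, and pushing those down by $\Phi$ would produce the missing invariants of $\tau$. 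Turning ``saturates'' and ``forced to be $\Phi$-basic'' into a proof — essentially a rank statement, valid on a dense open set, for the distribution generated by the Hamiltonian vector fields of $\{\mathcal F_{m,j}\}$ together with the $\Phi$-vertical distribution — is the main obstacle, and it is presumably here that genericity of the realisation, or the restriction to Lie--Poisson $\Alg$ with primitive coproduct, has to enter. A weaker partial result worth isolating along the way is the entropy shadow of the conjecture: since $\Phi$ semiconjugates $T$ to $\tau$, the algebraic entropy of $\tau$ does not exceed that of $T$, so integrability of $T$ already precludes exponential degree growth of \eqref{eq:closurerelation}, in agreement with every example of Sections \ref{sec:dPI} and \ref{sec:dPII}.
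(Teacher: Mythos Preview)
The statement is presented in the paper as an open \emph{conjecture} (Section~\ref{sec:concl}); the authors give no proof, only the evidence of the examples in Sections~\ref{sec:general}--\ref{sec:dPII}. There is no paper proof to compare against---you are attacking an open problem.

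Your Poisson-submersion framework and the dictionary between coalgebraic $T$-invariants and $\tau$-invariants are the right setting, and the entropy inequality at the end is correct and worth recording. But neither direction is closed.

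In the forward direction, everything hinges on the claim that the tower $\{\mathcal F_{m,j}\}$ yields exactly $N-\rho$ independent invariants. The paper explicitly warns this is \emph{not} available in general: the first final remark in Section~\ref{sss:coalgebra} (citing \cite{BallesterosBlasco2008}) says the number of independent elements extractable from $\mathcal L_{r,N}$ depends on the chosen symplectic realisation and cannot be determined a priori from $(\Alg,\Delta)$. Your inductive sketch does not prove the count; the $\Sl_{2}(\RR)$ and $h_6$ values you quote are established case by case in the continuum literature, not by a general theorem of the kind you would need here.

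In the reverse direction you have located the crux---that completing the tower to $N$ commuting $T$-invariants forces $\Phi$-basic functions---but not established it. Two gaps remain. First, Liouville integrability of $T$ supplies \emph{some} commuting $N$-tuple, which need neither contain nor commute with the tower (in the superintegrable situation invariants of $T$ do not all pairwise commute), so even the existence of a completion of the tower by $T$-invariants requires an argument. Second, granting such a completion, the step ``forced to be $\Phi$-basic'' is a statement about the Poisson commutant of the tower inside the algebra of $T$-invariants---namely that it is generated by $\Phi^*\mathcal C^\infty(\Alg)$ together with the tower itself---and this is precisely the substantive content of the conjecture. Your rank/transversality outline does not supply it. In short, the proposal identifies the architecture a proof would need but leaves both load-bearing lemmas open.
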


This conjecture can be used as a guiding criterion to find more integrable
cases, starting from instance from a given coalgebra structure. As
we mentioned in the introduction, integrable discrete systems in one
degree of freedom are almost completely understood in terms of QRT
mappings \cite{QRT1988,QRT1989}.  QRT mappings have been classified
in nine canonical forms in \cite{Ramanietal2002}.  The additive form
\eqref{eq:mst1d} is the first of these nine canonical forms.  We plan to
address to the problem of finding the $N$ degrees of freedom version of
these maps admitting some notable coalgebra symmetry, like the $\Sl_2(\RR)$
algebra or the $h_{6}$ algebra.

We also note that the production of many examples of discrete integrable
systems in $N$ degrees of freedom could help to understand the geometric
mechanism behind integrability for systems with many degrees of freedom.
Indeed, while for one degree of freedom discrete integrable systems
almost all properties can be explained in terms of involution on elliptic
curves and fibrations \cite{Sakai2001,Duistermaat2011book,Tsuda2004}, it
is known that integrable systems in more degrees of freedom are not always
related to elliptic fibrations, see \cite{GJTV_sanya,JoshiViallet2017,GJTV_class}.

Finally, we also plan to address to the problem of finding non-autonomous
versions of the coalgebraic integrable systems, e.g. with the singularity
confinement method \cite{Grammaticosetal1991}, and prove rigorously their
growth properties with the approach from \cite{Viallet2015}.

\section*{Acknowledgements}

We thank Prof. Yu. B. Suris for bringing to our attention his
interesting papers related to this work.

GG has been supported by Fondo Sociale Europeo del Friuli Venezia Giulia,
Programma operativo regionale 2014--2020 FP195673001 (A/Prof. T. Grava
and A/Prof. D. Guzzetti).

DL was supported by Australian Research Council Discovery Project DP190101529
(A/Prof. Y.-Z. Zhang).

BKT was supported by the European Unions Horizon 2020 research and innovation
programme under the Marie Sklodowska-Curie grant agreement (No. 691070)
    
\appendix

\section{Algorithm to find integrals of discrete systems}
\label{app:findinv}

In this appendix we recall briefly a method for
finding invariants of birational maps presented first
in \cite{FalquiViallet1993} and recently reprised in
\cite{CelledoniEvripidouMcLareOwewnQuispelTapleyvanderKamp2019},
where such a result was interpreted in terms of \emph{discrete Darboux
polynomials}.

In the case when a $M$-dimensional difference equation \eqref{eq:zeq}
is rational we can transform its map form into a projective map by
homogenising the variables.  To use all the advantages of projective and
algebraic geometry we usually consider the projective space to be defined
on the complex field, that is we consider the complex projective space
of dimension $M$, $\Pj^{M}$, with coordinates $[X_{1}:\dots:X_{M+1}]$.
In such a case, we denote the map by $\varphi\colon \Pj^{M}\to\Pj^{M}$.
In the case when the map possess an inverse $\psi\colon \Pj^{M}\to\Pj^{M}$
which is a rational map too then the map is said to be \emph{birational}
\cite{Shafarevich1994}.  Due to birationality the following relations
hold:
\begin{equation}
    \psi \circ \varphi= \kappa \Id,
    \quad
    \varphi \circ \psi= \lambda \Id,
    \quad
    \kappa,\lambda\in\Cp_{h}\left[ x_{1},\cdots,x_{n+1} \right].
    \label{eq:kappadef}
\end{equation}
The polynomials $\kappa$ and $\lambda$ admit a possibly trivial 
factorisation  of the form:
\begin{equation}
    \kappa = \prod_{i=1}^{K_{\kappa}} \kappa_{i}^{d_{\kappa,i}},
    \quad
    \lambda = \prod_{i=1}^{K_{\lambda}} {\lambda}_{i}^{d_{\lambda,i}}.
    \label{eq:kappafact}
\end{equation}
The map $\varphi$ is ill-defined on the singular locus $V_{\varphi}=
\left\{ \kappa=0 \right\}$, while the map $\varphi$ is ill-defined on
the singular locus $V_{\psi}= \left\{ \lambda=0 \right\}$.  The singular
loci form an algebraic variety of codimension one and measure zero.

In this picture an invariant is a homogeneous function $I=I(X_{1},\dots,X_{M+1})$
such that the pullback 
\begin{equation}
    \varphi^{*}(I)(X_{1},\dots,X_{M+1}):=I(\varphi(X_{1},\dots,X_{M+1})),
    \label{eq:pullback}
\end{equation}
satisfies $\varphi^{*}(I) = I$.
Now, if the invariant is a ratio of homogeneous polynomials,
that is $I\in\Cp_{h}(X_{1},\dots,X_{M+1})$, we can write
$I=P/Q$, with $P,Q\in\Cp_{h}[X_{1},\dots,X_{M+1}]$.
Equation \eqref{eq:pullback} implies:
\begin{equation}
    \varphi^{*}(P) = a P \quad \text{and} \quad \varphi^{*}(Q) = a Q
    \label{eq:covariance}
\end{equation}
for some polynomial factor $a\in\Cp_{h}[X_{1},\dots,X_{M+1}]$. 
That is the polynomials $P,Q$ are \emph{covariant}.

To find covariant polynomials we use the fact that the polynomial $a$ must
be composed by the factors of the polynomial $\kappa$, see \cite[Lemma
4.1]{FalquiViallet1993}. So, we can search for invariants imposing the
form of $P$, then searching for the appropriate cofactors, building them
from the factorisation \eqref{eq:kappafact}.  We get an invariant when we
obtain more than one solution for the \emph{same $a$}. By taking ratios
of the solutions we obtain the invariants.

The disadvantage of this algorithm is that it is not bounded as we don't
know a priori the degree of $P$.  However, in practice this approach
is quite useful for the explicit computation of the invariants, since
the conditions in \eqref{eq:covariance} are \emph{linear}, even though
their number can become huge as $\deg (P)$ and $M$ grow.

\section{Algebraic entropy}

An integrability criterion unique to birational systems with
discrete degrees of freedom is \emph{low growth condition}
\cite{Veselov1992,FalquiViallet1993,BellonViallet1999}. To be specific,
we state the following criterion of integrability:
\begin{definition}[Algebraic entropy \cite{BellonViallet1999}]
    Consider an $M$-dimensional difference equation \eqref{eq:zeq}.
    If the associated projective map $\varphi\colon \Pj^{M}\to\Pj^{M}$
    is birational, then we say that equation \eqref{eq:zeq} is 
    \emph{integrable in the sense of the algebraic entropy}
    if the following limit
    \begin{equation}
        S_{\varphi} = \lim_{k\to\infty}\frac{1}{k}\log \deg \varphi^{k},
        \label{eq:algentdef}
    \end{equation}
    \label{def:algent}
    called the \emph{algebraic entropy} is zero for every initial
    condition.
\end{definition}

Algebraic entropy is an \emph{invariant} of birational maps, meaning
that its value is unchanged up to birational equivalence.
Practically algebraic entropy is a measure of the \emph{complexity} of
a map, analogous to the one introduced by Arnol'd \cite{Arnold1990}
for diffeomorphisms.
In this sense growth is given by computing the number of intersections of the
successive images of a straight line with a generic hyperplane in
complex projective space \cite{Veselov1992}.

In principle, the definition of algebraic entropy in equation
\eqref{eq:algentdef} requires us to compute all the iterates of
a birational map $\varphi$ and take the limit as $k\to\infty$.
However, in the majority of applications, the asymptotic behaviour
of the sequence of degrees can be inferred by using generating
functions \cite{Lando2003}:
\begin{equation}
    g\left( z \right) = \sum_{n=0}^{\infty} d_{k}z^{k},
    \quad
    d_{k}=\deg\varphi^{k}.
    \label{eq:genfunc}
\end{equation}
A generating function is a predictive tool which can be used to test the
successive members of a finite sequence.  It follows that the algebraic
entropy is given by the logarithm of the smallest pole of the generating
function, see \cite{GubbiottiASIDE16,GrammaticosHalburdRamaniViallet2009}.
A birational map (or its avatar difference equation) will then be
integrable if all the poles of the generating function lie on the
unit circle.

\printbibliography


\end{document}